
\documentclass{egpubl}
\usepackage{eurovis2022}

%
%
\SpecialIssueSubmission    


\usepackage[T1]{fontenc}
\usepackage{dfadobe}  

\usepackage{cite}  
\BibtexOrBiblatex
\electronicVersion
\PrintedOrElectronic
\ifpdf \usepackage[pdftex]{graphicx} \pdfcompresslevel=9
\else \usepackage[dvips]{graphicx} \fi

\usepackage{egweblnk}   

\usepackage[dvipsnames]{xcolor}
\usepackage{tikz}
\usepackage{amssymb}
\usepackage{amsmath}
\usepackage{mathtools}
\usepackage{todonotes}
\usepackage{subcaption}
\usepackage[linesnumbered,noend,ruled]{algorithm2e}
\usepackage{svg}
\usepackage{enumitem}
\usepackage{appendix}
\usepackage{amssymb}
\usepackage{pifont}

\DeclareMathOperator{\troot}{root}
\DeclareMathOperator{\edges}{edges}
\DeclareMathOperator{\depth}{depth}
\DeclareMathOperator{\cost}{c}

\newtheorem{theorem}{Theorem}
\newtheorem{lemma}{Lemma}
\newtheorem{definition}{Definition}

\newcommand{\cmark}{\ding{51}}%
%


\title{Branch Decomposition-Independent Edit Distances for Merge Trees}

\author[F. Wetzels \& H. Leitte \& C. Garth]
{\parbox{\textwidth}{\centering Florian Wetzels\orcid{0000-0002-5526-7138},
        Heike Leitte\orcid{0000-0002-7112-2190}, 
        and Christoph Garth\orcid{0000-0003-1669-8549} 
        }
        \\
{\parbox{\textwidth}{\centering Technische Universität Kaiserslautern}
}
}

\begin{document}

\teaser{
\vspace{-5mm}
 \centering
 \includegraphics[width=.9\textwidth]{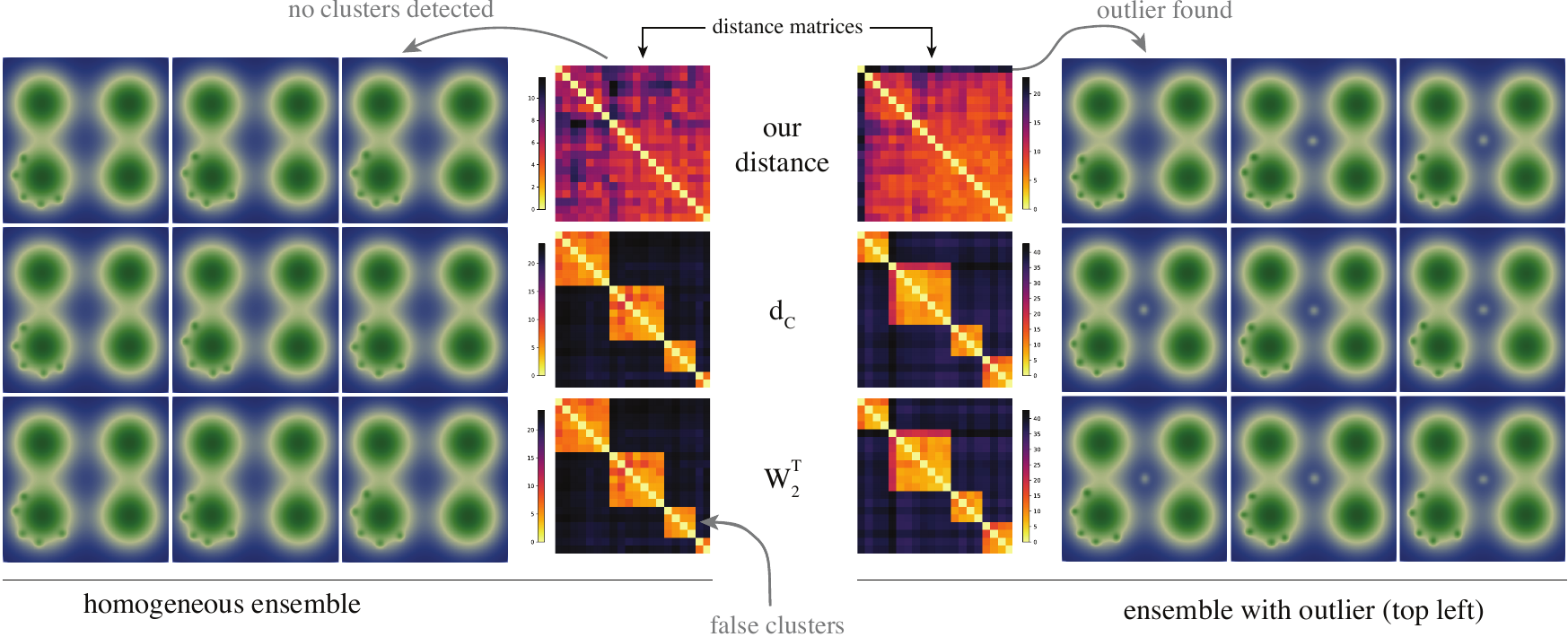}
 \caption{Two scalar field ensembles (illustrated by nine representatives each) with clustermap distance matrices using branch mappings (top) and classic edit mappings (center and bottom). In both ensembles, false clusters are identified by previous approaches, which prevent detection of an outlier in the right ensemble (which is visible but not detectable without previous knowledge of the domain, i.e.\ its existence). Our new distance does not find the false clusters and is therefore able to identify the outlier clearly in the right example.}
 \label{fig:teaser}
}

\maketitle
\begin{abstract}
Edit distances between merge trees of scalar fields have many applications in scientific visualization, such as ensemble analysis, feature tracking or symmetry detection. In this paper, we propose branch mappings, a novel approach to the construction of edit mappings for merge trees. Classic edit mappings match nodes or edges of two trees onto each other, and therefore have to either rely on branch decompositions of both trees or have to use auxiliary node properties to determine a matching. In contrast, branch mappings employ branch properties instead of node similarity information, and are independent of predetermined branch decompositions. Especially for topological features, which are typically based on branch properties, this allows a more intuitive distance measure which is also less susceptible to instabilities from small-scale perturbations. For trees with $\mathcal{O}(n)$ nodes, we describe an $\mathcal{O}(n^4)$ algorithm for computing optimal branch mappings, which is faster than the only other branch decomposition-independent method in the literature by more than a linear factor. Furthermore, we compare the results of our method on synthetic and real-world examples to demonstrate its practicality and utility.

\begin{CCSXML}
<ccs2012>
<concept>
<concept_id>10010147.10010371.10010352.10010381</concept_id>
<concept_desc>Computing methodologies~Collision detection</concept_desc>
<concept_significance>300</concept_significance>
</concept>
<concept>
<concept_id>10010583.10010588.10010559</concept_id>
<concept_desc>Hardware~Sensors and actuators</concept_desc>
<concept_significance>300</concept_significance>
</concept>
<concept>
<concept_id>10010583.10010584.10010587</concept_id>
<concept_desc>Hardware~PCB design and layout</concept_desc>
<concept_significance>100</concept_significance>
</concept>
</ccs2012>
\end{CCSXML}

\ccsdesc[300]{Computing methodologies~Collision detection}
\ccsdesc[300]{Hardware~Sensors and actuators}
\ccsdesc[100]{Hardware~PCB design and layout}

\end{abstract}  
\section{Introduction}

The study or visualization of scalar fields, either acquired through simulation or real-world experiments, is a major avenue of research within scientific visualization.
As computational power increases, so does the size and complexity of scalar fields.
Furthermore, modeling of uncertainty is rapidly becoming increasingly popular as ensemble simulation is becoming more accessible.
Due to this development, efficient measures for the similarity of scalar fields are of high interest, e.g in  clustering tasks or outlier detection in scalar field ensembles, pattern or periodicity recognition in the analysis of time-dependent scalar fields, and symmetry or self-similarity detection within complex scalar fields of high dimension.
Furthermore, the problem of finding mappings between substructures of two given objects is closely related to distance measures, as many algorithms such as edit distances compute both concurrently.
Uses for such algorithms include feature tracking in time-varying scalar fields and identifying common structures or features in ensemble data, which are of importance in many visualization scenarios.

Since working directly on a scalar field becomes computationally infeasible rather quickly, these tasks are often executed on abstract representations.
One prominent example for those is the \emph{contour tree} or a slightly simpler variant of it, the \emph{merge tree}\cite{DBLP:books/daglib/0025666}.
Both concepts originate in the fields of Topological Data Analysis, more specifically Persistent Homology\cite{DBLP:books/daglib/0025666}.
They have a variety of applications (see~\cite{ChristiansSurvey}), either used directly as a (abstract) visualization of the represented scalar field, for generating visualizations, for improving interaction, or simply as a representation amenable to other methods, e.g.\ distance functions.

In this paper, we present a novel distance measure for scalar fields and merge trees that is based on the tree edit distance.
While edit distances and related methods on topological structures have been used before to measure the distance between scalar fields (e.g.~\cite{DBLP:journals/tvcg/SridharamurthyM20,DBLP:journals/corr/abs-2107-07789}), our method generalizes previous approaches to overcome restrictions that prevent these methods from capturing certain semantic properties of the data in the distance.

More precisely, tree edit distances work on mappings between nodes or edges of trees, which are local features, whereas the topological features of interest are global ones such as branches.
Recent methods usually overcome this problem by working on branch decomposition trees or labeling the nodes with properties of their corresponding branch.
Both approaches make the distance depend on a fixed branch decomposition, typically derived by the elder rule\cite{DBLP:books/daglib/0025666}.
These are, however, not stable under small-scale perturbations, leading to semantically imprecise labels of the mapped features and therefore poor mappings.
In this paper, we present a novel method that works on mappings between branches of arbitrary decompositions, making it therefore robust against such instabilities.
We call
them \emph{branch mappings}.
Figure~\ref{fig:teaser} illustrates two example applications where branch mappings show better results than previous approaches.
Furthermore, our method generalizes the notion of tree edit distances as a whole, and we aim to provide an entry point for further research into this field and its application in scientific visualization and computational topology, for which it was developed.
In detail, our contributions are the following: \begin{itemize}
    \item We collect prior work on edit distances in topology-based visualization, and provide a novel categorization based on the desired characteristics discussed above. An overview is given in Table~\ref{tab:categories}.
    \item We describe a \emph{branch decomposition-independent} edit distance that focuses on paths and branches instead of nodes and edges; this stands in contrast to previous methods that are either branch decomposition-dependent or focus on edges.
    \item We illustrate the benefits of this distance in visualization applications by showcasing several synthetic and real-world examples.
\end{itemize}

\noindent
After discussing related work in Section~\ref{section:relatedwork}, we provide basic definitions, an introduction into tree edit distances and the categorization in Section~\ref{section:mergetrees}.
In Section~\ref{section:basics}, we define the concept of \emph{branch mappings}, which are the core of our new distance measure and study it formally in detail.
Application examples and practical comparisons with previous methods can be found in Section~\ref{section:applications}, before we conclude in Section~\ref{section:outlook} and give an outlook on future work.

\section{Related Work}
\label{section:relatedwork}

Similarity or dissimilarity measures based on topological abstractions of scalar fields, or more generally topological descriptors, have been studied in various ways and a variety of techniques have been proposed in increasing frequency.
A recent survey by Yan et al.~\cite{DBLP:journals/cgf/YanMSRNHW21} gives a good overview over these methods.
An introduction into topology based visualization methods in general can be found in the survey by Heine et al.~\cite{ChristiansSurvey}.

Out of these methods, the works of Sridharamurthy et al.~\cite{DBLP:journals/tvcg/SridharamurthyM20,DBLP:journals/corr/abs-2111-04382}, Pont el al.~\cite{DBLP:journals/corr/abs-2107-07789}, Saikia et al.~\cite{DBLP:journals/cgf/SaikiaSW14}, Loh\-fink et al.~\cite{DBLP:journals/cgf/LohfinkWLWG20} and Beketayev et al.~\cite{DBLP:books/daglib/p/BeketayevYMWH14} are most related to ours, as they are all based on edit mappings or similar mappings between contour or merge trees and their branch decomposition trees.
In Section~\ref{section:old_distances}, we discuss these methods in detail in context together with the classification.

Furthermore, our work can be seen as a generalization of tree edit distances without the context of merge trees.
An overview over various versions of tree edit distances and related problems was given by Bille in 2005~\cite{treeEditSurvey}.
Closely related techniques are the \emph{constrained edit distance}~\cite{DBLP:journals/algorithmica/Zhang96} and the \emph{tree alignment distance}~\cite{DBLP:conf/cpm/JiangWZ94} on unordered rooted trees, or the \emph{one-degree edit distance} on ordered trees~\cite{DBLP:journals/ipl/Selkow77}.
We provide a more detailed discussion of these methods in Section~\ref{section:similar_approaches}.

\paragraph*{Topology-based Similarity Measures.}

Apart from edit distances on merge trees or contour trees, other distance measures based in  topological descriptors have been used;
among these, the persistence diagram is frequently used.
Mapping persistence pairs of two scalar fields based on some metric between them is, in principle, largely equivalent to computing edit distances on branch decomposition trees; the only difference is that persistence diagrams do not take into account the nesting of the persistence pairs.
Prominent examples of such distances / mapping methods are the \emph{Wasserstein distance}~\cite{Cohen-Steiner:2010us} and the \emph{bottleneck distance}~\cite{DBLP:journals/dcg/Cohen-SteinerEH07}.
Pont et al.~\cite{DBLP:journals/corr/abs-2107-07789} explicitly compare their edit distance method to these distance measures.
An example for a more advanced method based on persistence diagrams is the work by Rieck et al.~\cite{orderedEditPersistenceHierarchies}.

Other graph-based distance measures work, for example, on Reeb Graphs~\cite{reebgrapheditdistance, reebgraphdistance,localEquivalence,categorifiedreebgraphs}, extremum graphs~\cite{DBLP:conf/apvis/NarayananTN15}, or also on merge and contour trees without specifically using an edit distance~\cite{morozov2013interleaving,DBLP:journals/tvcg/ThomasN11,DBLP:journals/tvcg/YanWMGW20}.

\paragraph*{Topology-based Feature Tracking.}

Edit mappings between topological descriptors can be used to track and visualize features in time-varying scalar fields. Here, they are typically applied to determine correspondence between topologically-characterized features e.g. across time.
The distance measures discussed above can therefore all be used for this task (e.g.~\cite{DBLP:journals/corr/abs-2107-12682} or~\cite{DBLP:journals/corr/abs-2107-07789}).
Furthermore, a further set of tracking techniques is based on topological descriptors directly without relying on edit distance~(e.g.\cite{Oesterling2017,10.1145/997817.997872,Bremer2010}).
Finally, a last class of methods relies on topological descriptors only for feature \emph{identification}, while the actual tracking is then done by other means, e.g. a measure of spatial overlap~\cite{nestedTrackingGraphs,DBLP:journals/cgf/SaikiaW17,DBLP:conf/apvis/ShuGLCLY16,DBLP:journals/tvcg/SchnorrHDKH20}.

\paragraph*{Topology-based Ensemble Visualization.}

Another area where often methods similar to edit distances are used for comparative analysis or similarity assessment is uncertainty visualization via ensembles, specifically the task of finding a representative for the topology of an ensemble of scalar fields. Lohfink et al.~\cite{DBLP:journals/cgf/LohfinkWLWG20} and Pont et al.~\cite{DBLP:journals/corr/abs-2107-07789} derive a representative tree structure for the ensemble from the induced edit mappings. Other examples for contour tree based visualizations of ensemble or uncertain data can be found, for example, in the works of Wu and Zhang~\cite{wu2013contour}, Kraus~\cite{DBLP:conf/imagapp/Kraus10} and Günther et al.~\cite{DBLP:journals/cgf/GuntherST14}. An example that does not work on contour trees but persistence diagrams can be found in~\cite{DBLP:journals/dcg/TurnerMMH14}.

\section{Merge Trees and Tree Edit Distances}
\label{section:basics}

In this section, we start with basic definitions of merge trees and notation for paths and branches in the Subsection~\ref{section:mergetrees}.
Then, we will give an overview over edit distances on general trees that are relevant for edit distances on merge trees in~\ref{section:old_distances}.
After that, we will review previous approaches for mapping based merge tree distances and categorize them in~\ref{section:similar_approaches}. An overview is given in Table~\ref{tab:categories}.

\subsection{Merge Trees}
\label{section:mergetrees}

A detailed introduction into persistent homology and a definition of merge trees can be found in~\cite{DBLP:conf/focs/EdelsbrunnerLZ00} or~\cite{DBLP:conf/ppopp/MorozovW13}.
For this paper, we restrict to an abstract model that captures merge trees of scalar fields of dimension~$>1$.

\paragraph*{Abstract Merge Trees.}
\label{section:abstract_mergetrees}

We now define the abstract model for merge trees, to which we will refer as \emph{abstract} merge trees.
These should just be the class of trees, which can be interpreted as merge trees for some domain of dimension at least~$2$.

\begin{definition}
An unordered, rooted tree $T$ of (in general) arbitrary degree (i.e.\ number of children) with node labels $f:V(T) \rightarrow \mathbb{R}$ is an \emph{Abstract Merge Tree} if the following properties hold:
\begin{itemize}
    \item The root node has degree one, $\deg( \troot (T) ) = 1$
    \item All inner nodes have a degree of at least two,\\ $\deg(v) \neq 1$ for all $v \in V(T)$ with $v \neq \troot (T)$
    \item All nodes have a larger scalar value than their parent node, $f(c) > f(p)$ for all $(c,p) \in E(T)$
\end{itemize}
(Abstract) Merge trees can have arbitrary further edge- or vertex-labels, e.g.\ persistence, volume or the actual segment of an arc.
\end{definition}

\par\medskip
Since the root of an abstract merge tree always has degree one and inner nodes do not, subtrees rooted in an inner node are not abstract merge trees themselves.
Therefore, we identify subtrees by root edges, rather than root nodes:
Formally, for a node $p$ with children $c_1,...,c_k$, the subtree rooted in $(c_1,p)$ is the ``classic'' subtree rooted in $p$ where the subtrees rooted in $c_2,...,c_k$ are removed.
Given an abstract merge tree $T$ with subtree $T'$ rooted in the edge $(c,p)$, we define $T-T'$ to be the tree $T''$, which we obtain by removing all edges and all vertices of $T'$ from $T$ except the root $p$.
If $p$ has degree two in $T$, then we also remove it from $T''$, as otherwise $p$ would be an inner node of degree one in $T''$. With this definition, it holds that
$T'$ and $T''$ are abstract merge trees as well.

From now on, we will often just use the term \emph{merge tree} and it should be clear from the context if this refers to an abstract merge tree or an actual merge tree corresponding to a given scalar field.

\paragraph*{Paths and Branches.}
\label{section:paths}

As for general graphs, a \emph{path} of length $k$ in a merge tree $T$ is a sequence of vertices $p=v_1 ... v_k \in V(T)^k$ with $(v_{i},v_{i-1}) \in E(T)$ for all $2 \leq i \leq k$ (note the strict root-to-leaf direction).
We denote the edges within a path $p$ by $\edges(p) \coloneqq \{(v_{i},v_{i-1}) \bigm| 2 \leq i \leq k\}$.
A \emph{branch} of $T$ is a path that ends in a leaf.
A branch $b=b_1 ... b_k$ is a parent branch of another branch $a=a_1 ... a_\ell$ if $a_1 = b_i$ for some $1 < i < k$.
We also say $a$ is a child branch of $b$.

A set of branches $B=\{B_1,...,B_k\}$ of a merge tree $T$ is called a \emph{Branch Decomposition} of $T$ if $\{\edges(B_1),...,\edges(B_k)\}$ is a partition of $E(T)$.
By $B(T)$ we denote the set of all branch decompositions of $T$.
Every branch decomposition $B \in B(T)$ of an abstract merge tree contains exactly one branch $b \in B$ with $\troot(T) \in b$.
We call this branch the \emph{main branch} of $B$.

The parent-child relations of the branches in a branch decomposition $B \in B(T)$ form a tree structure by themselves.
The tree build from the vertex set $V=B$ and edge set $$E = \{(a,b) \bigm| a,b \in B,\ b \text{ is a parent branch of } a\}$$ is called the \emph{Branch Decomposition Tree} (BDT) of $B$.

Let $T$ be a merge tree with $B \in B(T)$, $T'$ a subtree of $T$ rooted in $(c,p)$ and $T'' = T - T'$.
If there is a branch $b=b_1...b_k$ in $B$ that starts in $(c,p)$, i.e.\ $p=b_1,c=b_2$, then $B$ induces branch decompositions $B' \in B(T')$ and $B'' \in B(T'')$.
$B'$ and $B''$ can be obtained in the obvious way: let $a=a_1...a_\ell$ with $b_1=a_i$ be the parent branch of $b$.
We put $b$ into $B'$ and we put the branch $a_1...a_{i-1}a_{i+1}...a_\ell$ into $B''$ if $p$ has degree two and $a$ otherwise.
All other branches from $B$ are put into $B'$ if they are descendants of $b$ and into $B''$ otherwise.
We denote the branch decompositions $B'$ and $B''$ by $B[T']$ and $B[T'']$.

\subsection{Distance Measures for Trees}
\label{section:old_distances}

Various distance measures for rooted unordered trees have been proposed and investigated in the last decades.
In contrast to the ordered case, they differ significantly in computational complexity, ranging from quadratic algorithms of running time $\mathcal{O}(n \cdot m)$ to NP-hard or even MAX~SNP-hard problems.
We now provide a short overview over often-used methods to provide insight into the framework of tree edit distances in which we aim to place  our new method. Furthermore, this elucidates the relation between the various other edit distances that have been applied to merge trees.

The classic edit distance for rooted, ordered, node-labeled trees was introduced by Tai~\cite{DBLP:journals/jacm/Tai79} and is defined to be the cost-optimal sequence of the following edit operations: 
\begin{itemize}
    \item \textbf{node-deletion}, where all children of the deleted node are appended to its parent,
    \item \textbf{node-insertion}, where a new child is inserted for some node and a subset of its children is made the children of the new node, and
    \item \textbf{node-relabel}, where the label of a node is changed.
\end{itemize}
The cost of the single operations is defined by some cost function $c: ((L \cup \{\bot\}) \times (L \cup \{\bot\})) \rightarrow \mathbb{R}_{\geq 0}$ where~$L$ is the set of node labels and~$\bot$ stands for the null node needed to represent deletions and insertions.
If the basic cost function defines a metric on the set $L \cup \{\bot\}$, then the edit distance between two trees defines a metric on the set of all rooted, unordered, $L$-labeled trees\cite{DBLP:journals/jacm/Tai79}.

Edit sequences induce a mapping between the vertices of the trees, called \emph{edit mappings} or \emph{Tai mappings}~\cite{DBLP:journals/jacm/Tai79}.
These are ancestor-preserving, one-to-one mappings between the two vertex sets.
Their cost is defined as the sum of the relabel costs for all pairs in the relation representing the mapping together with the sum of all insertion/deletion costs for all vertices from both trees that are not present in the relation.
These mappings are induced as all nodes in the first tree that are not changed within an edit sequence as well as those relabeled can be matched to one unique node in the second tree.
Due to this correspondence between edit sequences and mappings, it can be shown that the cost of the optimal sequence and the optimal mapping is equal.
Typical algorithms for edit distances therefore compute the induced mappings and their cost, not the actual edit sequence (which can of course be derived from it).

In contrast to the ordered case, where the general edit distance can be computed in cubic time~\cite{DBLP:conf/icalp/DemaineMRW07,DBLP:conf/icalp/DudekG18}, the problem of computing the edit distance for unordered trees (which we will call $\mathbf{d_e}$) is NP-hard~\cite{DBLP:journals/ipl/ZhangSS92} and even MAX~SNP-hard~\cite{DBLP:journals/ipl/ZhangJ94}, and therefore not arbitrarily close to approximate by a tractable algorithm unless P=NP.
This has led to the development of many simplified versions of the edit distance.
Two very popular approaches are tree alignments~\cite{DBLP:conf/cpm/JiangWZ94} and the constrained edit distance~\cite{DBLP:journals/algorithmica/Zhang96}.

An \emph{alignment} of two
trees is a supertree of both that is obtained by inserting in both trees
until they are
isomorphic.
An alignment induces a mapping and corresponding costs in a similar fashion as before.
The cost of the optimal alignment is equivalent to an edit distance where all insertions have to occur before all deletions.
We denote this distance in the case for general trees as $\mathbf{d_a}$.

An edit mapping between two trees is called
\emph{constrained},
if it also fulfills the property that disjoint subtrees are strictly mapped to disjoint subtrees~\cite{DBLP:journals/algorithmica/Zhang96}.
The \emph{constrained edit distance} is defined to be the cost of an optimal constrained edit mapping between two trees.
We denote this distance in the case for general trees as $\mathbf{d_c}$.

Pont et al.~\cite{DBLP:journals/corr/abs-2107-07789} further restricted the constrained edit mappings specifically for the usecase of BDTs of merge trees. They introduced the constraint that if a node is deleted, i.e.\ mapped to null, the entire subtree rooted in this node has to be deleted, too. This distance is an unordered version of the 1-degree edit distance which has been proposed by Stanley Selkow~\cite{DBLP:journals/ipl/Selkow77} for ordered trees. However, we do not provide a formal proof for this equivalence. We denote the distance on unordered trees by $\mathbf{d_1}$.

All three versions discussed can be computed more efficiently than the general edit distance. The alignment distance is still NP-hard~\cite{DBLP:conf/cpm/JiangWZ94} for trees of arbitrary degree, but can be computed in quadratic time if the trees have bounded degree~\cite{DBLP:conf/cpm/JiangWZ94}, which is a reasonable assumption for merge trees (however, not true in general). Constrained and 1-degree edit distances are computable in polynomial time for both bounded degree trees and arbitrary degree trees, specifically quadratic time for bounded degree~\cite{DBLP:journals/algorithmica/Zhang96,DBLP:journals/corr/abs-2107-07789}, and in time $\mathcal{O}(n_1 \cdot n_2 \cdot (\deg_1+\deg_2) \cdot \log(\deg_1+\deg_2)$ for arbitrary degree. Furthermore, all three distances form a hierarchy in terms of how restrictive they are for the allowed mappings and therefore also their search spaces form a hierarchy of inclusions. For the four distances and arbitrary trees $T_1,T_2$ the following holds if they are applied using the same base metric (derived from a clear hierarchy of the recursions in~\cite{DBLP:conf/cpm/JiangWZ94,DBLP:journals/algorithmica/Zhang96,DBLP:journals/ipl/Selkow77}):
\centerline{$ d_e(T_1,T_2) \leq d_a(T_1,T_2) \leq d_c(T_1,T_2) \leq d_1(T_1,T_2). $}

\subsection{Edit Distances for Contour Trees and Merge Trees}
\label{section:similar_approaches}

Distance metrics between contour trees or merge trees have been used as similarity measures for scalar fields in many varieties, especially through the use of tree edit distances and their corresponding edit mappings.
We will now go through those previous approaches that are most similar to ours, which are those that either explicitly compute structure-preserving mappings between certain parts of the trees or at least internally use them to compute the distance based on a base metric for the mapped objects.

\begin{table}[]
\centering
\scalebox{0.9}{
\begin{tabular}{l|cccccccc}
     & Edit Dist.\    &  BDI   & Global Prop. & Time  \\ \hline
  $d_C$ \cite{DBLP:journals/tvcg/SridharamurthyM20} & \cmark\ ($d_c$) &  -- &  \cmark  &  $\Theta(n^2)$\\
  $W_2^T$ \cite{DBLP:journals/corr/abs-2107-07789} & \cmark\ ($d_1$) &  -- &  \cmark  &  $\Theta(n^2)$\\
  $d_S$ \cite{DBLP:journals/cgf/SaikiaSW14} & \cmark\ ($d_1$) &  -- &  \cmark  &  $\Theta(n^2)$\\
  $d_A$ \cite{DBLP:journals/cgf/LohfinkWLWG20} & \cmark\ ($d_a$) &  \cmark &  --  &  $\Theta(n^2)$\\
  $d_M$ \cite{DBLP:books/daglib/p/BeketayevYMWH14} & -- &  \cmark &  \cmark  &  $\omega(n^5)$\\
  $d_C$ (This paper) & \cmark\ ($d_B$) &  \cmark &  \cmark  &  $\mathcal{O}(n^4)$\\
\end{tabular}
}
\caption{The categorization of mapping based merge tree distances. A distance should be a true edit distance (Edit Distance), should be branch decomposition-independent (BDI) and should use global properties (Global Prop.). The runtimes refer to bounded degree.}
\label{tab:categories}
\end{table}

\paragraph*{Categorization.}

To give a better intuition for the differences between these methods and our method, we categorize them in the following by two further properties (apart from the distinction whether they are explicitly edit distances or not).
The first one is the type of the input objects. These can be actual contour trees or merge trees or branch decompositions of these.
Then, we distinguish the type of labels or properties of these features that are used by the base metric to compute distances of mapped pairs.
Here we distinguish local and global properties as labels of the mapped objects.
Examples could be edge persistence for a local property and branch persistence for a global property.

Sridharamurthy et al.~\cite{DBLP:journals/tvcg/SridharamurthyM20} used the constrained edit distance $\mathbf{d_c}$ to measure the distance between merge trees.
As a base metric for the merge tree nodes, they used $L_\infty$ and overhang costs between the corresponding branches.
In~\cite{DBLP:journals/corr/abs-2111-04382} they adapted this distance to also work on subtrees (similar to eBDGs in~\cite{DBLP:journals/cgf/SaikiaSW14}).
Pont et al.~\cite{DBLP:journals/corr/abs-2107-07789} used a very similar distance to compute geodesics between merge trees and barycenter merge trees as a representative of an ensemble of scalar fields.
They applied the 1-degree edit distance $\mathbf{d_1}$ to \emph{unordered} (see Section~\ref{section:searchspaces} and App.~A in the supp. material) branch decomposition trees using the Wasserstein metric to compare branches.
Another closely related approach is the one by Saikia et al.~\cite{DBLP:journals/cgf/SaikiaSW14}.
They also compute the one-degree edit distance $\mathbf{d_1}$ between branch decomposition graphs of two merge trees to find self-similarities within the scalar fields.
In contrast to Pont et al.\ they considered \emph{ordered} BDTs (see App.~A in the supp. material).
As for the categorization, these three methods fall into the same category: they use fixed branch decompositions as an input (Pont et al.\ and Saikia et al.\ even do so explicitly by working on BDTs) while using global properties (i.e.\ persistence, birth, death, etc.\ of branches) as labels.
We will refer to these distances as $d_C$, $W_2^T$ and $d_S$ and by this mean the mapping method independent on the used base metric (which means we do use the notation $W_2^T$ also for distances that do not use the Wasserstein metric for single branches).
We should note that Pont et al.\ use a normalization step as preprocessing to adapt the edit distance for geodesics and barycenters that modifies the branch labels according to the branch decomposition.
To obtain a meaningful comparison, we only consider their metric without the normalization step, as it then fits better into the hierarchy of different edit distances.

Lohfink et al.~\cite{DBLP:journals/cgf/LohfinkWLWG20} used tree alignments to represent the topology of a scalar field ensemble and to obtain a joint layout of all the contour trees in the ensemble.
The underlying distance is the tree alignment distance $\mathbf{d_a}$.
In contrast to labeling the nodes with their corresponding branches and using a base metric on branches, they labeled the nodes with their unique parent edge and used euclidean distances for different arc properties as the base metric.
This approach gets actual contour trees as an input and uses local properties (edge properties) as labels.
We will refer to this distance as $d_A$, again independent of the used base metric.

A branch decomposition-independent approach that does not fit into the concept of edit distances was introduced by Beketayev et al.~\cite{DBLP:books/daglib/p/BeketayevYMWH14}~($d_M$).
They also compute a cost-optimal mapping between two branch decompositions of two given merge trees.
However, in contrast to the edit distances, the total cost of such a mapping is not the sum of the mapped branches, but the value of the highest cost pair in the mapping.
Furthermore, they also compute the optimal pair of branch decompositions that minimizes the costs.
This falls into the category of methods that get contour trees as inputs (although their core method compares branch decompositions, they actually find the optimal one) and also use global properties (branch persistence etc.) for the base metric.

\label{section:advantages_disadvantes}

\paragraph*{Advantages of using a true edit distance.}

Almost all approaches from the last section use an actual edit distance. A very useful property of edit distances is that they always come with a mapping of different features of the compared objects. In our case they induce a mapping between the nodes, edges or branches of the two input trees. In fact, given a cost function for matching two such features or deleting them, the edit distance value is equal to sum of all mapped or deleted features in the optimal mapping. We usually compute the distance and the mapping simultaneously.

An outlier in the list above is the method of Beketayev et al.~\cite{DBLP:books/daglib/p/BeketayevYMWH14}. They also consider underlying mappings, but since the value of the mapping is only the value of the worst match or largest deletion, the correspondence between mappings and distances is weaker. Moreover, their distance does not correspond to the cost of an edit sequence between the two input trees.

\paragraph*{Using local properties only.}

The alignment distance used in~\cite{DBLP:journals/cgf/LohfinkWLWG20} uses arc properties for the base metric. While  independent of a specific branch decomposition, this approach has a substantial downside, which we will discuss in the following example. If one scalar field has two nested maxima of similar scalar value with the saddle connecting them close to the extreme points, and the other one only has one of the maxima, then the distance between the two trees should be small, as there is only a small split of the maximum from the second tree, which can also be seen as an insertion of one very small maximum branch. However, the persistence of the two arcs corresponding to the leaf nodes are both very small whereas the persistence of the arc leading to the single maximum in the other tree is large. Therefore the edit costs are high, contradicting the (intuitively) desired behaviour. Figure~\ref{fig:badexamplealignment} shows two example trees where this problem appears. A practical example could be two trees where we apply a simplification by persistence to both of them and in one tree the second maximum stays whereas it is removed in the other tree.

\begin{figure}
\centering
\begin{subfigure}[t]{0.49\linewidth}
    \centering
    \resizebox{0.95\linewidth}{!}{
    \begin{tikzpicture}[xscale=0.55,yscale=0.4]
    \node[draw=none,fill=none,circle] at (-4, -3) (dummy) {};
    \node[draw=none,fill=none,circle] at (11, -3) (dummy) {};
    \node[draw=none,fill=none,circle] at (-4, 16) (dummy) {};
    \node[draw=none,fill=none,circle] at (11, 16) (dummy) {};
    
    \node[draw,circle,fill=gray!100,minimum width=0.7cm] at (0, 0) (root_1) {0};
    \node[draw,circle,fill=gray!100,minimum width=0.7cm] at (0, 3) (s1_1) {3};
    \node[draw,circle,fill=red!80,minimum width=0.7cm] at (-2, 12) (m1_1) {12};
    \node[draw,circle,fill=red!80,minimum width=0.7cm] at (2, 12) (m2_1) {12};
    \node[draw,circle,fill=gray!100,minimum width=0.7cm] at (1.333, 9) (s2_1) {9};
    \node[draw,circle,fill=red!80,minimum width=0.7cm] at (0.3, 11) (m3_1) {11};
    \draw[gray,very thick] (root_1) -- (s1_1);
    \draw[gray,very thick] (s1_1) -- (m1_1);
    \draw[gray,very thick] (s1_1) -- (s2_1);
    \draw[gray,very thick] (s2_1) -- (m2_1);
    \draw[gray,very thick] (s2_1) -- (m3_1);
    
    \node[draw,circle,fill=gray!100,minimum width=0.7cm] at (0+7, 0) (root_2) {0};
    \node[draw,circle,fill=gray!100,minimum width=0.7cm] at (0+7, 3) (s1_2) {3};
    \node[draw,circle,fill=red!80,minimum width=0.7cm] at (-2+7, 12) (m1_2) {12};
    \node[draw,circle,fill=red!80,minimum width=0.7cm] at (2+7, 12) (m2_2) {12};
    \draw[gray,very thick] (root_2) -- (s1_2);
    \draw[gray,very thick] (s1_2) -- (m1_2);
    \draw[gray,very thick] (s1_2) -- (m2_2);
    
    \draw[ultra thick,-,dotted,green] (root_1) -- (root_2);
    \draw[ultra thick,-,dotted,green] (s1_1) -- (s1_2);
    \draw[ultra thick,-,dotted,green] (m2_1) to[bend left] (m1_2);
    \draw[ultra thick,-,dotted,green] (m1_1) to[bend left] (m2_2);
    
    \end{tikzpicture}
    }
    \vspace{-10pt}
    \caption{}
    \label{fig:badexamplealignment}
\end{subfigure}
\begin{subfigure}[t]{0.49\linewidth}
    \centering
    \resizebox{0.95\linewidth}{!}{
    \begin{tikzpicture}[xscale=0.55,yscale=0.4]
    \node[draw=none,fill=none,circle] at (-4, -3) (dummy) {};
    \node[draw=none,fill=none,circle] at (11, -3) (dummy) {};
    \node[draw=none,fill=none,circle] at (-4, 16) (dummy) {};
    \node[draw=none,fill=none,circle] at (11, 16) (dummy) {};
    
    \node[draw,circle,fill=gray!100,minimum width=0.7cm] at (-2, 0) (root_1) {0};
    \node[draw,circle,fill=gray!100,minimum width=0.7cm] at (-2, 4) (s1_1) {4};
    \node[draw,circle,fill=red!80,minimum width=0.7cm] at (-2, 13) (m1_1) {13};
    \node[draw,circle,fill=red!80,minimum width=0.7cm] at (2, 12) (m2_1) {12};
    \draw[gray,very thick] (root_1) -- (s1_1);
    \draw[gray,very thick] (s1_1) -- (m1_1);
    \node[draw,circle,fill=gray!100] at (2, 6) (s2_1) {};
    \node[] at (0.5, 7) (s2'_1) {...};
    \draw[gray,very thick] (s2_1) -- (s2'_1);
    \draw[gray,very thick] (s1_1) .. controls (2,4) .. (s2_1);
    \draw[gray,very thick] (s2_1) -- (m2_1);
    \node[draw,circle,fill=gray!100] at (2, 7.5) (s3_1) {};
    \node[] at (0.5, 8.5) (s3'_1) {...};
    \draw[gray,very thick] (s3_1) -- (s3'_1);
    \node[draw,circle,fill=gray!100] at (2, 9) (s4_1) {};
    \node[] at (0.5, 10) (s4'_1) {...};
    \draw[gray,very thick] (s4_1) -- (s4'_1);
    
    \node[draw,circle,fill=gray!100,minimum width=0.7cm] at (2+7, 0) (root_2) {0};
    \node[draw,circle,fill=gray!100,minimum width=0.7cm] at (2+7, 4) (s1_2) {4};
    \node[draw,circle,fill=red!80,minimum width=0.7cm] at (-2+7, 12) (m1_2) {12};
    \node[draw,circle,fill=red!80,minimum width=0.7cm] at (2+7, 13) (m2_2) {13};
    \draw[gray,very thick] (root_2) -- (s1_2);
    \draw[gray,very thick] (s1_2) .. controls (-2+7,4) .. (m1_2);
    \draw[gray,very thick] (s1_2) -- (m2_2);
    \node[draw,circle,fill=gray!100] at (2+7, 6) (s2_2) {};
    \node[] at (0.5+7, 7) (s2'_2) {...};
    \draw[gray,very thick] (s2_2) -- (s2'_2);
    \node[draw,circle,fill=gray!100] at (2+7, 7.5) (s3_2) {};
    \node[] at (0.5+7, 8.5) (s3'_2) {...};
    \draw[gray,very thick] (s3_2) -- (s3'_2);
    \node[draw,circle,fill=gray!100] at (2+7, 9) (s4_2) {};
    \node[] at (0.5+7, 10) (s4'_2) {...};
    \draw[gray,very thick] (s4_2) -- (s4'_2);
    
    \draw[ultra thick,-,dashed,green] (root_1) to[bend right] (root_2);
    \draw[ultra thick,-,dashed,green] (s1_1) to[bend right] (s1_2);
    \draw[ultra thick,-,dashed,green] (m1_1) to[bend left] (m1_2);
    \draw[ultra thick,-,dashed,green] (m2_1) to[bend left] (m2_2);
    
    \draw[ultra thick,-,dashed,green] (s2_1) to[bend right] (s2_2);
    \draw[ultra thick,-,dashed,green] (s3_1) to[bend right] (s3_2);
    \draw[ultra thick,-,dashed,green] (s4_1) to[bend right] (s4_2);
    
    \draw[ultra thick,-,dotted,violet] (root_1) to[out=-20,in=-160] (root_2);
    \draw[ultra thick,-,dotted,violet] (s1_1) to[out=-20,in=-160] (s1_2);
    \draw[ultra thick,-,dotted,violet] (m2_1) to[bend left] (m1_2);
    \draw[ultra thick,-,dotted,violet] (m1_1) to[bend left] (m2_2);
    
    \end{tikzpicture}
    }
    \vspace{-10pt}
    \caption{}
    \label{fig:badexampleBD}
\end{subfigure}
\vspace{-5pt}
\caption{Examples for matching problems. For local properties, in (a) we see two merge trees with the optimal alignment matching in green. Intuitively, the cost should be~$2$, as only one arc of persistence~$2$ is removed. However, the actual cost is $8$ since mapping the right maximum in the left tree (persistence $3$) to the left maximum in the right tree (persistence $9$) has cost of $6$. For a fixed branch decompostion, in (b) two merge trees with two different branch decompositions are shown. The two best matching options are shown in violet and green. Both yield non-optimal results in comparison to the desired distance.}
\end{figure}
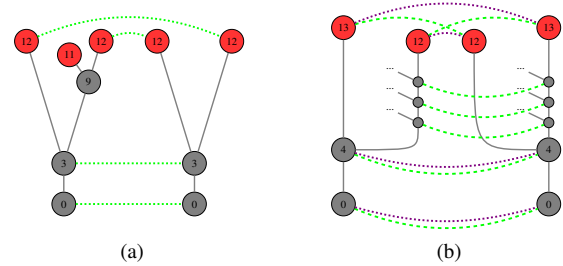

\paragraph*{Using fixed branch decompositions.}

The problem stated in the last section can be fixed by using branch persistences instead of arc persistences. Most methods that we listed above make use of this approach. However, such a metric would depend on a fixed branch decomposition, but these are not unique and two badly chosen branch decompositions of two very similar trees can lead to huge distances. Usually, a very specific branch decomposition is used, namely the one derived by the elder rule, preferring the most persistent branches over small ones. But also this specific decomposition can have crucial instabilities which we will again see in an example. Consider a merge tree with two very persistent branches, one of which is feature-rich and one so simple that it does not have any child branches. Now consider another tree with the same properties. Then, for a low distance, we need to match the feature-rich branches in both trees onto each other as well as the simple branches. If, however, the two branches have similar persistence, the order of the branches can flip between the two trees. Then a good matching between the two branch decompositions has only the following options: either map the child branches of the two feature-rich branches onto each other and take the cost of the persistence difference of the main branches or match the main branches according to their persistence and take the cost of not matching the smaller branches. Figure~\ref{fig:badexampleBD} illustrates this example.

This problem can be illustrated on the following practical example. Consider a
scalar field ensemble consisting of 20 scalar fields as illustrated in Figure~\ref{fig:teaser} (all members shown in Figure~2, supp. material). All fields have four large peaks of similar size with five smaller peaks arising from one of them. The height, size and exact position of all peaks are chosen randomly within small ranges.
This leads to the following scenario: Which of the four main peaks (the one with smaller peaks or one of those without) represents the global maximum differs across the ensemble, and hence also the main branch of the branch decomposition derived by the elder rule differs. All fields are otherwise highly similar, as are their merge trees.
There are no significant differences between the fields apart from small-scale, noise-like instabilities.
However, the merge trees resemble those in Figure~\ref{fig:badexampleBD} and therefore cannot be mapped properly by branch decomposition-dependent distances.

On this dataset, we observe that $d_C$~\cite{DBLP:journals/tvcg/SridharamurthyM20} and $W_2^T$~\cite{DBLP:journals/corr/abs-2107-07789}, using the Wasserstein base metric proposed in~\cite{DBLP:journals/corr/abs-2107-07789}, distinguish four clusters within the members, which does not convey an intuitive understanding of the data.
A clustermap visualization of the edit distances can be seen in Figure~\ref{fig:teaser}.
In contrast to that, our new distance function, which we define in the next section, does not find these false clusters and therefore resembles an intuitive understanding of the dataset much closer.

\section{Branch Mappings}
\label{section:branch_mappings}

All of the methods listed above have one of the mentioned shortcomings. They either use branch properties to overcome the problem of local properties at the expense of the need of a fixed branch decomposition or they are independent of branch decompositions but only use local properties or they combine the positive aspects of both but do not offer the benefits of a true edit distance.

Here, we present a new approach that can be seen as a combination of the ones from~\cite{DBLP:journals/corr/abs-2107-07789}, \cite{DBLP:journals/tvcg/SridharamurthyM20} and~\cite{DBLP:books/daglib/p/BeketayevYMWH14}. We modify the constrained tree edit distance so that it maps branches of two given merge trees onto each other and determine the total cost of such a mapping by taking the sum of distances of all mapped structures while computing the optimal branch decompositions for these costs at the same time. By staying within the notion of edit distances, we keep their global characteristics (in contrast to the more local concept of taking the cost of the worst matched feature pair) and the induced mappings, while still doing the optimization over all possible branch decompositions and mapping the actual features of interest (namely branches, not nodes or edges).

The distance we will introduce is based on the new concept of \emph{branch mappings} that are similar to edit mappings and can be seen as an adaptation of those to branches of trees instead of nodes. Branch mappings can be applied to merge trees with arbitrary branch decompositions as well as merge trees with a fixed, given branch decomposition. They can be computed in $\mathcal{O}(n^2 \cdot m^2)$, which is faster than the only other branch decomposition-independent method (from~\cite{DBLP:books/daglib/p/BeketayevYMWH14}) by more than a linear factor. We achieve this by intertwining the dynamic programming that iterates all branch decompositions with the dynamic programming for the classic edit distance. However, this method of course also has its downsides. We will show that the branch mapping distance, while being a metric on the domain of branch decompositions of merge trees, it is not a metric on the set of merge trees. Furthermore, the computation cost is  higher than for the typical tractable edit distances, which usually can be computed in around $\mathcal{O}(n \cdot m)$.

We start with the core concept underlying our distance, \emph{branch mappings}.
Similar to edit mappings, constrained edit mappings, or alignments, which are mappings between the vertex sets of two given trees with different restrictions to keep certain structures, branch mappings map branches of one tree to branches of another tree while keeping the ancestor relations of the mapped branches.

\begin{definition}
\label{definiton:branch_mappings}
Given two abstract merge trees $T_1,T_2$, a branch mapping between $T_1$ and $T_2$ is a mapping $M \subseteq B_1 \times B_2$ with branch decompositions $B_1 \in B(T_1),B_2 \in B(T_2)$ such that
\begin{enumerate}
    \item If $(a,b),(a',b') \in M$ then $a=a' \Leftrightarrow b=b'$,
    \item $(m,m') \in M$, where $m$ and $m'$ are the main branches of $B_1$ and $B_2$,
    \item If $a$ and $b$ are parent branches of $a'$, $b'$ and $(a',b') \in M$, then $(a,b) \in M$,
    \item If $(a,b),(a',b') \in M$ and $a_1$ is a descendant of $a'_1$ then $b_1$ is a descendant of $b'_1$,
\end{enumerate}
where $a=a_1...a_k,a'=a'_1...a'_{k'} \in B_1$ and $b=b_1...b_\ell,b'=b'_1...b'_{\ell'} \in B_2$ are arbitrary branches of the two trees.
\end{definition}

\noindent
Similar to branch decompositions, we define induced mappings as follows: let $T_1',T_2'$ be subtrees of $T_1,T_2$, $B_1 \in B(T_1),B_2 \in B(T_2)$, $T_1'' = T_1-T_1',T_2'' = T_2-T_2'$ and $M$ be a branch mapping between $B_1$ and $B_2$.
Then we define $M[B_1']$ to be the map
$$\{(a,b) \in M \bigm| a \text{ is a descendant of } m'\}$$
where $B_1'=B_1[T_1']$ and $m'$ is the main branch of $B_1'$.
For $B_1'' = B_1[T_1'']$ we define $M[B_1'']$ to be the map
$$\{(a,b) \in M \bigm| a \notin M[B_1'] \text{ and } a \neq p\} \cup \{(p'',M(p)\}$$
where $B_1'$ and $m'$ are defined as before, $p$ is the parent branch of $m'$ in $B_1$ and $p''$ the corresponding branch in $B_1''$.
We define $M[B_2']$ and $M[B_2'']$ analogously.

\subsection{Branch Mapping Distance}
\label{section:branch_mapping_dist}

In analogy to basing tree edit distances on their corresponding mappings, we now define a distance function on merge trees and their branch decompositions.
For ease of notation, we define for a branch mapping $M$ the set of all edit operations as
$$\overline{M} = M \cup \{(a,\bot) \bigm| \nexists (a,b) \in M\} \cup \{(\bot,b) \bigm| \nexists (a,b) \in M\}.$$
Given a cost function on branches of two merge trees $T_1,T_2$, 
$$ \cost : ((\bigcup_{B_1 \in B(T_1)} B_1 \cup \{\bot\}) \times (\bigcup_{B_2 \in B(T_2)} B_2 \cup \{\bot\})) \rightarrow \mathbb{R},$$
we define the cost of a branch mapping $M$ to be
$$\cost(M) = \sum_{(a,b) \in \overline{M}} \cost(a,b).$$
For a branch decomposition $B \in B(T)$ of some merge tree, we define an empty mapping $M_\bot(B) = \{(b,\bot) \bigm| b \in B\}$ to be the mapping that corresponds to deleting the whole tree $T$. We can now define the distance functions based on the branch mappings.

\begin{definition}
Given two abstract merge trees $T_1,T_2$ with branch decompositions $B_1 \in B(T_1), B_2 \in B(T_2)$, define the distance function
$$ d_B(B_1,B_2) = \min \{\cost(M) \bigm| M \text{ is a branch mapping between } B_1,B_2\}. $$

\noindent
Analogously, for two abstract merge trees $T_1,T_2$, we define
$$ d_B(T_1,T_2) = \min \{d_B(B_1,B_2) \bigm| B_1 \in B(T_1), B_2 \in B(T_2)\}. $$

\noindent
Furthermore, we also include empty trees into the definition of the branch mapping distance:
\begin{eqnarray*}
 d_B(B_1,\bot) &=& \cost(M_\bot(B_1)),d_B(\bot,B_2) = \cost(M_\bot(B_2)),\\
 d_B(T_1,\bot) &=& \min \{d_B(B_1,\bot) \bigm| B_1 \in B(T_1)\}
\end{eqnarray*}
and $d_B(\bot,T_2)$ symmetrically.
\end{definition}

For practical purposes, we want to restrict the base metric for single branches or the branch labels to semantically meaningful ones.
In our case, this means that those properties captured by the labels or costs are actually \emph{branch} properties. Formally, we want a cost function $\cost(a_1...a_k,b_1...b_\ell)$ between two branches to only depend on the start and end points of the branches, i.e.\ $a_1$, $a_k$, $b_1$ and $b_\ell$. The number and position of child branches should be irrelevant. We then call it a \emph{pure} branch distance.

\subsection{Important Properties}
\label{section:branch_mapping_properties}

\paragraph*{Metric properties of the branch mapping distance}

We now show that $d_B$ is a metric on the set of all branch decompositions of abstract merge trees but not on the set of all abstract merge trees.

\begin{theorem}
\label{theorem:metric}
$d_B$ is a metric on the set 
$$\{(T,B) \bigm| T \text{ is an abstract merge tree, } B \in B(T)\},$$
as long as the cost function $c$ on the branch labels is a metric.
\end{theorem}
\begin{proof}
See supplementary material, App.~B.
\end{proof}

\begin{theorem}
$d_B$ is not a metric on the set 
$$\{T \bigm| T \text{ is an abstract merge tree}\}.$$
\end{theorem}
\begin{proof}
If we consider the distance function $d_B$ applied to merge trees with arbitrary branch decompositions, the triangle inequality does (in contrast to the previous result) no longer hold.
The previous argument is not possible in this case since the optimal mapping between $T_1,T_2$ and $T_2,T_3$ can use a different branch decompositions which leads to mapping costs that are not possible with a single branch mapping.
A counterexample can be found in Figure~\ref{fig:notmetric}.
\end{proof}

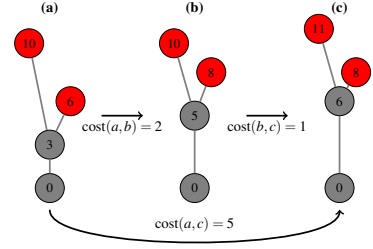
\begin{figure}
    \centering
    \scalebox{0.55}{
    \begin{tikzpicture}[xscale=0.5,yscale=0.35]
    
    \node[draw,circle,fill=gray!100,minimum width=0.7cm] at (0, 0) (root_1) {0};
    \node[draw,circle,fill=gray!100,minimum width=0.7cm] at (0, 3) (s1_1) {3};
    \node[draw,circle,fill=red!100,minimum width=0.7cm] at (-1, 10) (m1_1) {10};
    \node[draw,circle,fill=red!100,minimum width=0.7cm] at (1, 6) (m2_1) {6};
    \draw[gray,very thick] (root_1) -- (s1_1);
    \draw[gray,very thick] (s1_1) -- (m1_1);
    \draw[gray,very thick] (s1_1) -- (m2_1);
    
    \node[draw,circle,fill=gray!100,minimum width=0.7cm] at (0+7, 0) (root_2) {0};
    \node[draw,circle,fill=gray!100,minimum width=0.7cm] at (0+7, 5) (s1_2) {5};
    \node[draw,circle,fill=red!100,minimum width=0.7cm] at (-1+7, 10) (m1_2) {10};
    \node[draw,circle,fill=red!100,minimum width=0.7cm] at (0.8+7, 8) (m2_2) {8};
    \draw[gray,very thick] (root_2) -- (s1_2);
    \draw[gray,very thick] (s1_2) -- (m1_2);
    \draw[gray,very thick] (s1_2) -- (m2_2);
    
    \node[draw,circle,fill=gray!100,minimum width=0.7cm] at (0+14, 0) (root_3) {0};
    \node[draw,circle,fill=gray!100,minimum width=0.7cm] at (0+14, 6) (s1_3) {6};
    \node[draw,circle,fill=red!100,minimum width=0.7cm] at (-1+14, 11) (m1_3) {11};
    \node[draw,circle,fill=red!100,minimum width=0.7cm] at (0.8+14, 8) (m2_3) {8};
    \draw[gray,very thick] (root_3) -- (s1_3);
    \draw[gray,very thick] (s1_3) -- (m1_3);
    \draw[gray,very thick] (s1_3) -- (m2_3);
    
    \draw[very thick,->] (2.5,5) -- (4.5,5);
    \draw[very thick,->] (9.5,5) -- (11.5,5);
    \draw[very thick,->] (0,-1.5) .. controls (-0.1,-4.5) and (14.1,-4.5) .. (14,-1.5);
    \node at (3.5,4.2) {\large $\text{cost}(a,b)=2$};
    \node at (10.5,4.2) {\large $\text{cost}(b,c)=1$};
    \node at (7,-2.5) {\large $\text{cost}(a,c)=5$};
    
    \node at (0,12.4) {\textbf{\large (a)}};
    \node at (7,12.4) {\textbf{\large (b)}};
    \node at (14,12.4) {\textbf{\large (c)}};
    
    \end{tikzpicture}
    }
    \vspace{-5pt}
    \caption{Branch mapping distances for three merge trees using $|b_1-b_2|+|p_1-p_2|$ as base metric for branches with birth values $b_1,b_2$ and persistences $p_1,p_2$. The optimal map between (a) and (b) uses branches $(0,10)$, $(3,6)$ and $(5,8)$, whereas the one for (b) and (c) uses $(0,8)$, $(5,10)$ and $(6,11)$. These maps yield branch mapping distances of $2$ and $1$. The optimal map between (a) and (c) uses branches $(0,10)$, $(3,6)$, $(0,11)$ and $(6,8)$ which yields a total distance of $5$. This violates the metric property as the direct distance between (a) and (c) is greater than the way over (b).}
    \label{fig:notmetric}
\end{figure}

\paragraph*{Recursive Structure}

A core property of the branch mapping distance is the following: as long as the base metric is a pure branch metric, we can decompose a branch mapping into the different subtrees of the given branch decompositions and if we do so, we can do it in the same way for the cost of the mapping.

\begin{lemma} Given two merge trees $T_1,T_2$ with branch decompositions $B_1,B_2$ and a branch mapping $M \subseteq B_1 \times B_2$, for any subtrees $T_1'$ and $T_2'$ of $T_1$ and $T_2$ for which $B_1[T_1']$ and $B_2[T_2']$ exist, and any pure branch distance $c$, it holds that:
\begin{itemize}
    \item $\cost(M) = \cost(M[B[T_1']]) + \cost(M[B[T_1-T_1']])$,
    \item $\cost(M) = \cost(M[B[T_2']]) + \cost(M[B[T_2-T_2']])$.
\end{itemize}
\end{lemma}

\begin{proof} Follows directly from the pureness of $\cost$. \end{proof}

\begin{lemma} Given two merge trees $T_1,T_2$ with roots $v_1,u_1$, let $v_2,u_2$ be the unique children of the two roots and let those have children $v_3,v_4$ and $u_3,u_4$. Let $T_1'$ be the subtree rooted in $(v_2,v_3)$, $T_1''$ rooted in $(v_2,v_4)$, $T_2'$ rooted in $(u_2,u_3)$ and $T_2''$ in $(u_2,u_4)$. Let $M$ be an optimal branch mapping for $T_1$ and $T_2$. Then, for the optimal cost of $M$ it holds that:
\begin{itemize}
    \item $d_B(T_1,T_2) = d_B(T_1',\bot) + d_B(T_1-T_1',T_2)$ or
    \item $d_B(T_1,T_2) = d_B(\bot,T_2') + d_B(T_1,T_2-T_2')$ or
    \item $d_B(T_1,T_2) = d_B(T_1'',\bot) + d_B(T_1-T_1'',T_2)$ or
    \item $d_B(T_1,T_2) = d_B(\bot,T_2'') + d_B(T_1,T_2-T_2'')$ or
    \item $d_B(T_1,T_2) = d_B(T_1',T_2') + d_B(T_1-T_1',T_2-T_2')$ or
    \item $d_B(T_1,T_2) = d_B(T_1'',T_2'') + d_B(T_1-T_1'',T_2-T_2'')$ or
    \item $d_B(T_1,T_2) = d_B(T_1',T_2'') + d_B(T_1-T_1',T_2-T_2'')$ or
    \item $d_B(T_1,T_2) = d_B(T_1'',T_2') + d_B(T_1-T_1'',T_2-T_2')$
\end{itemize}
\label{lemma:rec_inner}
\end{lemma}

\begin{proof} See supplementary material, App.~C.
\end{proof}

\par\medskip
If one of the trees only consists of two nodes, i.e.\ it only has one possible branch decomposition with one branch, we can just omit the corresponding recursions.
Next, we consider the recursion for the case where one tree is empty.

\begin{lemma} Given a merge tree $T_1$ with root $v_1$, let $v_2$ be the unique children of the root and let it have children $v_3,v_4$. Let $T_1'$ be the subtree rooted in $(v_2,v_3)$ and $T_1''$ rooted in $(v_2,v_4)$. Then, for the optimal cost of $M_\bot(T_1)$ it holds that:
\begin{itemize}
    \item $d_B(T_1,\bot) = d_B(T_1',\bot) + d_B(T_1-T_1',\bot)$ or
    \item $d_B(T_1,\bot) = d_B(T_1'',\bot) + d_B(T_1-T_1'',\bot)$,
\end{itemize}
\noindent
and $d_B(\bot,T_2)$ decomposes symmetrically.
\label{lemma:rec_empty}
\end{lemma}

\begin{proof*}
See Appendix~D.
\end{proof*}

\par\medskip
We should note that the previous lemmas only consider binary merge trees.
However, it is easy to adapt for trees of arbitrary degree and we therefore omit this proof for better readability. We continue with the base cases of the recursion.

\begin{lemma} Given two merge trees $T_1 = (\{v_1,v_2\},\{(v_2,v_1)\})$, $T_2 = (\{u_1,u_2\},\{(u_2,u_1)\})$ that only have one branch, the following holds for $d_B$:
\begin{itemize}
    \item $d_B(T_1,\bot) = \cost(v_1v_2,\bot)$,
    \item $d_B(\bot,T_2) = \cost(\bot,u_1u_2)$ and
    \item $d_B(T_1,T_2) = \cost(v_1v_2,u_1u_2)$.
\end{itemize}
\label{lemma:rec_base}
\end{lemma}

\begin{proof}
See supplementary material, App.~E.
\end{proof}

We can use these lemmas to derive a recursive formula for $d_B(T_1,T_2)$ by taking the minimum of the different terms in each case.
This gives a recursion very similar to the ones for the classic edit distances that can be computed with a dynamic programming approach.
In the Section~\ref{section:algo} we will discuss this algorithm.

\subsection{Search Space Comparison to other Methods}
\label{section:searchspaces}

We now want to illustrate more precisely how the branch mapping distance relates to other known distance measures. Since the contour tree alignments from~\cite{DBLP:journals/cgf/LohfinkWLWG20} form some kind of outlier by only considering local properties, we will only look at those methods that use actual branch properties. We therefore try to fit the new method into a hierarchy of branch based methods that we extracted from the papers describing the two closest methods on merge trees~\cite{DBLP:journals/cgf/SaikiaSW14,DBLP:journals/tvcg/SridharamurthyM20,DBLP:journals/corr/abs-2107-07789}. We do this by illustrating the relationship in a schematic drawing of the considered search spaces for mappings between branches/features shown in Figure~\ref{fig:searchspaces}.

\begin{figure}
    \centering
    \begin{tikzpicture}[xscale=0.4, yscale=0.3]
    
    \definecolor{cadmiumgreen}{rgb}{0.0, 0.42, 0.24}
    \draw [draw=cadmiumgreen,ultra thick] (0,-2) rectangle ++(10,7);
    \draw [draw=orange,ultra thick] (2.5,1-1.2) rectangle ++(7.5-0.1,4-0.6);
    \draw [draw=red,ultra thick] (6.5,-1.8) rectangle ++(3.5-0.1,6.6);
    \draw [draw=blue,ultra thick] (6.6,1.1-1.2) rectangle ++(6.5,3.8-0.6);
    \draw [draw=cyan,ultra thick] (6.7,1.2-1.2) rectangle ++(3.1,3.6-0.6);
    
    \node[text=cadmiumgreen] at (1, -1.2) (w2_d) {\tiny$W_2^D$};
    \node[text=orange] at (3.5, 1.5) (d_c) {\tiny$d_C$};
    \node[text=red] at (7.5, -1) (w2_t) {\tiny$W_2^T$};
    \node[text=blue] at (11.5, 1.5) (d_b) {\tiny$d_B$};
    \node[text=cyan] at (8.25, 1.5) (d_b) {\tiny$d_S$};
    
    \end{tikzpicture}
    \caption{A schematic illustration of the search spaces of branch based distance measures for merge trees or scalar fields. $W_2^D$ (Wasserstein Distance between Persistence Diagrams, compare~\cite{DBLP:journals/corr/abs-2107-07789}) maps the branches of a fixed branch decomposition without any structural constraints. $d_C$ and $W_2^T$ restrict the search space by introducing structural constraints based on the merge trees. Since $d_C$ allows for gaps in the mapping and $W_2^T$ is not order preserving, the two models can be seen as orthogonal to each other. $d_S$ also works on a fixed branch decomposition and is strictly more restrictive than the other two. For details, see App.~A (supp. material). $d_B$ then takes the structural constraints from $d_S$ but lifts them to the space of arbitrary branch decompositions, thereby defining a model that  is orthogonal to both $d_C$ and $W_2^T$.}
    \label{fig:searchspaces}
\end{figure}
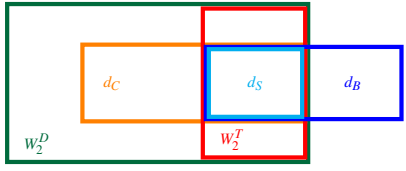

\subsection{Algorithm}
\label{section:algo}

In this section, we will use the recursive structure of optimal branch mappings shown in Lemmas~\ref{lemma:rec_inner}-\ref{lemma:rec_base} to derive a recursive algorithm that computes the branch mapping distance.

First, consider all subtrees $T_1'$, $T_1''$, $T_2'$, $T_2''$, $T_1-T_1'$, $T_1-T_1''$, $T_2-T_2'$ and $T_2-T_2''$ from Lemma~\ref{lemma:rec_inner}.
All of them (in all recursive steps) are abstract merge trees with a root of degree one. We can identify the trees appearing in this recursion by two nodes: the root node of the subtree and its unique child (which does not have to be a child in the original tree $T_1$/$T_2$).
Therefore, we generally switch to identifying trees in this manner in contrast to identifying them just by one node like in the recursions for classic tree edit distances and define an algorithm computing $d_B(n_1,p_1,n_2,p_2)$ where $n_1,n_2$ stand for the current nodes and $p_1,p_2$ stand for the parent nodes of the considered subtrees.
For example, $d_B(T_1,T_2)$ is represented by $d_B(v_1,v_2,u_1,u_2)$, whereas $d_B(T_1-T_1',T_2-T_2'')$ is represented by $d_B(v_4,v_1,u_3,u1)$.
Intuitively, $p_1,p_2$ can be considered as the last matched nodes in the currently tracked branch, which is illustrated in Figure~\ref{fig:recursion_db}.
In App.~F (supp. material), we present pseudo code showing the algorithm for binary trees and discuss how to adapt the algorithm for non-binary trees.

\begin{figure*}
\begin{subfigure}[t]{0.24\linewidth}
    \centering
    \resizebox{0.99\linewidth}{!}{
    \begin{tikzpicture}[xscale=0.6,yscale=0.8]
    
    \node[text=ForestGreen] at (0, 0) (p1) {\tiny$p_1$};
    \node[text=RedOrange] at (0, -1) (n1) {\tiny$n_1$};
    \node[text=ForestGreen] at (-1,-2) (c11) {\tiny$c_{1,1}$};
    \node[text=RedOrange] at (1, -2) (c12) {\tiny$c_{1,2}$};
    \draw [ultra thick,ForestGreen] (p1) -- (n1);
    \draw [ultra thick,ForestGreen] (n1) -- (c11);
    \draw [ultra thick,RedOrange] (n1) -- (c12);
    \draw [very thick,black] (c11) -- (-1.5,-2.4) -- (-1.75,-2.75) -- (-1.25,-2.75) -- (-1.5,-2.4);
    \draw [very thick,black] (c11) -- (-0.5,-2.4) -- (-0.75,-2.75) -- (-0.25,-2.75) -- (-0.5,-2.4);
    \draw [very thick,black] (c12) -- (1.5,-2.4) -- (1.75,-2.75) -- (1.25,-2.75) -- (1.5,-2.4);
    \draw [very thick,black] (c12) -- (0.5,-2.4) -- (0.75,-2.75) -- (0.25,-2.75) -- (0.5,-2.4);
    
    \node[text=ForestGreen] at (0+4, 0) (p2) {\tiny$p_2$};
    \node[text=RedOrange] at (0+4, -1) (n2) {\tiny$n_2$};
    \node[text=ForestGreen] at (-1+4, -2) (c21) {\tiny$c_{2,1}$};
    \node[text=RedOrange] at (1+4, -2) (c22) {\tiny$c_{2,2}$};
    \draw [ultra thick,ForestGreen] (p2) -- (n2);
    \draw [ultra thick,ForestGreen] (n2) -- (c21);
    \draw [ultra thick,RedOrange] (n2) -- (c22);
    \draw [very thick,black] (c21) -- (-1.5+4,-2.4) -- (-1.75+4,-2.75) -- (-1.25+4,-2.75) -- (-1.5+4,-2.4);
    \draw [very thick,black] (c21) -- (-0.5+4,-2.4) -- (-0.75+4,-2.75) -- (-0.25+4,-2.75) -- (-0.5+4,-2.4);
    \draw [very thick,black] (c22) -- (1.5+4,-2.4) -- (1.75+4,-2.75) -- (1.25+4,-2.75) -- (1.5+4,-2.4);
    \draw [very thick,black] (c22) -- (0.5+4,-2.4) -- (0.75+4,-2.75) -- (0.25+4,-2.75) -- (0.5+4,-2.4);
    
    \node[] at (2,-0.3) (label) {\textbf{(a)}};
    
    \end{tikzpicture}
    }
    \caption*{\mbox{\tiny$d_B(c_{1,1},p_1,c_{2,1},p_2)+d_B(c_{1,2},n_1,c_{2,2},n_2)$}}
\end{subfigure}
\begin{subfigure}[t]{0.24\linewidth}
    \centering
    \resizebox{0.99\linewidth}{!}{
    \begin{tikzpicture}[xscale=0.6,yscale=0.8]
    
    \node[text=ForestGreen] at (0, 0) (p1) {\tiny$p_1$};
    \node[text=RedOrange] at (0, -1) (n1) {\tiny$n_1$};
    \node[text=ForestGreen] at (-1,-2) (c11) {\tiny$c_{1,1}$};
    \node[text=RedOrange] at (1, -2) (c12) {\tiny$c_{1,2}$};
    \draw [ultra thick,ForestGreen] (p1) -- (n1);
    \draw [ultra thick,ForestGreen] (n1) -- (c11);
    \draw [ultra thick,RedOrange] (n1) -- (c12);
    \draw [very thick,black] (c11) -- (-1.5,-2.4) -- (-1.75,-2.75) -- (-1.25,-2.75) -- (-1.5,-2.4);
    \draw [very thick,black] (c11) -- (-0.5,-2.4) -- (-0.75,-2.75) -- (-0.25,-2.75) -- (-0.5,-2.4);
    \draw [very thick,black] (c12) -- (1.5,-2.4) -- (1.75,-2.75) -- (1.25,-2.75) -- (1.5,-2.4);
    \draw [very thick,black] (c12) -- (0.5,-2.4) -- (0.75,-2.75) -- (0.25,-2.75) -- (0.5,-2.4);
    
    \node[text=ForestGreen] at (0+4, 0) (p2) {\tiny$p_2$};
    \node[text=RedOrange] at (0+4, -1) (n2) {\tiny$n_2$};
    \node[text=RedOrange] at (-1+4, -2) (c21) {\tiny$c_{2,1}$};
    \node[text=ForestGreen] at (1+4, -2) (c22) {\tiny$c_{2,2}$};
    \draw [ultra thick,ForestGreen] (p2) -- (n2);
    \draw [ultra thick,RedOrange] (n2) -- (c21);
    \draw [ultra thick,ForestGreen] (n2) -- (c22);
    \draw [very thick,black] (c21) -- (-1.5+4,-2.4) -- (-1.75+4,-2.75) -- (-1.25+4,-2.75) -- (-1.5+4,-2.4);
    \draw [very thick,black] (c21) -- (-0.5+4,-2.4) -- (-0.75+4,-2.75) -- (-0.25+4,-2.75) -- (-0.5+4,-2.4);
    \draw [very thick,black] (c22) -- (1.5+4,-2.4) -- (1.75+4,-2.75) -- (1.25+4,-2.75) -- (1.5+4,-2.4);
    \draw [very thick,black] (c22) -- (0.5+4,-2.4) -- (0.75+4,-2.75) -- (0.25+4,-2.75) -- (0.5+4,-2.4);
    
    \node[] at (2,-0.3) (label) {\textbf{(b)}};
    
    \end{tikzpicture}
    }
    \caption*{\mbox{\tiny$d_B(c_{1,1},p_1,c_{2,2},p_2)+d_B(c_{1,2},n_1,c_{2,1},n_2)$}}
\end{subfigure}
\begin{subfigure}[t]{0.24\linewidth}
    \centering
    \resizebox{0.99\linewidth}{!}{
    \begin{tikzpicture}[xscale=0.6,yscale=0.8]
    
    \node[text=ForestGreen] at (0, 0) (p1) {\tiny$p_1$};
    \node[text=gray!80] at (0, -1) (n1) {\tiny$n_1$};
    \node[text=ForestGreen] at (-1,-2) (c11) {\tiny$c_{1,1}$};
    \node[text=gray!80] at (1, -2) (c12) {\tiny$c_{1,2}$};
    \draw [ultra thick,ForestGreen] (p1) -- (n1);
    \draw [ultra thick,ForestGreen] (n1) -- (c11);
    \draw [ultra thick,gray!40] (n1) -- (c12);
    \draw [very thick,black] (c11) -- (-1.5,-2.4) -- (-1.75,-2.75) -- (-1.25,-2.75) -- (-1.5,-2.4);
    \draw [very thick,black] (c11) -- (-0.5,-2.4) -- (-0.75,-2.75) -- (-0.25,-2.75) -- (-0.5,-2.4);
    \draw [very thick,gray!40] (c12) -- (1.5,-2.4) -- (1.75,-2.75) -- (1.25,-2.75) -- (1.5,-2.4);
    \draw [very thick,gray!40] (c12) -- (0.5,-2.4) -- (0.75,-2.75) -- (0.25,-2.75) -- (0.5,-2.4);
    
    \node[text=ForestGreen] at (0+4, 0) (p2) {\tiny$p_2$};
    \node[text=ForestGreen] at (0+4, -1) (n2) {\tiny$n_2$};
    \node[text=black] at (-1+4, -2) (c21) {\tiny$c_{2,1}$};
    \node[text=black] at (1+4, -2) (c22) {\tiny$c_{2,2}$};
    \draw [ultra thick,ForestGreen] (p2) -- (n2);
    \draw [ultra thick,black] (n2) -- (c21);
    \draw [ultra thick,black] (n2) -- (c22);
    \draw [very thick,black] (c21) -- (-1.5+4,-2.4) -- (-1.75+4,-2.75) -- (-1.25+4,-2.75) -- (-1.5+4,-2.4);
    \draw [very thick,black] (c21) -- (-0.5+4,-2.4) -- (-0.75+4,-2.75) -- (-0.25+4,-2.75) -- (-0.5+4,-2.4);
    \draw [very thick,black] (c22) -- (1.5+4,-2.4) -- (1.75+4,-2.75) -- (1.25+4,-2.75) -- (1.5+4,-2.4);
    \draw [very thick,black] (c22) -- (0.5+4,-2.4) -- (0.75+4,-2.75) -- (0.25+4,-2.75) -- (0.5+4,-2.4);
    
    \node[] at (2,-0.3) (label) {\textbf{(c)}};
    
    \end{tikzpicture}
    }
    \caption*{\mbox{\tiny$d_B(c_{1,1},p_1,n_1,p_2)+d_B(c_{1,2},n_1,\bot,\bot)$}}
\end{subfigure}
\begin{subfigure}[t]{0.24\linewidth}
    \centering
    \resizebox{0.99\linewidth}{!}{
    \begin{tikzpicture}[xscale=0.6,yscale=0.8]
    
    \node[text=ForestGreen] at (0, 0) (p1) {\tiny$p_1$};
    \node[text=gray!80] at (0, -1) (n1) {\tiny$n_1$};
    \node[text=gray!80] at (-1,-2) (c11) {\tiny$c_{1,1}$};
    \node[text=ForestGreen] at (1, -2) (c12) {\tiny$c_{1,2}$};
    \draw [ultra thick,ForestGreen] (p1) -- (n1);
    \draw [ultra thick,gray!40] (n1) -- (c11);
    \draw [ultra thick,ForestGreen] (n1) -- (c12);
    \draw [very thick,gray!40] (c11) -- (-1.5,-2.4) -- (-1.75,-2.75) -- (-1.25,-2.75) -- (-1.5,-2.4);
    \draw [very thick,gray!40] (c11) -- (-0.5,-2.4) -- (-0.75,-2.75) -- (-0.25,-2.75) -- (-0.5,-2.4);
    \draw [very thick,black] (c12) -- (1.5,-2.4) -- (1.75,-2.75) -- (1.25,-2.75) -- (1.5,-2.4);
    \draw [very thick,black] (c12) -- (0.5,-2.4) -- (0.75,-2.75) -- (0.25,-2.75) -- (0.5,-2.4);
    
    \node[text=ForestGreen] at (0+4, 0) (p2) {\tiny$p_2$};
    \node[text=ForestGreen] at (0+4, -1) (n2) {\tiny$n_2$};
    \node[text=black] at (-1+4, -2) (c21) {\tiny$c_{2,1}$};
    \node[text=black] at (1+4, -2) (c22) {\tiny$c_{2,2}$};
    \draw [ultra thick,ForestGreen] (p2) -- (n2);
    \draw [ultra thick,black] (n2) -- (c21);
    \draw [ultra thick,black] (n2) -- (c22);
    \draw [very thick,black] (c21) -- (-1.5+4,-2.4) -- (-1.75+4,-2.75) -- (-1.25+4,-2.75) -- (-1.5+4,-2.4);
    \draw [very thick,black] (c21) -- (-0.5+4,-2.4) -- (-0.75+4,-2.75) -- (-0.25+4,-2.75) -- (-0.5+4,-2.4);
    \draw [very thick,black] (c22) -- (1.5+4,-2.4) -- (1.75+4,-2.75) -- (1.25+4,-2.75) -- (1.5+4,-2.4);
    \draw [very thick,black] (c22) -- (0.5+4,-2.4) -- (0.75+4,-2.75) -- (0.25+4,-2.75) -- (0.5+4,-2.4);
    
    \node[] at (2,-0.3) (label) {\textbf{(d)}};
    
    \end{tikzpicture}
    }
    \caption*{\mbox{\tiny$d_B(c_{1,2},p_1,n_1,p_2)+d_B(c_{1,1},n_1,\bot,\bot)$}}
\end{subfigure}
\caption{Exemplary illustration of recursive structure of $d_B(n_1,p_1,n_2,p_2)$ in Algorithm~1 in App.~F (supp. material). In (a) we continue main branch tracking with the left subtree in both trees and match the right subtrees onto each other, whereas in (b) we continue main branch tracking with the left subtree in $T_1$ and the right subtree in $T_2$ and match the other two subtrees onto each other. In (c), we continue main branch tracking with the left subtree in $T_1$ and delete the right subtree. In $T_2$, we do nothing. In (d), we continue main branch tracking with the right subtree in $T_1$, delete the left subtree and do nothing in $T_2$.}
\label{fig:recursion_db}
\end{figure*}
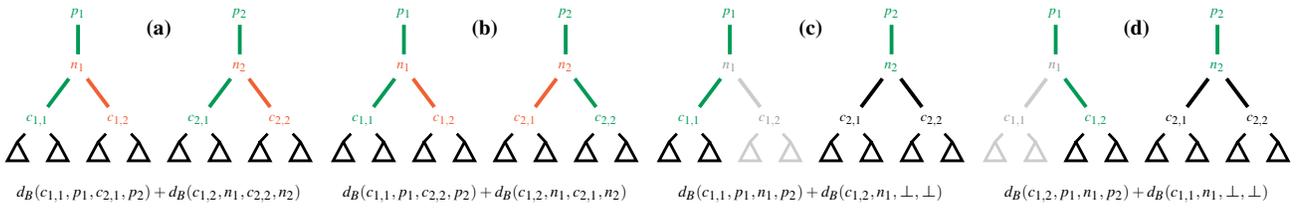

\paragraph*{Running time.} An obvious bound on the running time of Algorithm~1 in App.~F (supp. material) using memoization is $\mathcal{O}(|T_1|^2 \cdot |T_2|^2)$, since the number of subproblems is constant, they are always smaller than the current one and there are only $|T_1|^2 \cdot |T_2|^2$ many 4-tuples of nodes. However, this can be tightened to $\mathcal{O}(|T_1|\cdot\depth(T_1) \cdot |T_2|\cdot\depth(T_2))$, since we know that $p_1,p_2$ are always ancestors of $n_1,n_2$. A short discussion on the runtime for non-bounded degree can be found in App.~F (supp. material).

These bounds of course only hold if the base metric is computable in constant time. Typical metrics like difference in persistence or bottleneck, Wasserstein and $L_\infty$-distances are all constant time operations. For branch mappings, it is also important that these are pure branch metrics, since then they only depend on the start and end point of a branch, which are always available when evaluating the base metric (through $n_1,p_1$ or $n_2,p_2$).

\paragraph*{Variants.} There are a few adaptations that can be applied and should be noted here. First, saddle swap instabilities are a key problem of all tractable edit distances on merge trees and are of course still present in the branch mapping distance. To handle such instabilities, \cite{DBLP:journals/tvcg/SridharamurthyM20}~and~\cite{DBLP:journals/corr/abs-2107-07789} use a preprocessing step to collapse small edges (through an $\epsilon$-parameter) such that branches in a parent-child relation that are probable to be influenced by such instabilities become siblings instead. Of course, this preprocessing is also applicable to our method and should yield the same improvements, however, we did not implement or test it. Furthermore, it is possible to adapt our algorithm to not try all branch decompositions and use a fixed one instead. The computed distance and mapping then becomes equivalent to the one from~\cite{DBLP:journals/cgf/SaikiaSW14}.

It is also possible to not just sum up the costs of all pairs in the optimal matchings but to sum up the squared pair costs and then take the square root of the sum. This approach was used in~\cite{DBLP:journals/corr/abs-2107-07789} and makes the resulting distance directly comparable to the classic Wasserstein distance for persistence diagrams. This is of course also possible with our method or $d_C$ and we used this variant for the motivational examples in Section~\ref{fig:teaser}.

Another property to note is that branch mappings induce vertex mappings in the sense of constrained (or even 1-degree) edit mappings. Therefore, it is possible to construct an alignment of two trees by computing the optimal branch mapping. This alignment is a supertree of the two input trees and can therefore again be aligned. This gives us the possibility to use an iterated heuristic for an alignment of a set of merge trees (based on the branch mapping distance) as it has been done for contour trees in~\cite{DBLP:journals/cgf/LohfinkWLWG20}. We plan to implement such a method in future work.

\section{Experiments}
\label{section:applications}

In the following, we demonstrate the utility of our technique as a basis for typical
tasks
in visualization, and compare it against other, similar techniques. The basis for these experiments is a straightforward Python implementation of the algorithm given in Section~\ref{section:algo} and the constrained edit distance. Merge trees are computed using TTK~\cite{DBLP:journals/tvcg/TiernyFLGM18}, and ParaView~\cite{squillacote2007paraview} is used for result visualization. The three experiments described in the following correspond to three different uses of distance functions between scalar fields as found in many visualization applications: outlier identification, periodicity detection, and feature tracking. App.~H (supp. material) discusses the results on noisy versions of the same data.

For benchmarks, we use a C++ implementation which is publicly available on Github as well as the easier to read Python
version~\cite{repository}.
Computation of individual distances for the here used
trees with up to 80 vertices took only a few milliseconds. On noisier trees with sizes between 100 and 400, times went up to the range of seconds. For more details, see App.~G (supp. material).

\subsection{Outlier Detection}
\label{section:task_outlier}

Outlier detection is a major concern in ensemble analysis, and the distance between merge trees of two scalar ensemble members can be used towards this. We consider a synthetic example dataset shown in Figure~\ref{fig:teaser}. This dataset is based on the example ensemble from Section~\ref{section:advantages_disadvantes}, with one modification: we add a fifth peak in the middle of the four large peaks, but create an outlier by omitting this peak in a single ensemble member. Outlier detection then proceeds by computing the pairwise merge tree distances for all pairs of trees, using the squared sum of the mapping costs.

Using clustermaps to visualize the distance matrices, it is easy to see that the branch mapping distance ($d_B$) distinguishes the outlier clearly, whereas the constrained edit distance $d_C$ and the Wasserstein distance $W_2^T$ do not. For the latter two, the cost of the missing branch in the outlier ensemble member is concealed by the cluster effects when working with fixed branch decompositions. A visualization of the entire  ensemble and further results obtained using additional parameter choice can be found in the supplementary material (App.~H). Note that this dataset was constructed specifically to illustrate the problem of fixed branch decompositions, as discussed in Section~\ref{section:advantages_disadvantes}. We make both datasets available publicly with our implementation~\cite{repository}.

\subsection{Periodicity Detection}
\label{section:task_periodic}

Periodicity detection is an often encountered problem in time-varying data; we here consider a time-varying dataset consisting of 1001 time steps of a $400\times 50$ grid representing the scalar velocity magnitude of the flow around a cylinder that forms a periodic Kármán vortex street. It was simulated by Weinkauf~\cite{weinkauf10c} using the \emph{GerrisFlowSolver}~\cite{gerrisflowsolver}.

Reproducing the experiment of Sridharamurthy et al.~\cite{DBLP:journals/tvcg/SridharamurthyM20} with our distance measure, we compute the merge tree distance matrix for all pairs of time steps and visualize this as a heatmap (cf.~\autoref{fig:heatmap_weinkauf}). Here, the branch mapping distance matrix exhibits exactly the same periodicity pattern as the constrained edit distance matrix  (Fig.~13 in \cite{DBLP:journals/tvcg/SridharamurthyM20}), giving a period/half-period of 75/37 time steps, respectively (see App.~H (supp. material) for a direct comparison).

\begin{figure*}
    \centering
    \scalebox{1}[0.95]{
    \includegraphics[width=.85\linewidth]{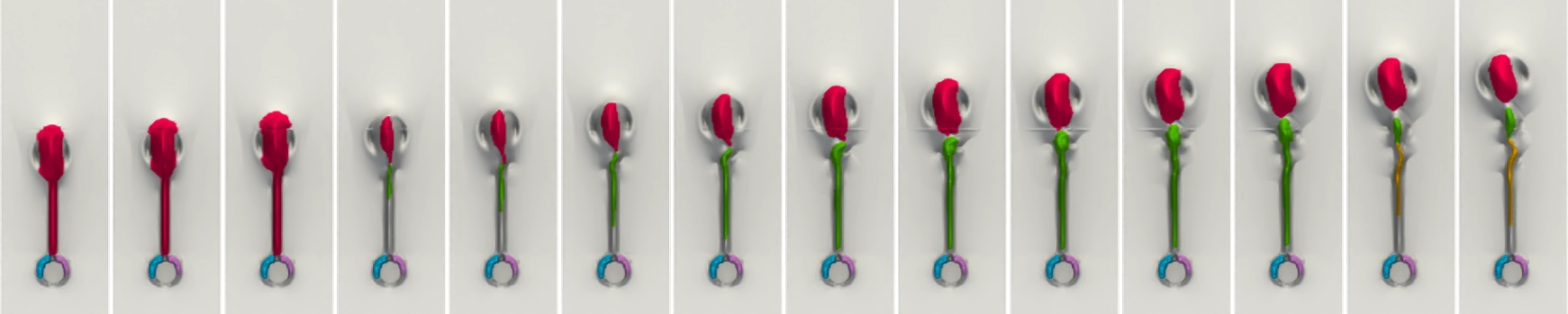}
    }
    \includegraphics[width=.85\linewidth]{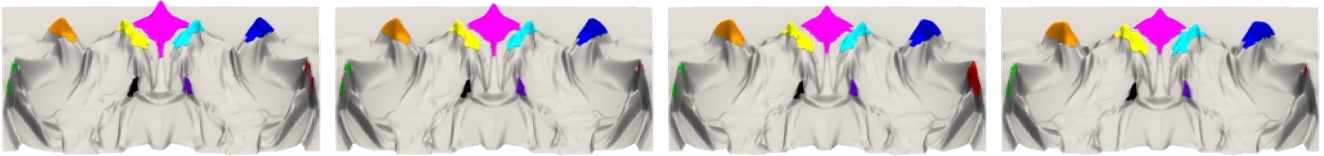}
    \caption{Feature Tracking for the Heated Cylinder (top) and ion density (bottom) datasets. We rendered all maxima matched by the optimal branch mapping in consistent colors over all time points. The matching for the heated cylinder shows how the rising plume splits up over time into two and then three distinct maxima while the two peaks around the pole stay constant. The rendering for the ion density dataset shows that the branch mapping represents the intuitively correct mapping of the corresponding peaks, which is for example not true for a mapping obtained by Wasserstein distances on persistence diagrams according to~\cite{DBLP:journals/corr/abs-2107-07789}.
    \label{fig:tracking}}
\end{figure*}

\subsection{Feature Tracking over Time}
\label{section:task_tracking}

As a last experiment, we utilize our distance metric for feature tracking within a time series of scalar fields. Here, we consider two different time series. The \textbf{SciVis contest 2008 dataset}~\cite{scivisContest2008} describes development of ion density during universe formation, and was also used as in a tracking case study in prior work~\cite{DBLP:journals/corr/abs-2107-07789}. Second, the \textbf{heated cylinder dataset} describes an ensemble of flows around a heated pole in a fluid. This dataset was also used in prior work on ensemble analysis~\cite{DBLP:journals/cgf/LohfinkWLWG20}, and we make it available publicly~\cite{repository}. We again consider velocity magnitude as the variable of interest for tracking.

For both datasets, we provide a proof-of-concept for the proposed branch mapping distance by replicating previous results. We are able to extract semantically meaningful matchings of features, comparable to those found earlier~\cite{DBLP:journals/cgf/LohfinkWLWG20,DBLP:journals/corr/abs-2107-07789} from the branch mappings for both datasets (see App.~H (supp. material) for a direct comparison). Visualizations of the matched features are shown in Figure~\ref{fig:tracking}, indicating that our distance measure is useful for tracking applications in topology-based visualization.

\section{Conclusion and Outlook}
\label{section:outlook}

In this paper, we presented a novel variant of edit distances that is tailored specifically to matching branches of merge trees.
We provided a formal definition and analysis of metric properties as well as an algorithm and implementation.
We showed that it is as expressive as previous approaches on practical datasets and that it can improve the quality of the matching and distance measure significantly on datasets containing specific structures.
We also discussed limitations of the new method, them being its higher complexity and the fact that it is not a metric on merge trees, which does, however, not influence its utility for the provided tasks.
In the context of our classification, we have described an edit distance-based method that works on merge trees and uses global properties.
However, this method does so in a very specific and restricted way, which we now discuss together with options for less restrictive methods and future work.

\begin{figure}[b!]
    \centering
    \includegraphics[height=0.4\linewidth]{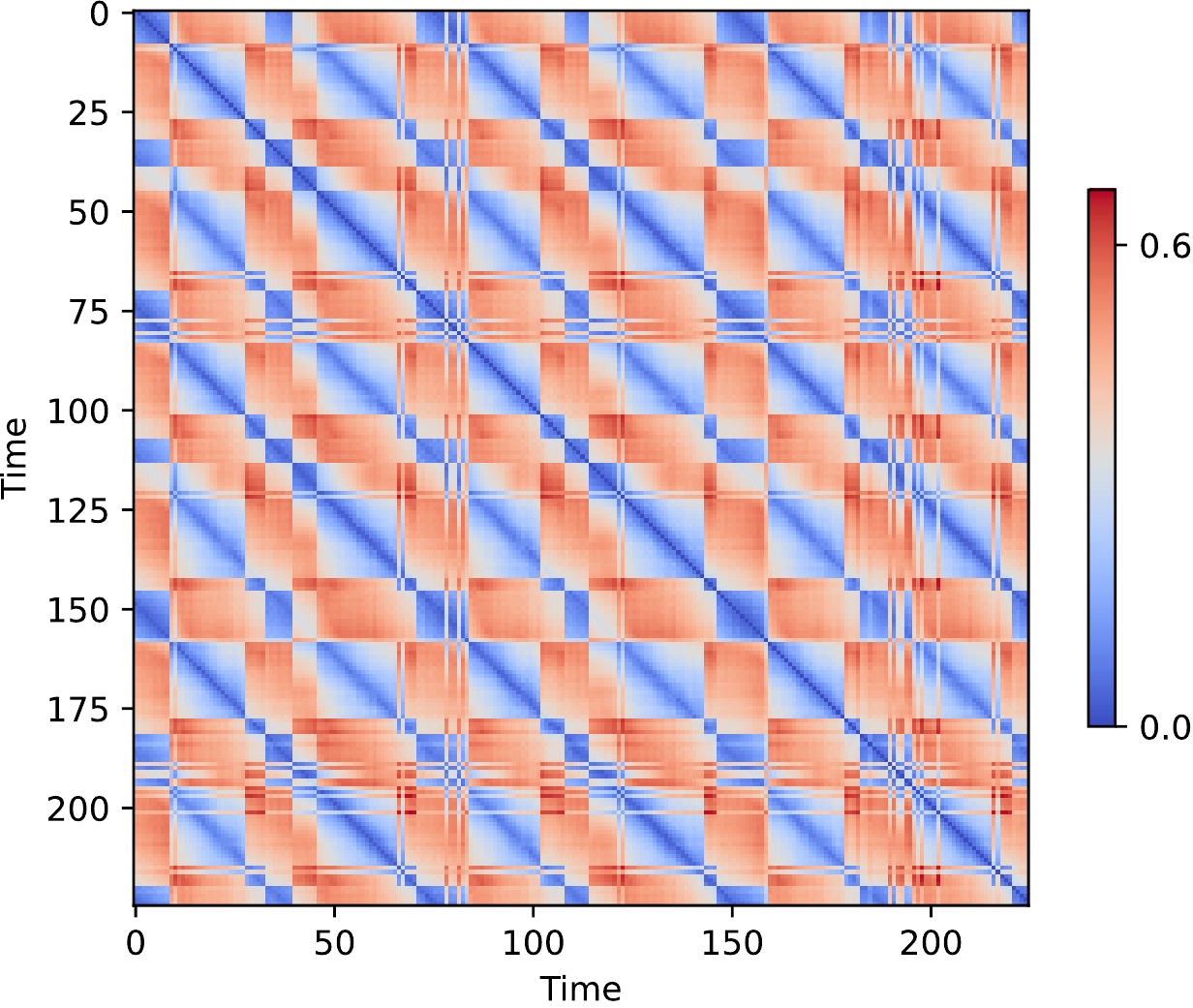}
    \caption{A heatmap of the distance matrix for all pairs of the first 225 time steps of the vortex street dataset (full matrix cf. supp. material, App.~H), clearly indicating a period/half-period 75/37.}
    \label{fig:heatmap_weinkauf}
\end{figure}

\paragraph*{Other Branch Decomposition-Independent Methods.}
A restriction we used was to only map \emph{branches}.
We could generalize this to arbitrary \emph{paths} of a merge tree.
As branches are paths, this would just extend the concept of branch mappings by again increasing the search space.
Such \emph{path mappings} would actually be computable in the same time and space bounds.
In future work, we want to investigate the formal foundations of path mappings.

Another restriction is that the new distance corresponds to the 1-degree edit distance in the classic setting.
We also want to consider adaptions of constrained edit mappings, alignments, the general edit distance or even unrooted versions of those.
For these options, we have to find suitable definitions and study their complexity to see which of these options are applicable in practice.

\paragraph*{Other Applications.}
Furthermore, as hinted in Section~\ref{section:algo}, we plan to integrate branch mappings (and maybe path mappings) into the alignment framework introduced in~\cite{DBLP:journals/cgf/LohfinkWLWG20} (which is definitely possible) and investigate the possibility of combining it with the geodesic/barycenter methods introduced in~\cite{DBLP:journals/corr/abs-2107-07789}.
The latter seems more likely to be possible for certain variants of path mappings which we believe to be a metric, since the normalization step from~\cite{DBLP:journals/corr/abs-2107-07789} seems hard to apply to non-fixed branch decompositions and barycenter computation could be hindered by the missing triangle inequality.
A very typical application of distance measures for scalar fields that we have not studied in this paper is symmetry or self-similarity detection.
As the memoization table computed by the algorithm includes distances for all subtrees of both input trees, it is possible to use it to find substructures of high similarity by applying methods similar to those from~\cite{DBLP:journals/cgf/SaikiaSW14}.
We aim to implement them in future work, too.
Furthermore, we plan to investigate if similar distances can be defined for contour trees instead of merge trees.
Hurdles to overcome here would be, for example, that paths in contour trees are not necessarily monotone, and by restricting to monotone branches, the impact of saddle-swap instabilities increase significantly.

\paragraph*{Parallel Implementation.}
For a more efficient implementation, we want to look deeper into optimization options and also develop a parallel implementation of our algorithm.
This could lead to a significantly improved performance, as the branching factor in the $\mathcal{O}(n^4)$-sized search space is high, which (intuitively) should lead to high parallelization potential.

\bibliographystyle{eg-alpha-doi}
\bibliography{ms}


\end{document}


\maketitle

\appendix

\section{Unordered vs Ordered BDTs and other Relations}
\label{appendix:unorderedVsordered}

We illustrate the difference between the ordered distance from~\cite{DBLP:journals/cgf/SaikiaSW14} and the unordered distance from~\cite{DBLP:journals/corr/abs-2107-07789} on an example merge tree which can be seen in Figure~\ref{fig:unorderedVSordered}. The three side branches b-f, c-g and d-h have a natural ordering defined by the scalar value of their saddles. This ordering is shown in the sibling orderings in the BDTs. Pont et al.\ allow arbitrary mappings, i.e.\ all three mappings are valid for their distance, whereas Saikia et al.\ only allow order-preserving mappings. This means that crossings in the mappings such as in the first one are not taken into account. This can also be phrased differently: The allowed mappings between child branches of a node correspond to string-edit mappings between the two child-sequences in this method by Saikia et al.\ whereas the method of Pont et al.\ checks for the optimal maximum matching between the two sets of children.

Furthermore, the unordered approach on BDTs used by Pont et al.\ gives them advantage over all methods that work directly on merge trees, as it allows for mappings that are ancestor-preserving in the BDT but not in the origial merge tree, as can be seen by the implicit mapping of the node with label $b$ and $c$ in Figure~\ref{fig:unorderedVSordered}.

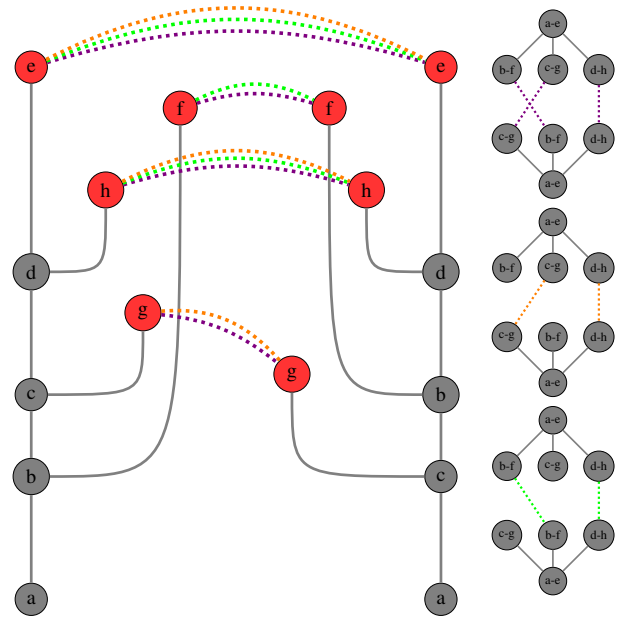
\begin{figure}[h]
    \centering
    \begin{subfigure}[c]{0.72\linewidth}
        \centering
        \resizebox{\linewidth}{!}{
        \begin{tikzpicture}[scale=0.58,yscale=1.1]
            
            \node[draw,circle,fill=gray!100] at (-2, 0) (root_1) {a};
            \node[draw,circle,fill=gray!100] at (-2, 3) (s1_1) {b};
            \node[draw,circle,fill=gray!100] at (-2, 5) (s2_1) {c};
            \node[draw,circle,fill=gray!100] at (-2, 8) (s3_1) {d};
            \node[draw,circle,fill=red!80] at (-2, 13) (m1_1) {e};
            \node[draw,circle,fill=red!80] at (2, 12) (m2_1) {f};
            \node[draw,circle,fill=red!80] at (1, 7) (m3_1) {g};
            \node[draw,circle,fill=red!80] at (0, 10) (m4_1) {h};
            \draw[gray,very thick] (root_1) -- (s1_1);
            \draw[gray,very thick] (s1_1) -- (s2_1);
            \draw[gray,very thick] (s2_1) -- (s3_1);
            \draw[gray,very thick] (s3_1) -- (m1_1);
            \draw[gray,very thick] (s1_1) .. controls (2,3) .. (m2_1);
            \draw[gray,very thick] (s2_1) .. controls (1,5) .. (m3_1);
            \draw[gray,very thick] (s3_1) .. controls (0,8) .. (m4_1);
            
            \node[draw,circle,fill=gray!100] at (2+7, 0) (root_2) {a};
            \node[draw,circle,fill=gray!100] at (2+7, 5) (s1_2) {b};
            \node[draw,circle,fill=gray!100] at (2+7, 3) (s2_2) {c};
            \node[draw,circle,fill=gray!100] at (2+7, 8) (s3_2) {d};
            \node[draw,circle,fill=red!80] at (-1+7, 12) (m1_2) {f};
            \node[draw,circle,fill=red!80] at (2+7, 13) (m2_2) {e};
            \node[draw,circle,fill=red!80] at (-2+7, 5.5) (m3_2) {g};
            \node[draw,circle,fill=red!80] at (0+7, 10) (m4_2) {h};
            \draw[gray,very thick] (root_2) -- (s2_2);
            \draw[gray,very thick] (s1_2) .. controls (-1+7,5) .. (m1_2);
            \draw[gray,very thick] (s2_2) .. controls (-2+7,3) .. (m3_2);
            \draw[gray,very thick] (s3_2) .. controls (0+7,8) .. (m4_2);
            \draw[gray,very thick] (s2_2) -- (s1_2);
            \draw[gray,very thick] (s1_2) -- (s3_2);
            \draw[gray,very thick] (s3_2) -- (m2_2);
            
            \draw[ultra thick,-,dotted,violet] (m2_1) to[bend left=15] (m1_2);
            \draw[ultra thick,-,dotted,violet] (m1_1) to[bend left=15] (m2_2);
            \draw[ultra thick,-,dotted,violet] (m4_1) to[bend left=15] (m4_2);
            \draw[ultra thick,-,dotted,violet] (m3_1) to[bend left=15] (m3_2);
            
            \draw[ultra thick,-,dotted,green] (m2_1) to[bend left=25] (m1_2);
            \draw[ultra thick,-,dotted,green] (m1_1) to[bend left=20] (m2_2);
            \draw[ultra thick,-,dotted,green] (m4_1) to[bend left=20] (m4_2);
            
            \draw[ultra thick,-,dotted,orange] (m3_1) to[bend left=25] (m3_2);
            \draw[ultra thick,-,dotted,orange] (m1_1) to[bend left=25] (m2_2);
            \draw[ultra thick,-,dotted,orange] (m4_1) to[bend left=25] (m4_2);
            
        \end{tikzpicture}
        }
    \end{subfigure}
    \begin{subfigure}[c]{0.26\linewidth}
        \centering
        \resizebox{0.78\linewidth}{!}{
        \begin{tikzpicture}[scale=0.58]
            \node[draw=none,fill=none,circle] at (0, 4.5) (dummy) {};
            \node[draw=none,fill=none,circle] at (0, -3.5) (dummy) {};
            
            \node[draw,circle,fill=gray!100] at (0, 4) (root_1) {a-e};
            \node[draw,circle,fill=gray!100] at (-2, 2) (b1_1) {b-f};
            \node[draw,circle,fill=gray!100] at (0, 2) (b2_1) {c-g};
            \node[draw,circle,fill=gray!100] at (2, 2) (b3_1) {d-h};
            \draw[gray,very thick] (root_1) -- (b1_1);
            \draw[gray,very thick] (root_1) -- (b2_1);
            \draw[gray,very thick] (root_1) -- (b3_1);
            
            \node[draw,circle,fill=gray!100] at (0, -3) (root_2) {a-e};
            \node[draw,circle,fill=gray!100] at (-2, -1) (b1_2) {c-g};
            \node[draw,circle,fill=gray!100] at (0, -1) (b2_2) {b-f};
            \node[draw,circle,fill=gray!100] at (2, -1) (b3_2) {d-h};
            \draw[gray,very thick] (root_2) -- (b1_2);
            \draw[gray,very thick] (root_2) -- (b2_2);
            \draw[gray,very thick] (root_2) -- (b3_2);
            
            \draw[ultra thick,-,dotted,violet] (b1_1) to[] (b2_2);
            \draw[ultra thick,-,dotted,violet] (b2_1) to[] (b1_2);
            \draw[ultra thick,-,dotted,violet] (b3_1) to[] (b3_2);
            
        \end{tikzpicture}
        }
        
        \resizebox{0.78\linewidth}{!}{
        \begin{tikzpicture}[scale=0.58]
            \node[draw=none,fill=none,circle] at (0, 4.5) (dummy) {};
            \node[draw=none,fill=none,circle] at (0, -3.5) (dummy) {};
            
            \node[draw,circle,fill=gray!100] at (0, 4) (root_1) {a-e};
            \node[draw,circle,fill=gray!100] at (-2, 2) (b1_1) {b-f};
            \node[draw,circle,fill=gray!100] at (0, 2) (b2_1) {c-g};
            \node[draw,circle,fill=gray!100] at (2, 2) (b3_1) {d-h};
            \draw[gray,very thick] (root_1) -- (b1_1);
            \draw[gray,very thick] (root_1) -- (b2_1);
            \draw[gray,very thick] (root_1) -- (b3_1);
            
            \node[draw,circle,fill=gray!100] at (0, -3) (root_2) {a-e};
            \node[draw,circle,fill=gray!100] at (-2, -1) (b1_2) {c-g};
            \node[draw,circle,fill=gray!100] at (0, -1) (b2_2) {b-f};
            \node[draw,circle,fill=gray!100] at (2, -1) (b3_2) {d-h};
            \draw[gray,very thick] (root_2) -- (b1_2);
            \draw[gray,very thick] (root_2) -- (b2_2);
            \draw[gray,very thick] (root_2) -- (b3_2);
            
            \draw[ultra thick,-,dotted,orange] (b2_1) to[] (b1_2);
            \draw[ultra thick,-,dotted,orange] (b3_1) to[] (b3_2);
            
        \end{tikzpicture}
        }
        
        \resizebox{0.78\linewidth}{!}{
        \begin{tikzpicture}[scale=0.58]
            \node[draw=none,fill=none,circle] at (0, 4.5) (dummy) {};
            \node[draw=none,fill=none,circle] at (0, -3.5) (dummy) {};
            
            \node[draw,circle,fill=gray!100] at (0, 4) (root_1) {a-e};
            \node[draw,circle,fill=gray!100] at (-2, 2) (b1_1) {b-f};
            \node[draw,circle,fill=gray!100] at (0, 2) (b2_1) {c-g};
            \node[draw,circle,fill=gray!100] at (2, 2) (b3_1) {d-h};
            \draw[gray,very thick] (root_1) -- (b1_1);
            \draw[gray,very thick] (root_1) -- (b2_1);
            \draw[gray,very thick] (root_1) -- (b3_1);
            
            \node[draw,circle,fill=gray!100] at (0, -3) (root_2) {a-e};
            \node[draw,circle,fill=gray!100] at (-2, -1) (b1_2) {c-g};
            \node[draw,circle,fill=gray!100] at (0, -1) (b2_2) {b-f};
            \node[draw,circle,fill=gray!100] at (2, -1) (b3_2) {d-h};
            \draw[gray,very thick] (root_2) -- (b1_2);
            \draw[gray,very thick] (root_2) -- (b2_2);
            \draw[gray,very thick] (root_2) -- (b3_2);
            
            \draw[ultra thick,-,dotted,green] (b1_1) to[] (b2_2);
            \draw[ultra thick,-,dotted,green] (b3_1) to[] (b3_2);
            
        \end{tikzpicture}
        }
    \end{subfigure}
    \caption{On the left, we see a merge tree with three different mappings between the leaves in purple, green and orange. On the right, we see the three corresponding mappings between the branch decomposition graphs.}
    \label{fig:unorderedVSordered}
\end{figure}

\paragraph*{Relation of $d_C$, $W_2^T$ and $d_S$}

We now discuss the relation of the three branch decomposition-dependent methods in Figure~4 in more detail, however, still in an intuitive manner without providing formal proofs for the claims. The constrained edit distance $d_C$ is, as stated in the categorization, working on BDTs. In the original paper, it is defined to work on merge trees that are labeled with branch properties. This makes it equivalent (in terms of the mapping search space) to a constrained edit distance on \emph{ordered} BDTs, since the ancestor preservation condition in merge trees translates to the ordering in BDTs. Therefore, the search space of $d_C$ is a strict superset of the search space of $d_S$ (as long as the same base metric is used).

As discussed before, the Wasserstein Distance on merge trees $W_2^T$ uses $d_1$ as the underlying edit distance, as does $d_S$, however, it works on \emph{unordered} BDTs. The lifting to unordered BDTs is the only difference (in terms of search space) between $d_S$ and $W_2^T$. The only difference between $d_S$ and $d_C$ is the step from $d_1$ to $d_c$. As both variants require order preserving mappings, both consider strictly ordered BDTs. From these two observations, we can follow that the search space extensions of $d_C$ and $W_2^T$ in comparison to $d_S$ are mutually exclusive and therefore $d_S$ has to be the intersection of the two, as shown in Figure~4.

\section{Proof of Theorem~1}
\label{appendix:metric_proof}

For completeness, we first state the theorem again:

$d_B$ is a metric on the set 
$$\{(T,B) \bigm| T \text{ is an abstract merge tree, } B \in B(T)\},$$
as long as the cost function $c$ on the branch labels is a metric.

\begin{proof}
Given two trees $T_1,T_2$ with branch decompositions $B_1 \in B(T_1),B_2 \in B(T_2)$, the first two metric properties, identity and symmetry are trivial, so only the triangle inequality remains to be proven.
Therefore, consider a third tree $T_3$ with branch decomposition $B_3 \in B(T_3)$ and optimal mappings $M_{1,2} \subseteq B_1 \times B_2,M_{2,3} \subseteq B_2 \times B_3$ with $d_B(B_1,B_2) = \cost(M_{1,2})$ and $d_B(B_2,B_3) = \cost(M_{2,3})$.
Now suppose that the triangle inequality is violated for the triple $B_1,B_2,B_3$, i.e.\ $d_B(B_1,B_3)>d_B(B_1,B_2)+d_B(B_2,B_3)$.

To show a contradiction, we construct another mapping $M_{1,3}$ between $B_1$ and $B_3$ in the following way.
We define $(a,c) \in M_{1,3}$ for all branches $a \in B_1, c \in B_3$ where there is a branch $b \in B_2$ such that $(a,b) \in M_{1,2}$ and $(b,c) \in M_{2,3}$.
Now consider the costs of the new mapping
$$\cost(M_{1,3}) = \sum_{(a,c) \in \overline{M_{1,3}}} \cost(a,c),$$
and specifically the single terms in the sum, pairs $p$ with costs $\cost(p) = \cost(a,c)$.
If $a \in B_1$ and $c \in B_3$, then there is a branch $b \in B_2$ such that $(a,b) \in M_{1,2}$ and $(b,c) \in M_{2,3}$.
Then we know that $\cost(a,b)$ and $\cost(b,c)$ are terms in the sums of $\cost(M_{1,2})$ and $\cost(M_{2,3})$ and also that $\cost(a,b)+\cost(b,c)>\cost(a,c)=\cost(p)$ due to the metric property of the basic distance function.
If $c \notin B_3$ (i.e.\ $c=\bot$), then either $(a,\bot) \in \overline{M_{1,2}}$ holds or $(a,b) \in \overline{M_{1,2}}$ and $(b,\bot) \in \overline{M_{2,3}}$.
In both cases we know that $\cost(a,\bot) \geq \cost(p)$ and $\cost(a,b)+\cost(b,\bot) \geq \cost(p)$.
If $a \notin B_1$ (i.e.\ $a=\bot$), then either $(\bot,c) \in \overline{M_{2,3}}$ holds or $(\bot,b) \in \overline{M_{1,2}}$ and $(b,c) \in \overline{M_{2,3}}$.
Again, in both cases we know that $\cost(\bot,c) \geq \cost(p)$ and $\cost(a,b)+\cost(b,\bot) \geq \cost(p)$.
In total, we can conclude that $\cost(M_{1,3}) \leq \cost(M_{1,2})+\cost(M_{2,3})$ and therefore $\cost(M_{1,3}) \leq d_B(M_1,M_3)$, which leads to a contradiction.
\end{proof}

\section{Proof of Lemma~2}
\label{appendix:rec_inner_proof}

For completeness, we state the lemma again:

Given two merge trees $T_1,T_2$ with roots $v_1,u_1$, let $v_2,u_2$ be the unique children of the two roots and let those have children $v_3,v_4$ and $u_3,u_4$. Let $T_1'$ be the subtree rooted in $(v_2,v_3)$, $T_1''$ rooted in $(v_2,v_4)$, $T_2'$ rooted in $(u_2,u_3)$ and $T_2''$ in $(u_2,u_4)$. Let $M$ be an optimal branch mapping for $T_1$ and $T_2$. Then, for the optimal cost of $M$ it holds that:
\begin{itemize}
    \item $d_B(T_1,T_2) = d_B(T_1',\bot) + d_B(T_1-T_1',T_2)$ or
    \item $d_B(T_1,T_2) = d_B(\bot,T_2') + d_B(T_1,T_2-T_2')$ or
    \item $d_B(T_1,T_2) = d_B(T_1'',\bot) + d_B(T_1-T_1'',T_2)$ or
    \item $d_B(T_1,T_2) = d_B(\bot,T_2'') + d_B(T_1,T_2-T_2'')$ or
    \item $d_B(T_1,T_2) = d_B(T_1',T_2') + d_B(T_1-T_1',T_2-T_2')$ or
    \item $d_B(T_1,T_2) = d_B(T_1'',T_2'') + d_B(T_1-T_1'',T_2-T_2'')$ or
    \item $d_B(T_1,T_2) = d_B(T_1',T_2'') + d_B(T_1-T_1',T_2-T_2'')$ or
    \item $d_B(T_1,T_2) = d_B(T_1'',T_2') + d_B(T_1-T_1'',T_2-T_2')$
\end{itemize}

\begin{proof}
To prove this lemma, we will do two nested case distinctions. First, consider the optimal mapping $M$. It is build upon two branch decompositions $B_1 \in B(T_1)$ and $B_2 \in B(T_2)$. Since the roots have degree one, in both branch decompositions the main branch has to go through $v_2$ and $u_2$. Therefore, in $B_1$ either the edge $(v_2,v_3)$ or the edge $(v_2,v_4)$ is contained in the main branch, and in $B_2$ either $(u_2,u_3)$ or $(u_2,u_4)$.

Let us first consider the case that $(v_2,v_4)$ and $(u_2,u_4)$ are part of the main branches of $B_1$ and $B_2$. Then there is a branch $a=v_2v_3...v_{\text{l}} \in B_1$ and a branch $b=u_2u_3...u_{\text{l}} \in B_2$ and we know that $B_1[T_1']$ and $B_2[T_2']$ exist. For the mapping $M$ we have the following options:
\begin{itemize}
    \item[(A)] $(a,b) \in M$: In this case the subtrees $T_1'$ and $T_2'$ are mapped to each other, i.e.\ $M[B_1[T_1']] = M[B_2[T_2']]$ and it holds that $\cost(M) = \cost(M[B_1[T_1']]) + \cost(M[B_1[T_1-T_1']])$ which then means that $M[B_1[T_1']]$ and $M[B_1[T_1-T_1']]$ are optimal mappings between $T_1'$ and $T_2'$ and between $T_1-T_1'$ and $T_2-T_2'$, as otherwise they could be replaced in $M$ by better mappings contradicting the optimality of $M$.
    \item[(B)] $(a,b') \in M$, but $b' \neq b$ is not a branch of $T_2'$: If $a$ is mapped to a branch other than $b$ this means that $b$ and all its descendant branches are not mapped in $M$ as this would either contradict the order condition or the parent condition of branch mappings (conditions~3 and~4, Definition~2).
    \item[(C)] $(a',b) \in M$, but $a' \neq a$ is not a branch of $T_1'$: If $b$ is mapped to a branch other than $a$ this means that $a$ and all its descendant branches are not mapped in $M$ as this would either contradict the order condition  or the parent condition of branch mappings  (conditions~3 and~4, Definition~2).
    \item[(D)] There is no branch $a'$ of $T_1'$ and no branch $b'$ of $T_2$ with ${(a,b') \in M}$ or ${(a',b') \in M}$.
\end{itemize}
\noindent
In each case, we can conclude the following for $M$ and $\cost(M)=d_B(T_1,T_2)$:
\begin{itemize}
    \item[(A)] Since $M[B_1[T_1']]$ and $M[B_1[T_1-T_1']]$ are optimal, we can conclude that $$d_B(T_1,T_2) = d_B(T_1',T_2') + d_B(T_1-T_1',T_2-T_2').$$
    \item[(B)] Since $T_2'$ is not covered by $M$, we can conclude that it is mapped to $\bot$ in $\overline{M}$ and therefore $$d_B(T_1,T_2) = d_B(\bot,T_2') + d_B(T_1,T_2-T_2').$$
    \item[(C)] Since $T_1'$ is not covered by $M$, we can conclude that it is mapped to $\bot$ in $\overline{M}$ and therefore $$d_B(T_1,T_2) = d_B(T_1',\bot) + d_B(T_1-T_1',T_2).$$
    \item[(D)] Since $T_1'$ and $T_2'$ are not covered by $M$, we can conclude that they are mapped to $\bot$ in $\overline{M}$ and therefore $$d_B(T_1,T_2) = d_B(T_1',\bot) + d_B(T_1-T_1',T_2)$$ and $$d_B(T_1,T_2) = d_B(\bot,T_2') + d_B(T_1,T_2-T_2')$$ both hold.
\end{itemize}
Together with the other cases for $B_1$ and $B_2$, which yield analogous equations, we get all the terms from above.
\end{proof}

\section{Proof of Lemmma~3}
\label{appendix:rec_delete_proof}

For completeness, we state the lemma again:

Given a merge tree $T_1$ with root $v_1$, let $v_2$ be the unique children of the root and let it have children $v_3,v_4$. Let $T_1'$ be the subtree rooted in $(v_2,v_3)$ and $T_1''$ rooted in $(v_2,v_4)$. Then, for the optimal cost of $M_\bot(T_1)$ it holds that:
\begin{itemize}
    \item $d_B(T_1,\bot) = d_B(T_1',\bot) + d_B(T_1-T_1',\bot)$ or
    \item $d_B(T_1,\bot) = d_B(T_1'',\bot) + d_B(T_1-T_1'',\bot)$,
\end{itemize}
\noindent
and $d_B(\bot,T_2)$ decomposes symmetrically.

\begin{proof*}
If one of the trees is empty, then we only have to find the branch decomposition for the other one that has the lowest cost. Locally, we only have two cases. Either $v_3$ lies on the main branch or $v_4$ does. In either case, the optimal branch decomposition $B_1 \in B(T_1)$ decomposes into those of the trees listed in the lemma, as seen in Section~3.1. If $v_3$ is on the main branch of $B_1$, then
$$\cost(B_1,\bot) = \cost(B_1[T_1''],\bot) + \cost(B_1[T_1-T_1''],\bot)$$
and if $v_4$ is on the main branch of $B_1$, then
$$\cost(B_1,\bot) = \cost(B_1[T_1'],\bot) + \cost(B_1[T_1-T_1'],\bot).\hspace{10pt}\proofbox$$
\end{proof*}

\section{Proof of Lemmma~4}
\label{appendix:rec_leaf_proof}

For completeness, we state the lemma again:

Given two merge trees $T_1 = (\{v_1,v_2\},\{(v_2,v_1)\})$, $T_2 = (\{u_1,u_2\},\{(u_2,u_1)\})$ that only have one branch, the following holds for $d_B$:
\begin{itemize}
    \item $d_B(T_1,\bot) = \cost(v_1v_2,\bot)$,
    \item $d_B(\bot,T_2) = \cost(\bot,u_1u_2)$ and
    \item $d_B(T_1,T_2) = \cost(v_1v_2,u_1u_2)$.
\end{itemize}

\begin{proof}
For $d_B(T_1,\bot)$ and $d_B(\bot,T_2)$ this follows directly from the definition of $M_\bot(T_1),M_\bot(T_2)$ and the uniqueness of the branch decompositions.

For $d_B(T_1,T_2)$ it follows directly from the requirement to map the main branches (condition~2 in Definiton~2) and the uniqueness of the branch decompositions.
\end{proof}

\section{Algorithm}
\label{appendix:algo}

We now give the pseudo code for an algorithm computing the distance $d_B$.
Again, we only show the recursion for binary trees, but it is easy to adapt for trees of arbitrary degree (see below).
Algorithm~\ref{alg:branchDist} shows the recursive procedure without memoization.

\SetKwComment{Comment}{/* }{ */}
\setlength{\algomargin}{5pt}
\SetAlgoVlined
\begin{algorithm}
\label{alg:branch_dist}
\caption{Computing the branch mapping distance}\label{alg:branchDist}
\SetKwFunction{branchDist}{$d_B$}
\SetKwProg{Fn}{Function}{:}{}
\DontPrintSemicolon
\Fn{\branchDist{$n_1,p_1,n_2,p_2$}}{
\If{$n_1=\bot$ and $n_2$ is a leaf}{
  \Return $\cost(\bot,p_2...n_2)$\;
}
\If{$n_2=\bot$ and $n_1$ is a leaf}{
  \Return $\cost(p_1...n_1,\bot)$\;
}
\If{$n_1$ is a leaf and $n_2$ is a leaf}{
  \Return $\cost(p_1...n_1,p_2...n_2)$\;
}
\If{$n_1=\bot$ and $n_2$ is an inner node}{
  Let $c_{2,1},c_{2,2}$ be the children of $n_2$\;
  \Return~$\min \begin{cases}
           \branchDist(\bot,\bot,c_{2,1},p_2) + \branchDist(\bot,\bot,c_{2,2},n_2)\\
           \branchDist(\bot,\bot,c_{2,2},p_2) + \branchDist(\bot,\bot,c_{2,1},n_2)
           \end{cases}$\;
}
\If{$n_2=\bot$ and $n_1$ is an inner node}{
  Let $c_{1,1},c_{1,2}$ be the children of $n_1$\;
  \Return~$\min \begin{cases}
           \branchDist(c_{1,1},p_1,\bot,\bot) + \branchDist(c_{1,2},n_1,\bot,\bot)\\
           \branchDist(c_{1,2},p_1,\bot,\bot) + \branchDist(c_{1,1},n_1,\bot,\bot)
           \end{cases}$\;
}
\If{$n_1$ is a leaf and $n_2$ is an inner node}{
  Let $c_{2,1},c_{2,2}$ be the children of $n_2$\;
  \Return~$\min \begin{cases}
           \branchDist(n_1,p_1,c_{2,1},p_2) + \branchDist(\bot,\bot,c_{2,2},n_2)\\
           \branchDist(n_1,p_1,c_{2,2},p_2) + \branchDist(\bot,\bot,c_{2,1},n_2)
           \end{cases}$\;
}
\If{$n_2$ is a leaf and $n_1$ is an inner node}{
  Let $c_{1,1},c_{1,2}$ be the children of $n_1$\;
  \Return~$\min \begin{cases}
           \branchDist(c_{1,1},p_1,n_2,p_2) + \branchDist(c_{1,2},n_1,\bot,\bot)\\
           \branchDist(c_{1,2},p_1,n_2,p_2) + \branchDist(c_{1,1},n_1,\bot,\bot)
           \end{cases}$\;
}
\If{$n_1$ is an inner node and $n_2$ is an inner node}{
  Let $c_{1,1},c_{1,2}$ be the children of $n_1$\;
  Let $c_{2,1},c_{2,2}$ be the children of $n_2$\;
  \Return~$\min \begin{cases}
           \branchDist(c_{1,1},p_1,n_2,p_2) + \branchDist(c_{1,2},n_1,\bot,\bot)\\
           \branchDist(c_{1,2},p_1,n_2,p_2) + \branchDist(c_{1,1},n_1,\bot,\bot)\\
           \branchDist(n_1,p_1,c_{2,1},p_2) + \branchDist(\bot,\bot,c_{2,2},n_2)\\
           \branchDist(n_1,p_1,c_{2,2},p_2) + \branchDist(\bot,\bot,c_{2,1},n_2)\\
           \branchDist(c_{1,1},p_1,c_{2,1},p_2) + \branchDist(c_{1,2},n_1,c_{2,2},n_2)\\
           \branchDist(c_{1,1},n_1,c_{2,1},n_2) + \branchDist(c_{1,2},p_1,c_{2,2},p_2)\\
           \branchDist(c_{1,2},p_1,c_{2,1},p_2) + \branchDist(c_{1,1},n_1,c_{2,2},n_2)\\
           \branchDist(c_{1,2},n_1,c_{2,1},n_2) + \branchDist(c_{1,1},p_1,c_{2,2},p_2)\\
           \end{cases}$\;
}
}\end{algorithm}

\begin{theorem} Let $T_1,T_2$ be abstract merge trees with roots $r_1,r_2$ and let $r_1',r_2'$ be their unique child. Then $d_B(r_1,r_1',r_2,r_2')$ in Algorithm~\ref{alg:branchDist} computes the branch mapping distance between $T_1$ and $T_2$.
\end{theorem}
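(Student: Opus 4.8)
The plan is to prove, by induction on the total number of edges of the two trees involved, a statement slightly stronger than the theorem: for every recursive state $(n_1,p_1,n_2,p_2)$ that occurs in a run of Algorithm~\ref{alg:branchDist}, the returned value equals $d_B(\hat T_1,\hat T_2)$, where $\hat T_i$ is the (possibly empty) merge tree that the pair $(n_i,p_i)$ encodes — namely $\bot$ when $n_i=\bot$, and otherwise the merge tree that is in the middle of being decomposed, whose committed initial branch segment runs between $p_i$ and $n_i$ and which still carries the subtree below. The theorem is then the instance given by the initial call: since the degree-one roots $r_1,r_2$ pin down the initial edge of the main branch of $T_1,T_2$, there $(\hat T_1,\hat T_2)=(T_1,T_2)$.

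The first task is to make this state-to-subtree dictionary precise and check that it is compatible with the two recursion patterns of the pseudocode. Keeping the $p$-argument while moving the $n$-argument to a child grows the committed branch segment, hence deletes the other child's subtree and replaces $\hat T_i$ by $\hat T_i-\hat T_i'$ or $\hat T_i-\hat T_i''$; replacing the $p$-argument by the current $n$ and moving to a child instead opens the side branch at that saddle, hence passes to $\hat T_i'$ or $\hat T_i''$ — with $\hat T_i',\hat T_i''$ the subtrees rooted at the two edges below $n_i$, exactly as in Lemmas~2 and~3. One then checks that the eight options minimised over in the last case of the pseudocode are precisely $d_B$ of the eight right-hand sides of Lemma~2 (for instance $d_B(c_{1,1},p_1,c_{2,1},p_2)+d_B(c_{1,2},n_1,c_{2,2},n_2)$ becomes $d_B(\hat T_1-\hat T_1'',\hat T_2-\hat T_2'')+d_B(\hat T_1'',\hat T_2'')$, the term in which $\hat T_1''$ is matched with $\hat T_2''$), that the two options in the non-base cases where exactly one of $n_1,n_2$ equals $\bot$ reproduce the two options of Lemma~3 and its mirror, and that every recursive call strictly decreases the total edge count, so the induction is well founded.

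With the dictionary fixed, the base cases (each of $n_1,n_2$ a leaf or $\bot$) follow directly from Lemma~4, the uniqueness of the branch decomposition of a one-branch tree, and the definition of $M_\bot$: the returned $\cost(p_1\dots n_1,p_2\dots n_2)$ and its one-sided variants are exactly the values those statements give. For the inductive step I would split on the node types. When one side is $\bot$ and the other an inner node, Lemma~3 (or its mirror) says $d_B(\bot,\hat T_2)$ equals one of the two sums being minimised, the induction hypothesis identifies those sums with the two recursive calls, and since any branch decomposition of $\hat T_2$ restricts to the two pieces, each sum is also an upper bound for $d_B(\bot,\hat T_2)$; hence the minimum equals it. When both sides are inner nodes the argument is the same with Lemma~2 in place of Lemma~3: $d_B(\hat T_1,\hat T_2)$ is one of the eight options, hence at least their minimum, while gluing optimal branch mappings of the two pieces of any option into one branch mapping of $(\hat T_1,\hat T_2)$ shows that option is at least $d_B(\hat T_1,\hat T_2)$, so again the minimum equals it. The two mixed cases (one leaf, one inner node) are the specialisation of this in which one tree consists only of its main branch, so the options of Lemma~2 that match both side subtrees become vacuous and precisely the two options kept in the pseudocode remain.

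The step I expect to require the most care — rather than genuine difficulty — is the upper-bound direction just used: for each decomposition appearing in the recursion, optimal branch mappings of the two pieces must glue into a valid branch mapping of the whole with additive cost, so that the recursive minimum cannot fall below $d_B$. Checking this reduces to verifying that the order, parent and main-branch conditions of Definition~2 survive the gluing, which rests on the observation — already used implicitly in the proof of Lemma~2 — that in any branch decomposition of the larger tree the branch carrying a side subtree is a child of the branch that runs through the corresponding saddle. The other place where care is needed is simply pinning down the state-to-subtree dictionary of the first paragraph; once that is fixed, what remains is the routine bookkeeping of matching the eight pseudocode options to Lemma~2 and the base cases to Lemma~4.
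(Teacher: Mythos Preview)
Your proposal is correct and follows essentially the same approach as the paper: set up the dictionary between algorithm states $(n_i,p_i)$ and the subtrees $\hat T_i,\hat T_i',\hat T_i'',\hat T_i-\hat T_i',\hat T_i-\hat T_i''$ of Lemmas~2--4, and then observe that the algorithm's case split reproduces exactly those recursions. The paper's proof is a two-sentence sketch that just states the dictionary; your version is more thorough in that you make the induction explicit and, in particular, isolate the upper-bound direction (gluing optimal mappings of the pieces into a valid mapping of the whole) that the paper leaves implicit in its appeal to the lemmas.
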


\begin{proof}
Algorithm~\ref{alg:branchDist} strictly resembles the recursion from Lemmas~2-4, where $T_1$ is represented by $(n_1,p_1)$, $T_1'$ by $(c_{1,1},n_1)$, $T_1-T_1'$ by $(c_{1,1},p_1)$, $T_1''$ by $(c_{1,2},n_1)$, $T_1-T_1'$ by $(c_{1,2},p_1)$ and $T_2,T_2',T_2'',T_2-T_2',T_2-T_2''$ symmetrically. This correspondence then also yields the correctness of the initial call $d_B(r_1,r_1',r_2,r_2')$ for $T_1,T_2$.
\end{proof}

\paragraph*{Adaption for Arbitrary Degree.}
For the purpose of readability, we omitted a detailed description of the algorithm for unbounded degree, but we now give a short discussion on which adaptations have to be made.
To adapt the algorithm for trees of arbitrary degree, we have to replace the last four options in line 23 through an optimal matching of the subtrees, similar to the algorithm for the constrained edit distance by Zhang~\cite{DBLP:journals/algorithmica/Zhang96}. For each pair of possible main branches in the two trees, we have to find the optimal matching between the remaining children. For the constrained edit distance, this adds a factor of $(\deg_1+\deg_2) \cdot \log(\deg_1+\deg_2)$, but since we still have to find the optimal pair of main branches, we get another factor of $\deg_1 \cdot \deg_2$. To obtain a more precise upper bound, one would have to study how these costs amortize over the whole tree, since there are obvious limits to the sum of degrees (e.g.\ there can only be a constant number of degree $\Omega(n)$ nodes), but we omit this analysis as merge trees of high degree do usually not appear in realistic scenarios.

\section{Experimental Runtimes}
\label{appendix:applications}

Using a C++ implementation of Algorithm~\ref{alg:branchDist}, we perform benchmarks on the datasets from Section~5. We use different simplification thresholds for the heated cylinder dataset, the synthetic outlier ensemble, and the vortex street dataset. For the latter two, we first add artificial noise. The observed running times can be seen in Table~\ref{tab:runtimes}. For each dataset, we compute the branch mapping distance for 100 random pairs. The table shows the average sizes of the merge trees as well as the average running times over all 100 pairs. We expect that further optimization of our code (e.g. parallelization) could yield practical running times for even larger trees. We leave this for future work.

\begin{table*}[]
\centering
\begin{tabular}{l|ccccccccc}
     & HC (10) & O (20) & VS (68) & VS (135) & O (196) & VS (213) & HC (233) & VS (259) & O (356)  \\ \hline
  Runtime & $1.5\cdot10^{-5}$s & $9.5\cdot10^{-5}$s & $3\cdot10^{-3}$s & $7\cdot10^{-2}$s & $3.7\cdot10^{0}$s & $1.1\cdot10^{0}$s & $6.5\cdot10^{0}$s & $1.5\cdot10^{0}$s & $4.2\cdot10^{1}$s  \\
\end{tabular}
\caption{Running times of the branch mapping distance on the datasets from Section~5: The synthetic outlier dataset (O), the heated cylinder (HC) and the vortex strees (VS). We used different versions of the datasets with different levels of noise and simplification. The sizes of the merge trees are shown in brackets.}
\label{tab:runtimes}
\end{table*}

\section{Applications}
\label{appendix:applications}

We now show the more detailed results on some of the datasets from Section~5.
We start with renderings of the full synthetic ensembles in Figures~\ref{fig:ensemble_fourpeaks} and~\ref{fig:results_outlier_withDendogram}.
Furthermore, in these figures the clustermaps are shown together with dendograms of the underlying hierarchical clusterings.

To show that this example is independent of the chosen parameters (Wasserstein distance and squared costs), we provide the distance matrices for the branch mappings and constrained edit mappings also for the $L_\infty$- and overhang cost from~\cite{DBLP:journals/tvcg/SridharamurthyM20} in Figure~\ref{fig:results_outlier_otherbase}.
Furthermore, Figure~\ref{fig:results_outlier2} shows another ensemble and the distance matrices of the branch mapping and constrained edit distance, where we did not use the squared costs.
We were unable to do these extra comparisons for the Wasserstein distance, as its TTK-implementation does not allow for a base metric change.

Figure~\ref{fig:heatmap_weinkauf_full} shows the complete distance matrix for the dataset from Section~5.2 and Figure~\ref{fig:comparison_vortexStreet} a direct comparison with our implementation of the distance by Sridharamurthy et al.~\cite{DBLP:journals/tvcg/SridharamurthyM20} on the same dataset.

In Figure~\ref{fig:comparison_iondensity}, we provide a comparison of our tracking results on the ion density dataset with the results of the Wasserstein distance on the same dataset provided by Pont et al.\ in~\cite{DBLP:journals/corr/abs-2107-07789}.

We also provide the distance matrices for noisier versions of the outlier ensemble and the vortex street dataset. For the outlier example, we did this on a noisy version of the ensemble from Figure~\ref{fig:results_outlier2} and a new outlier ensemble that was directly created with noise. Figures~\ref{fig:dm_outlier2_noise} and~\ref{fig:dm_outlierwithnoise} show that the outlier is still visible if we add artificial noise to the synthetic ensemble, however, not as clearly. The same holds for the periodic pattern in the vortex street dataset with artificial noise, which can be seen in Figures~\ref{fig:dm_vortexStreet_noise} and~\ref{fig:dm_vortexStreet_noise_wassersteinbase}. Here, the difference to the normal version is more significant. The periodicity detection is possible, but clearly impeded. We should note that an improved robustness against this kind of noise is not the goal of our approach, but rather robustness against \emph{structural} perturbations like in the original synthetic ensemble, which was its motivation. We included these examples with noise to show that our method has \emph{some} robustness agaist this type of noise, comparable to other merge tree-based methods, but it is not significantly improved. In fact, it might even perform worse in some cases due to the generally smaller distances.

\begin{figure}
    \centering
    \includegraphics[width=0.49\linewidth]{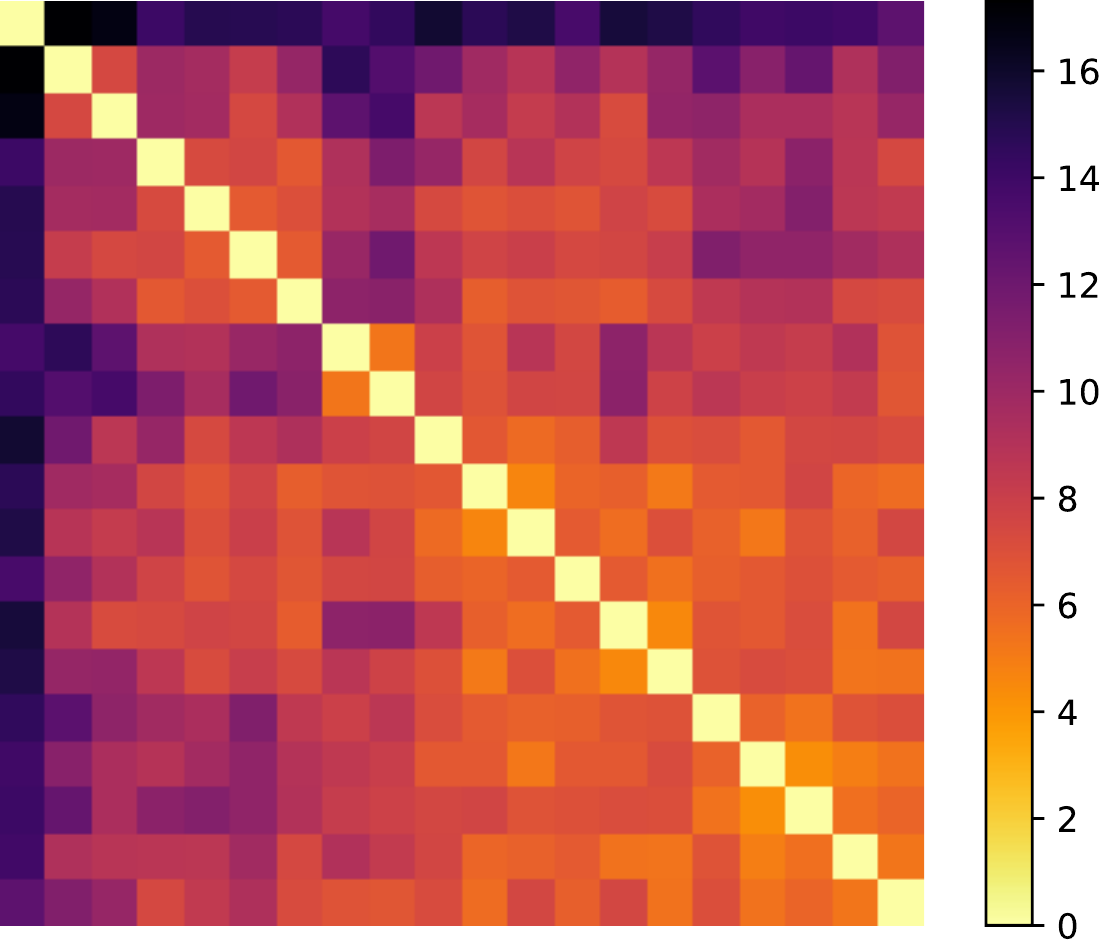}
    \includegraphics[width=0.49\linewidth]{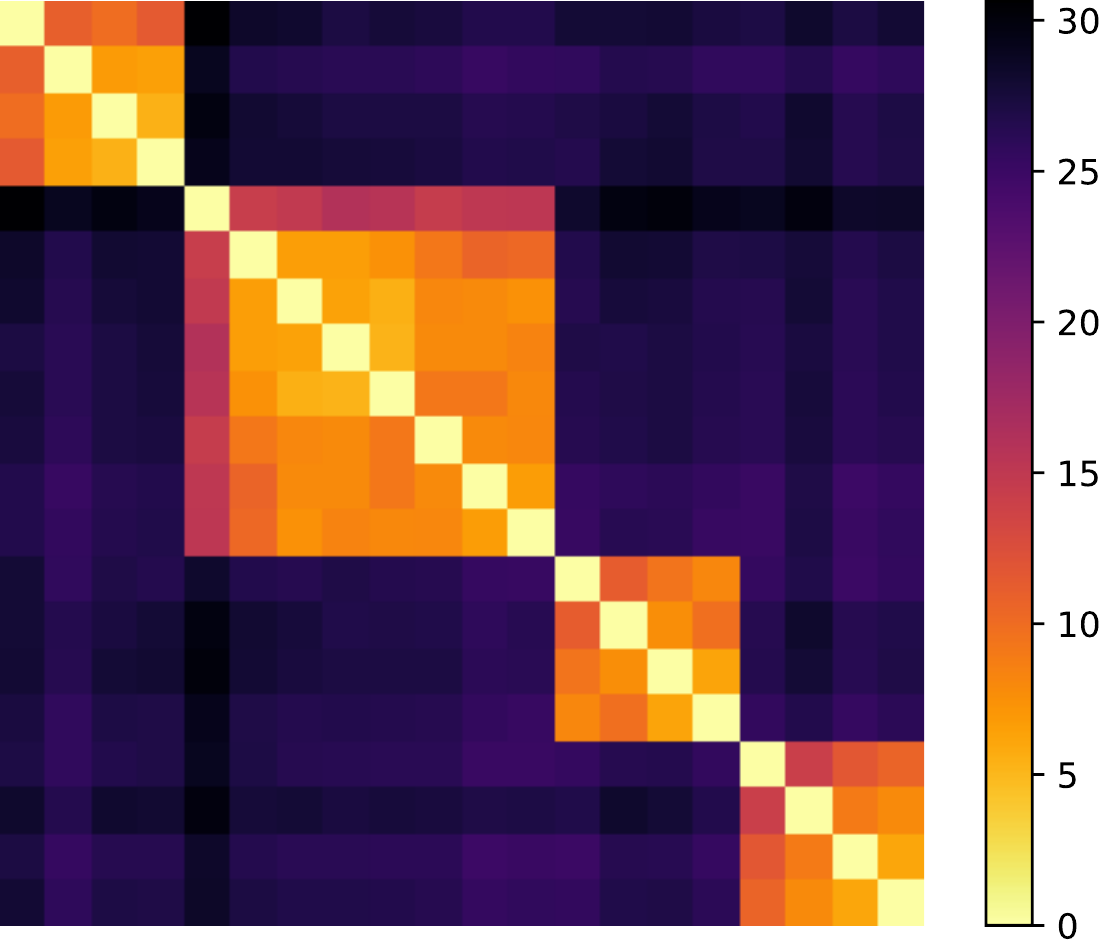}
    \includegraphics[width=0.49\linewidth]{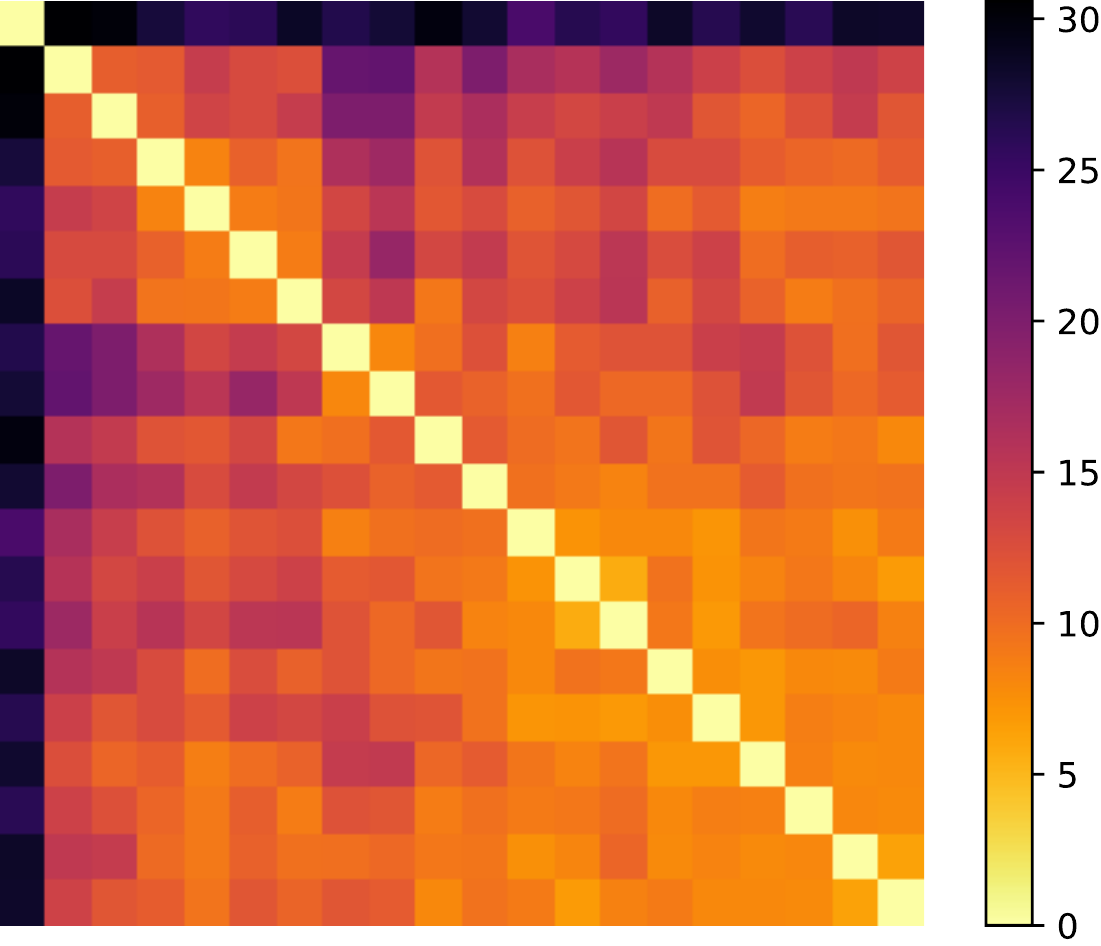}
    \includegraphics[width=0.49\linewidth]{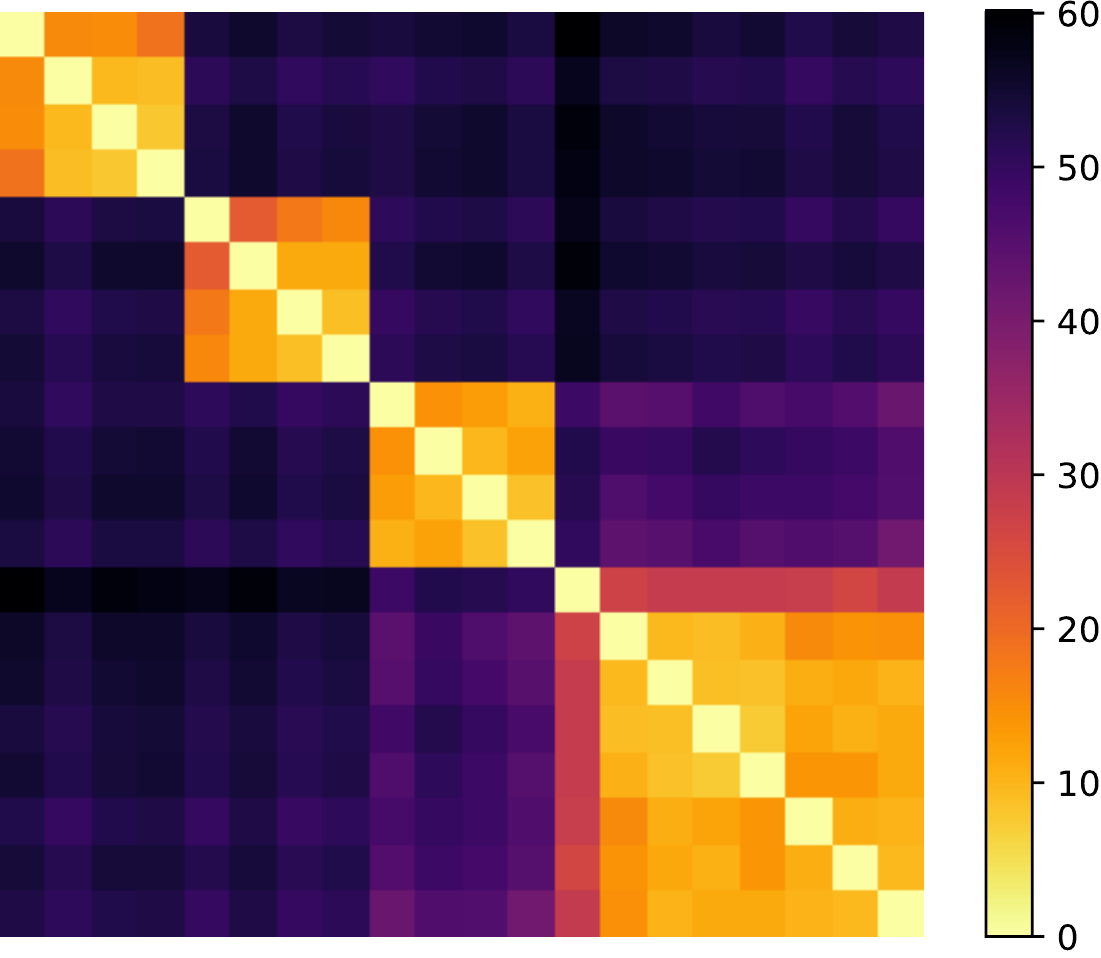}
    \caption{Comparison of branch mapping distance and constrained edit distance on the outlier ensemble using different base metrics. The distance matrices for the branch mapping distance can be seen on the left and for the constrained edit distance on the right. The top matrices represent the distances using the $L_\infty$ base metric and the bottom matrices the distances using the Overhang base metric.}
    \label{fig:results_outlier_otherbase}
\end{figure}

\begin{figure*}
    \centering
    \begin{subfigure}[c]{0.75\linewidth}
    \includegraphics[width=\linewidth]{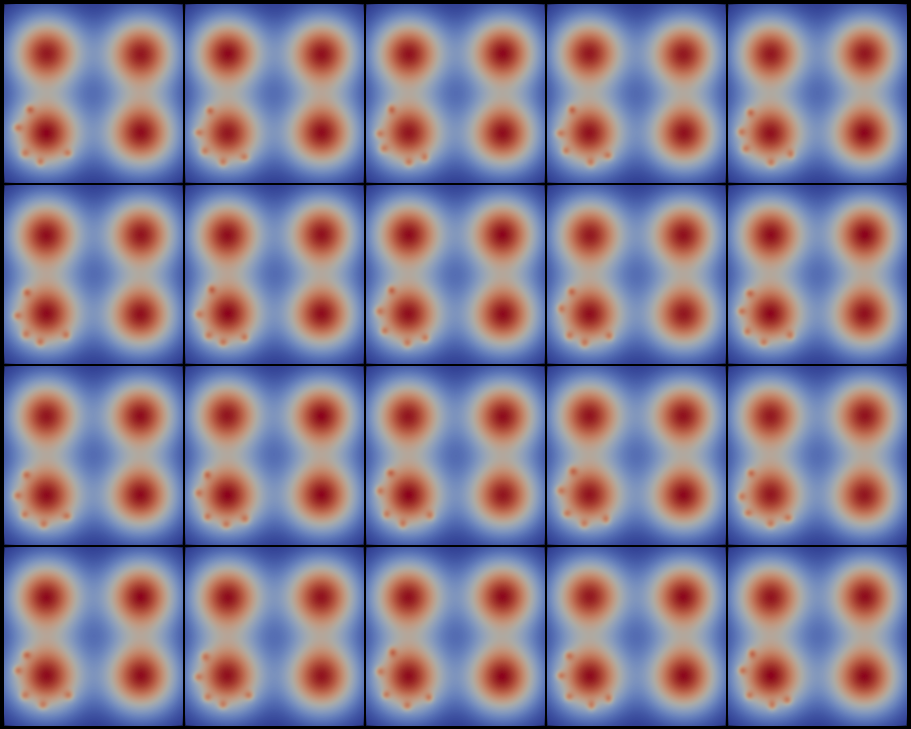}
    \end{subfigure}
    \begin{subfigure}[c]{0.24\linewidth}
        \centering
        \includegraphics[width=\textwidth]{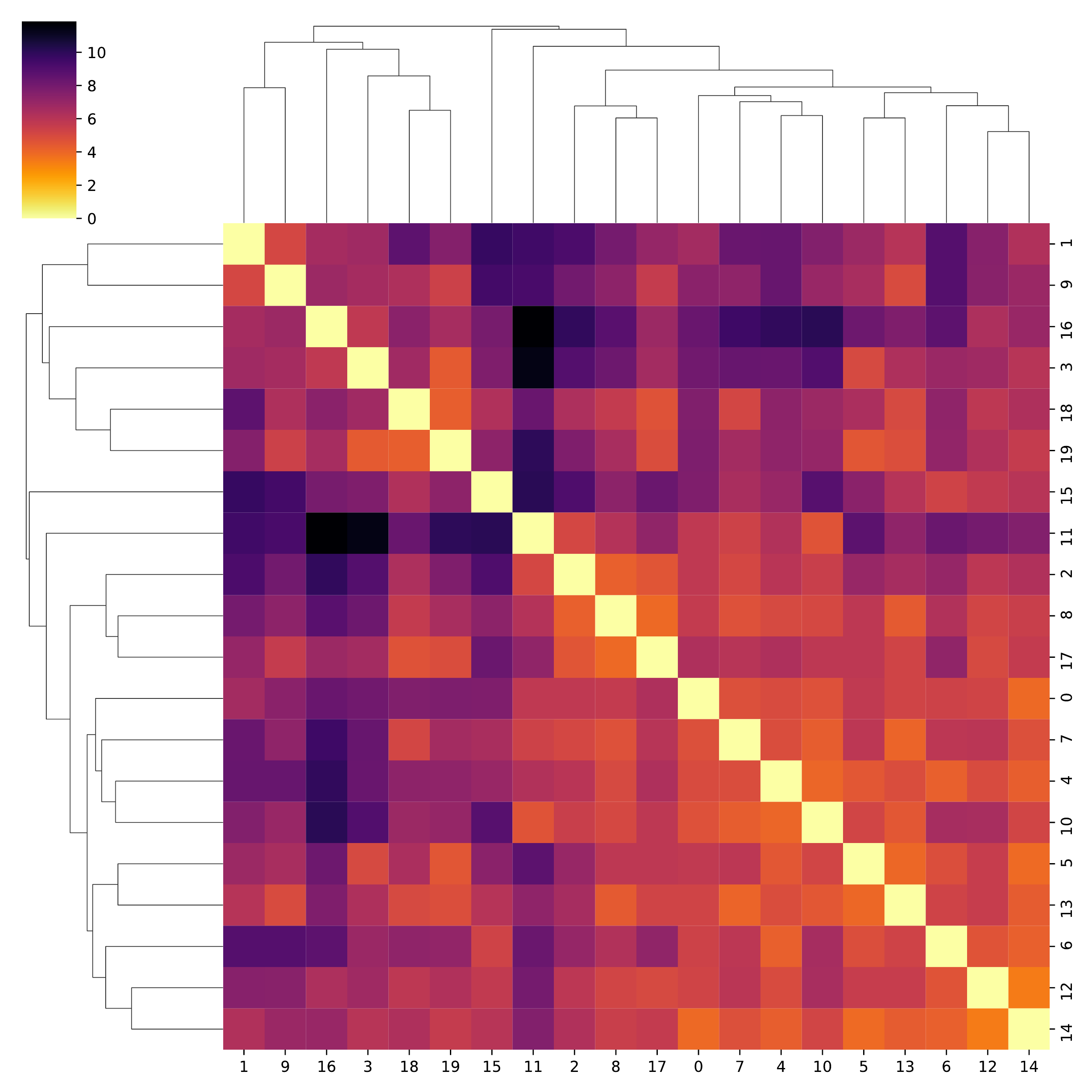}
        
        \includegraphics[width=\textwidth]{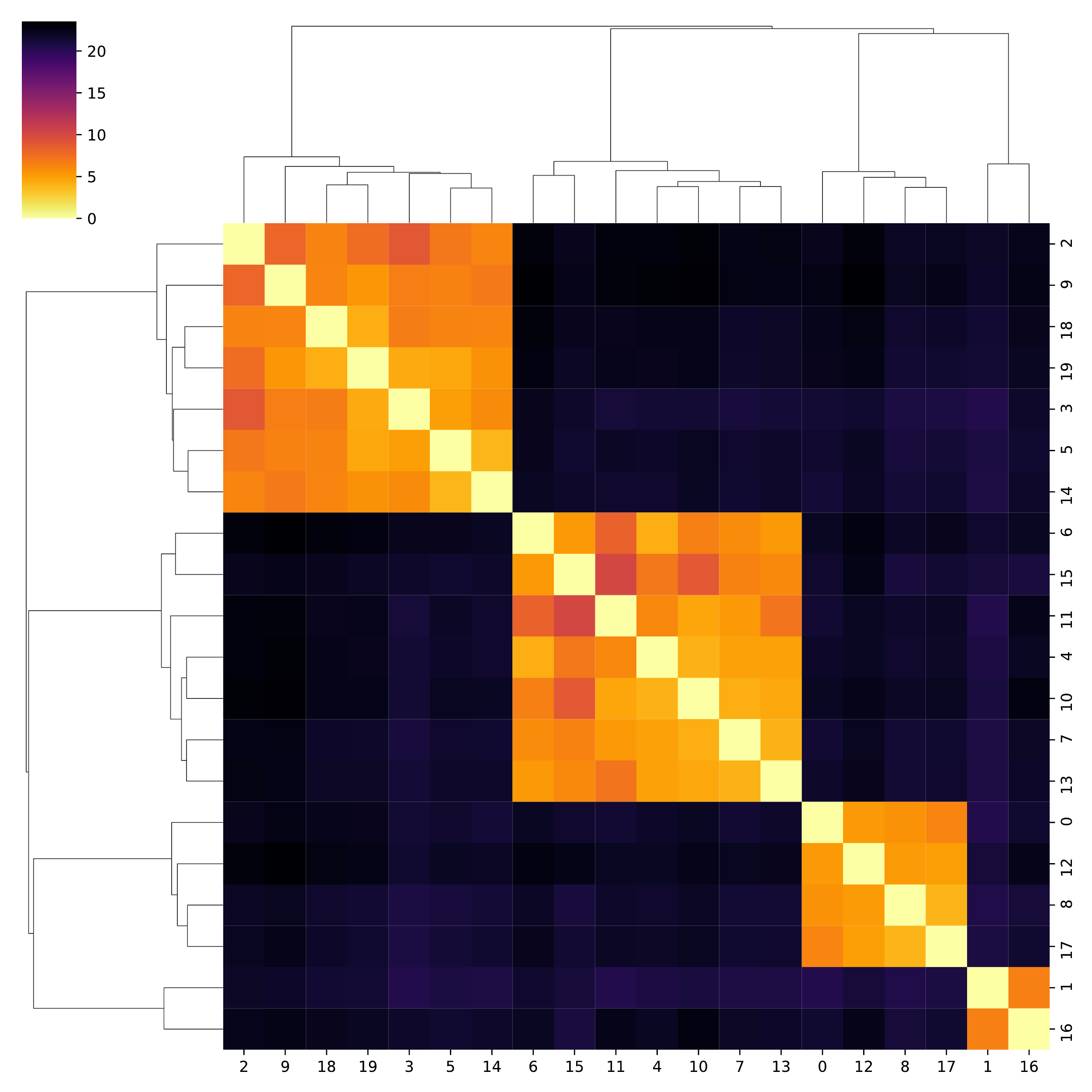}
        
        \includegraphics[width=\textwidth]{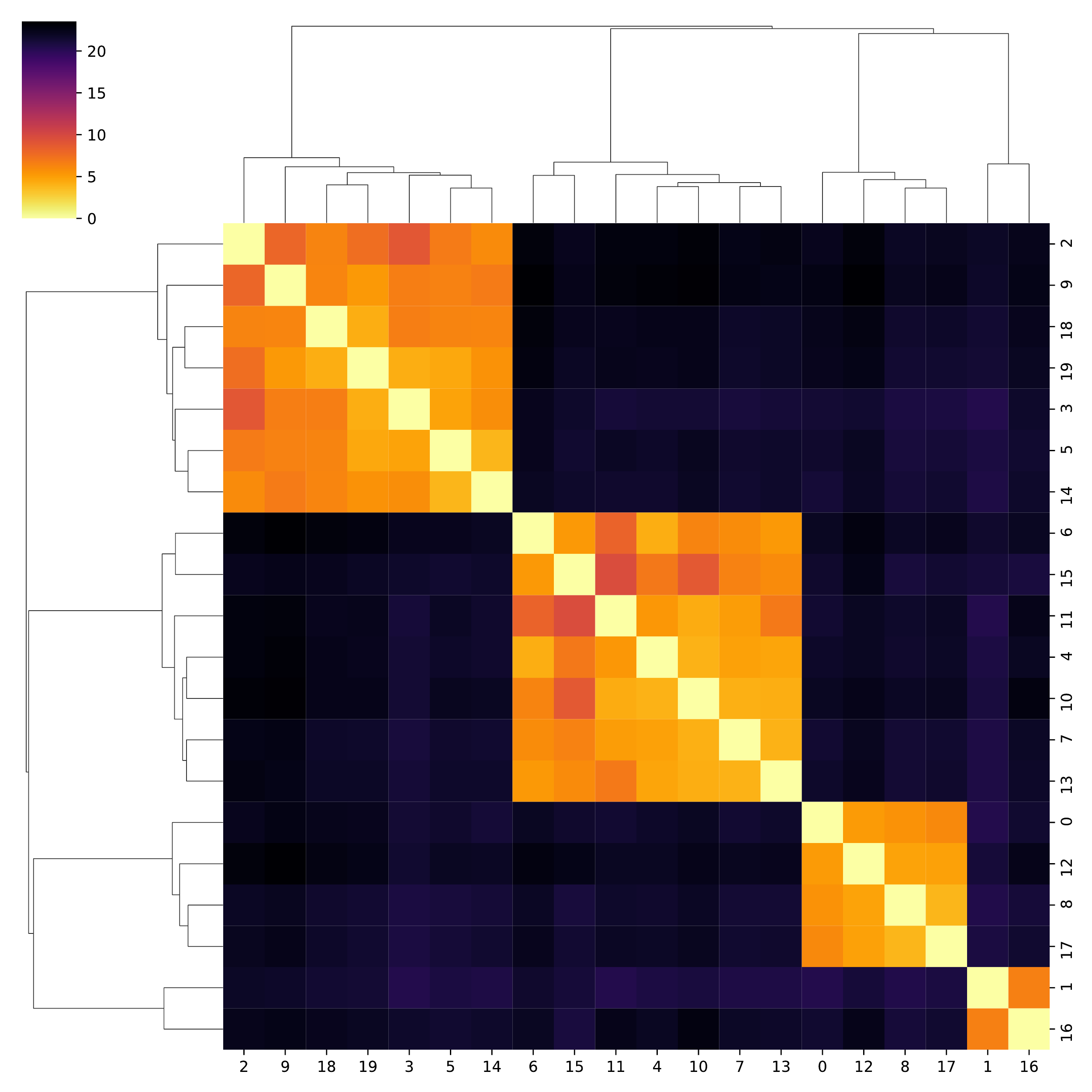}
    \end{subfigure}
    \caption{The example ensemble (left) with a comparison of branch mapping distance (top right), constrained edit distance (center right) and Wasserstein distance (bottom right). Heatsmaps for distance matrices of the three distance measures are shown where the rows and columns are ordered by the clustermap function of the seaborn library to clearly identify clusters.}
    \label{fig:ensemble_fourpeaks}
\end{figure*}

\begin{figure*}
    \centering
    \begin{subfigure}[c]{0.75\linewidth}
    \includegraphics[width=\linewidth]{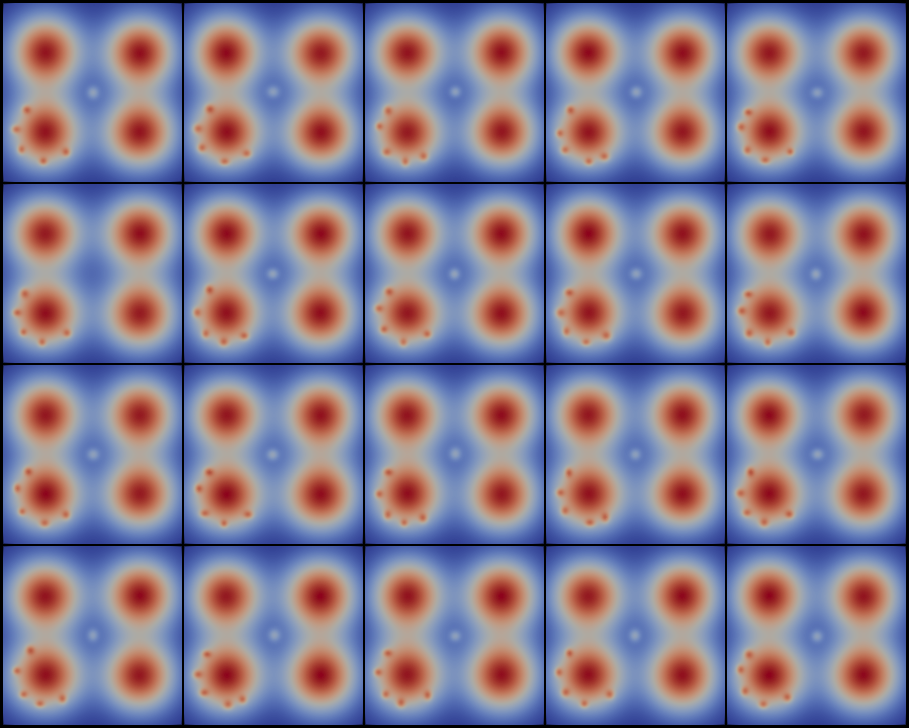}
    \end{subfigure}
    \begin{subfigure}[c]{0.24\linewidth}
        \centering
        \includegraphics[width=\textwidth]{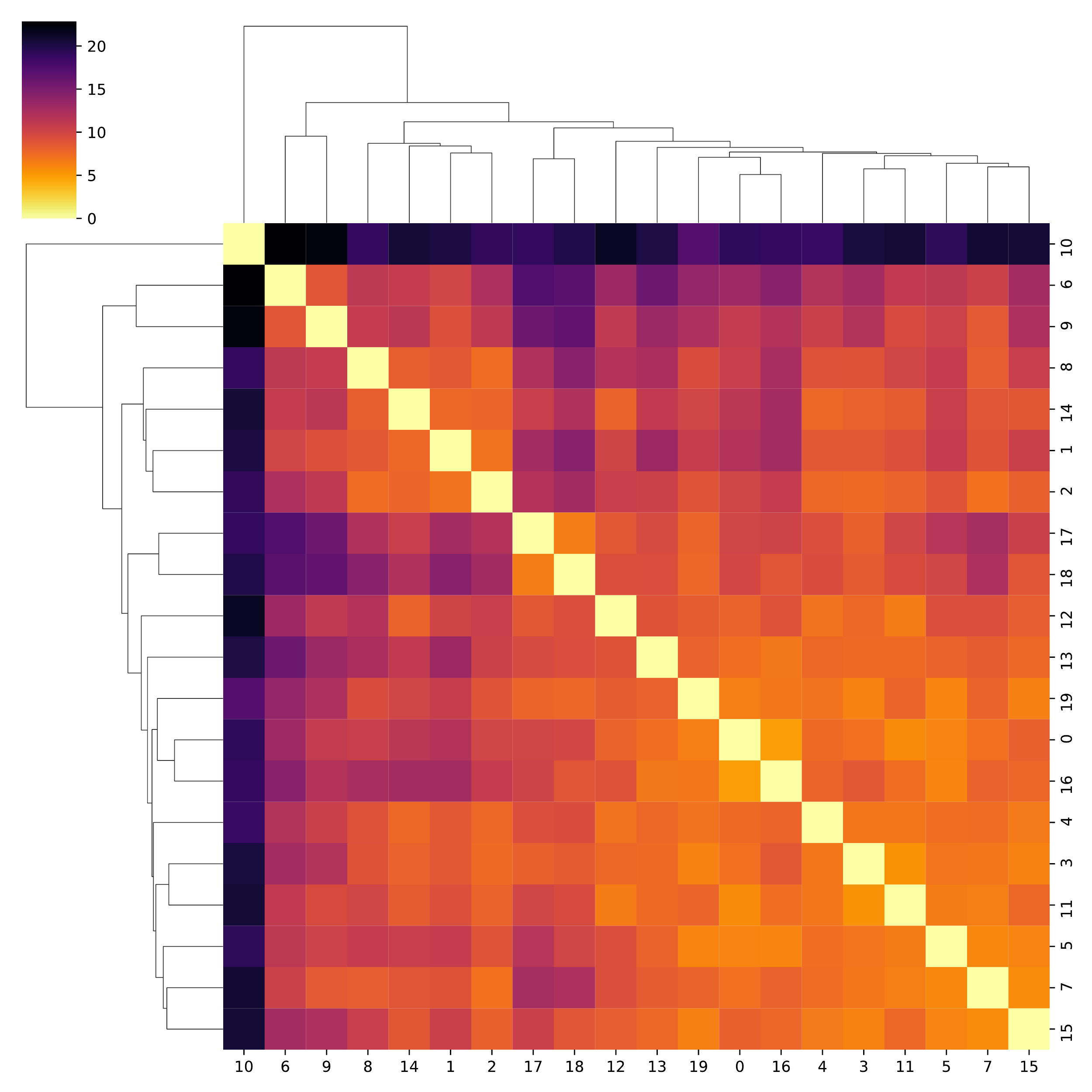}
        
        \includegraphics[width=\textwidth]{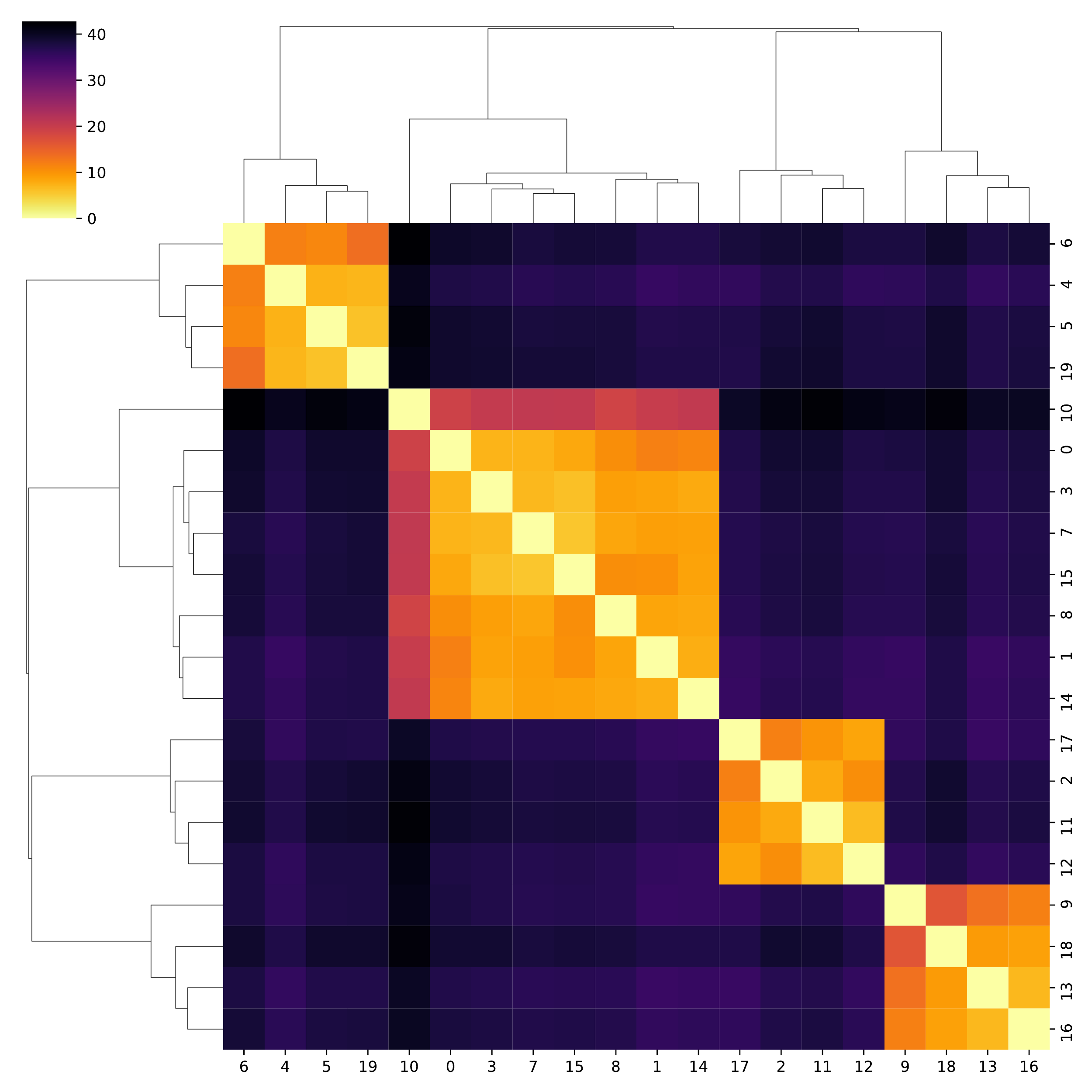}
        
        \includegraphics[width=\textwidth]{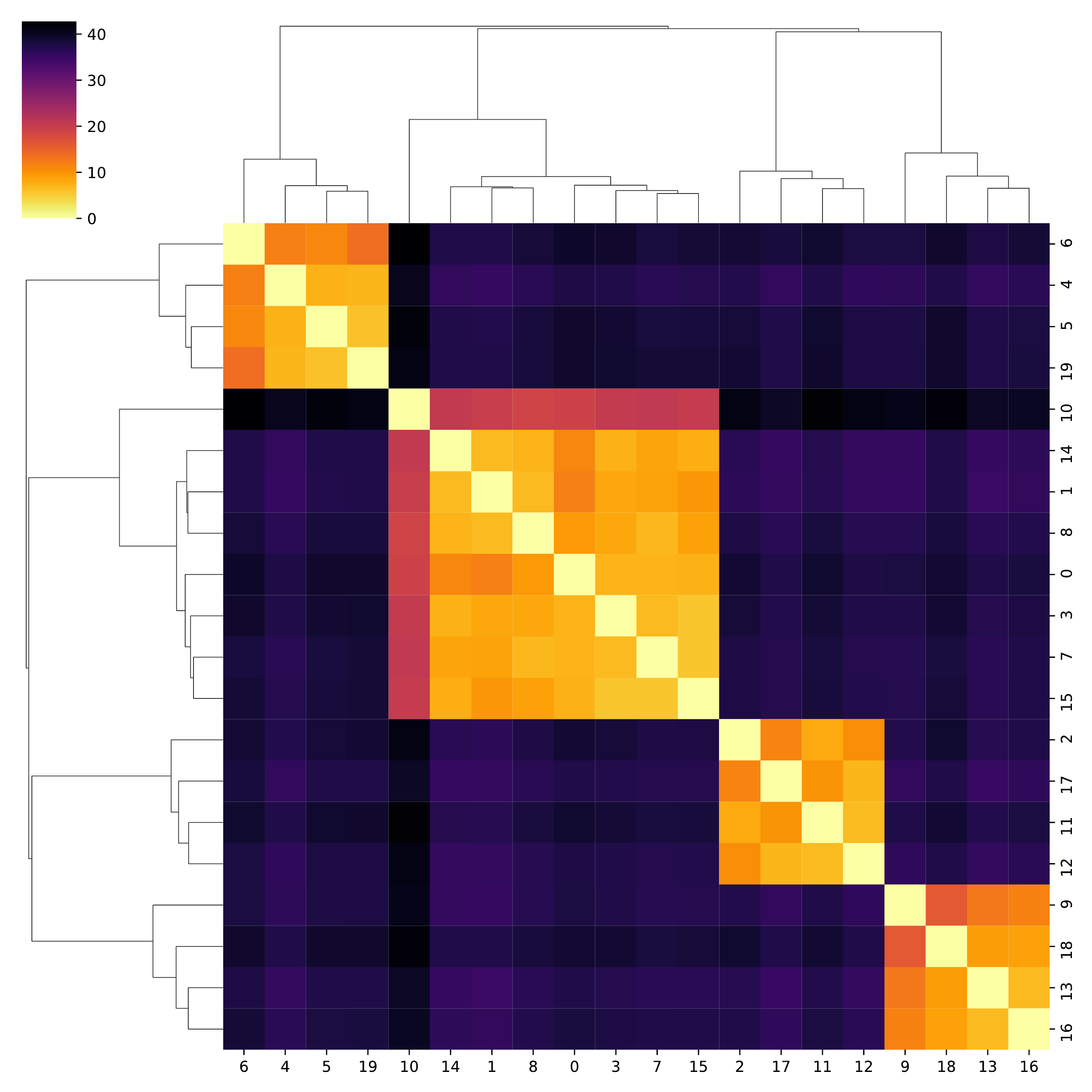}
    \end{subfigure}
    \caption{The outlier ensemble (left) with a comparison of branch mapping distance (top right), constrained edit distance (center right) and Wasserstein distance (bottom right). Heatsmaps for distance matrices of the three distance measures are shown where the rows and columns are ordered by the clustermap function of the seaborn library to clearly identify clusters. The plots also show the dendograms of the underlying clusterings.}
    \label{fig:results_outlier_withDendogram}
\end{figure*}

\begin{figure*}
    \centering
    \begin{subfigure}[c]{0.75\linewidth}
    \includegraphics[width=\linewidth]{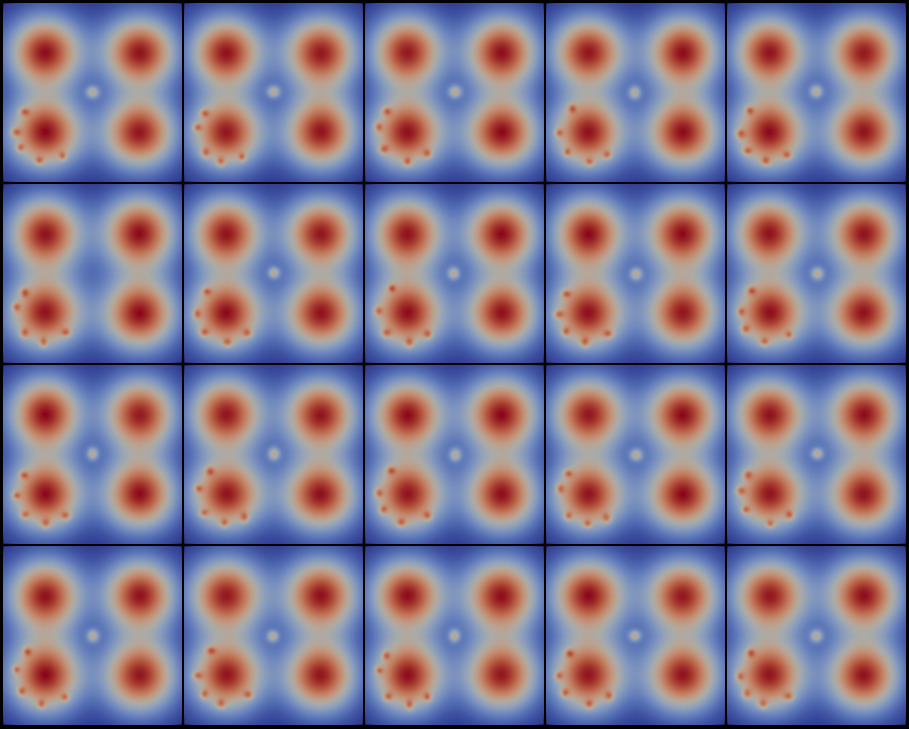}
    \end{subfigure}
    \begin{subfigure}[c]{0.24\linewidth}
        \centering
        \includegraphics[width=\textwidth]{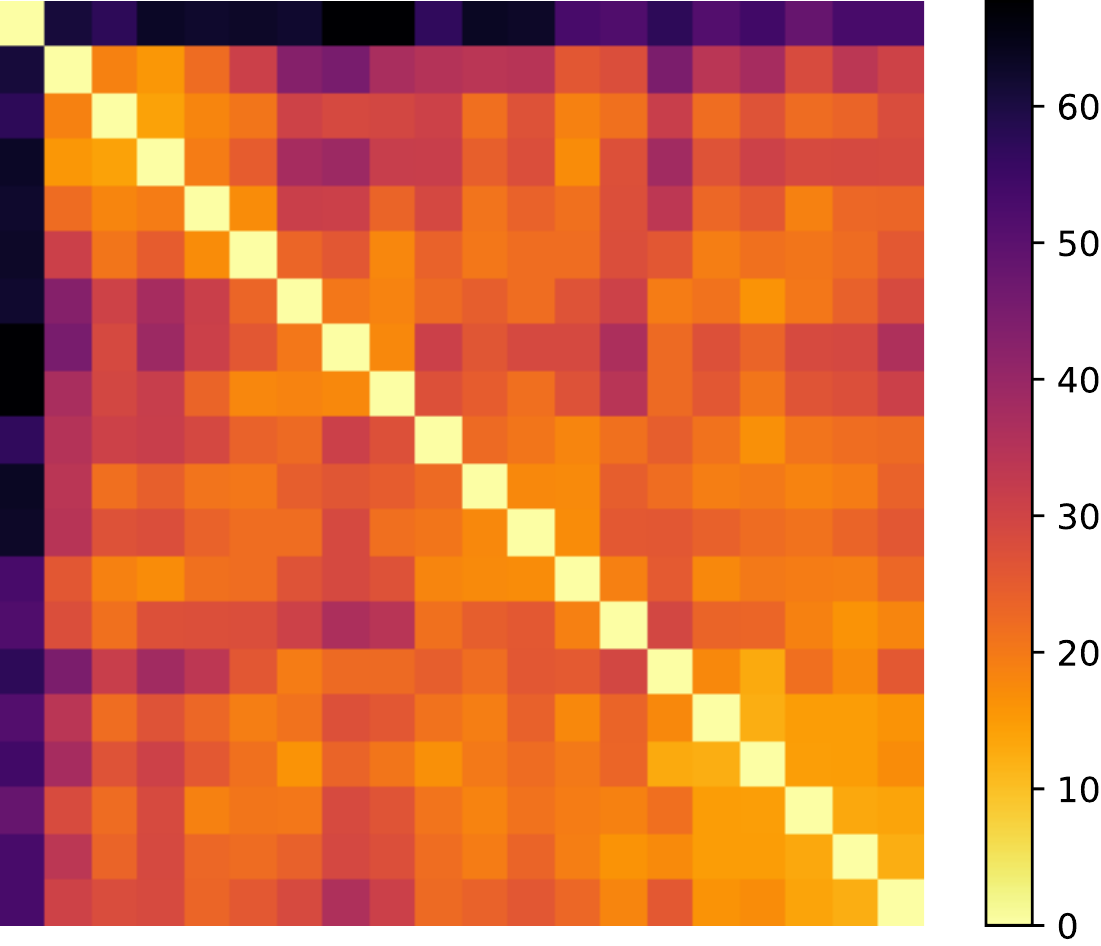}
        
        \includegraphics[width=\textwidth]{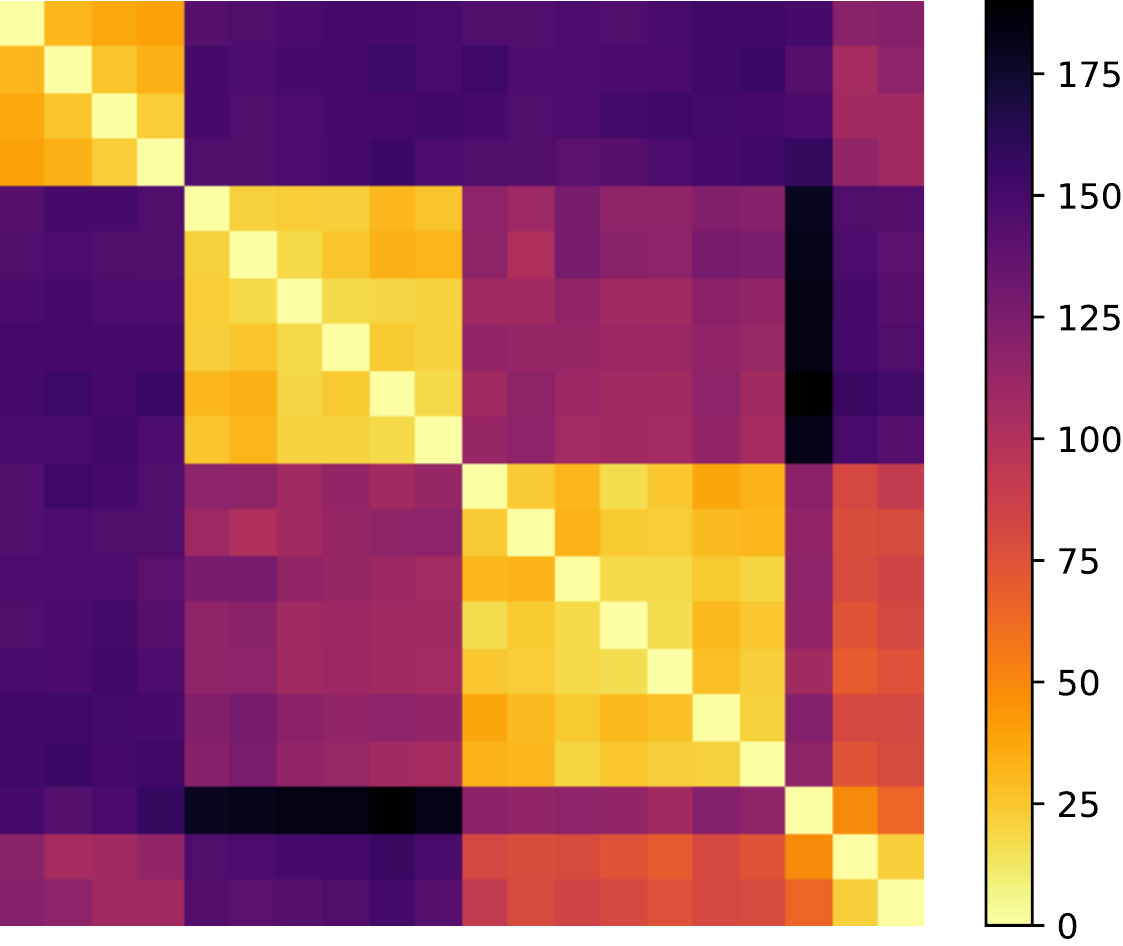}
    \end{subfigure}
    \caption{Another outlier ensemble (left) with a comparison of branch mapping distance (top right) and constrained edit distance (bottom right). Heatmaps for distance matrices of the two distance measures are shown where the rows and columns are ordered by the clustermap function of the Seaborn library to clearly identify clusters.}
    \label{fig:results_outlier2}
\end{figure*}

\begin{figure*}
    \centering
    \includegraphics[width=0.7\linewidth]{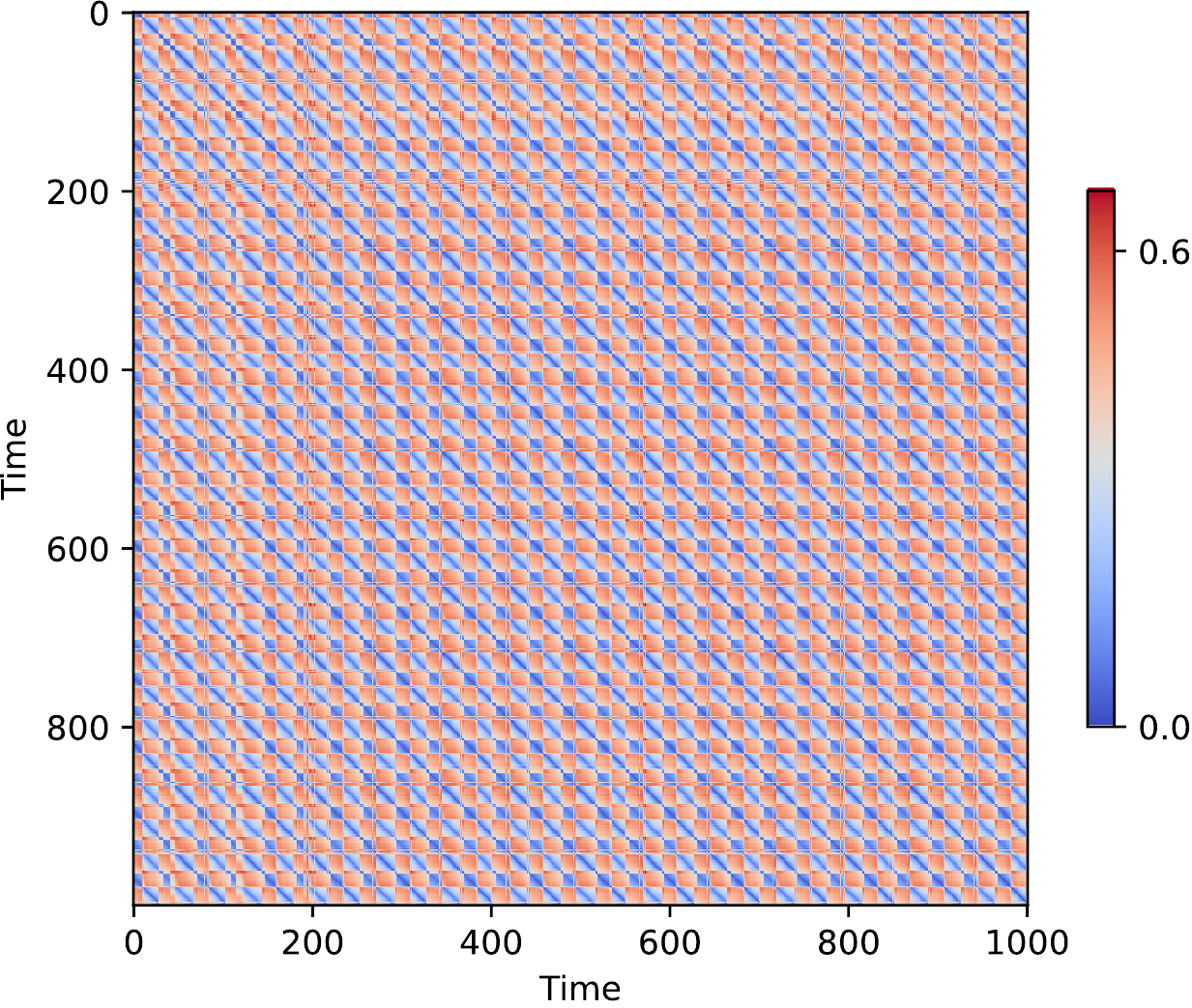}
    \caption{The complete distance matrix for all 1001 time steps of the vortex street dataset.}
    \label{fig:heatmap_weinkauf_full}
\end{figure*}

\begin{figure*}
    \centering
    \begin{subfigure}[c]{0.4\linewidth}
    \includegraphics[width=\linewidth]{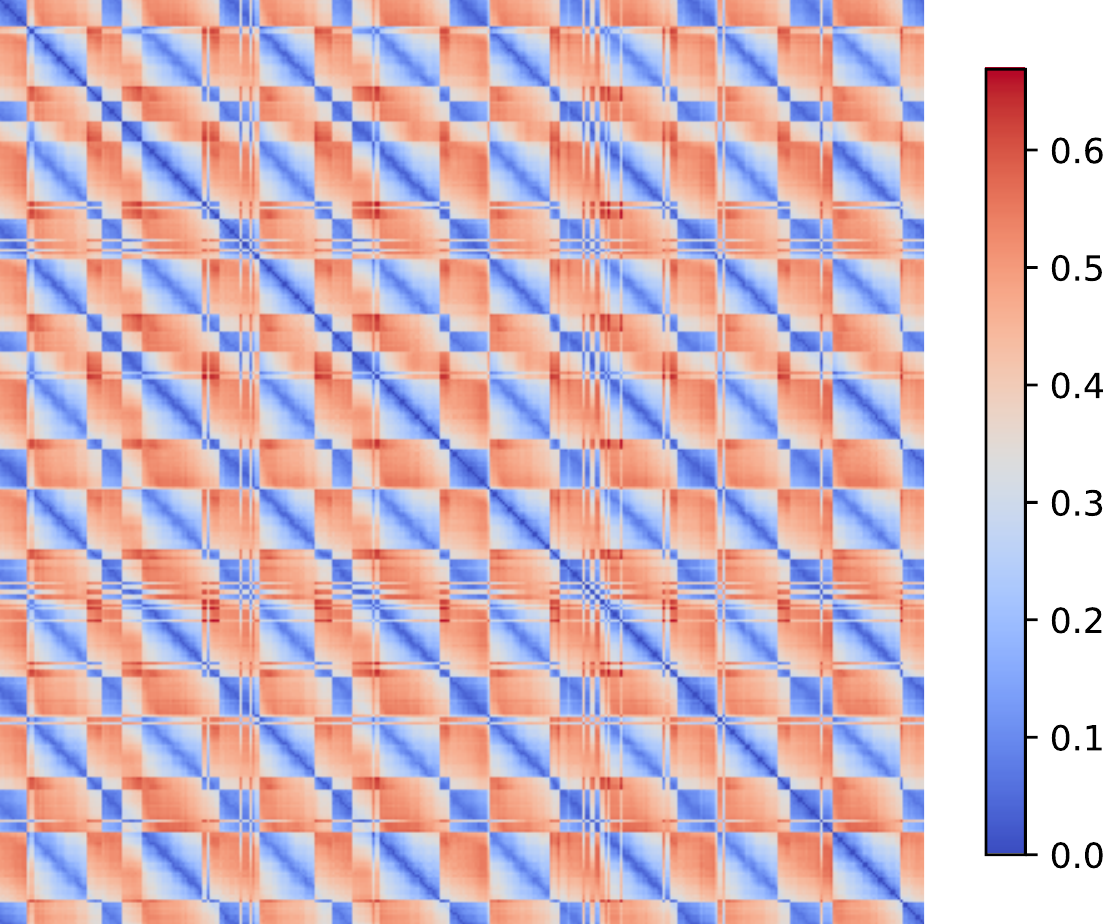}
    \end{subfigure}
    \begin{subfigure}[c]{0.4\linewidth}
        \centering
        \includegraphics[width=\linewidth]{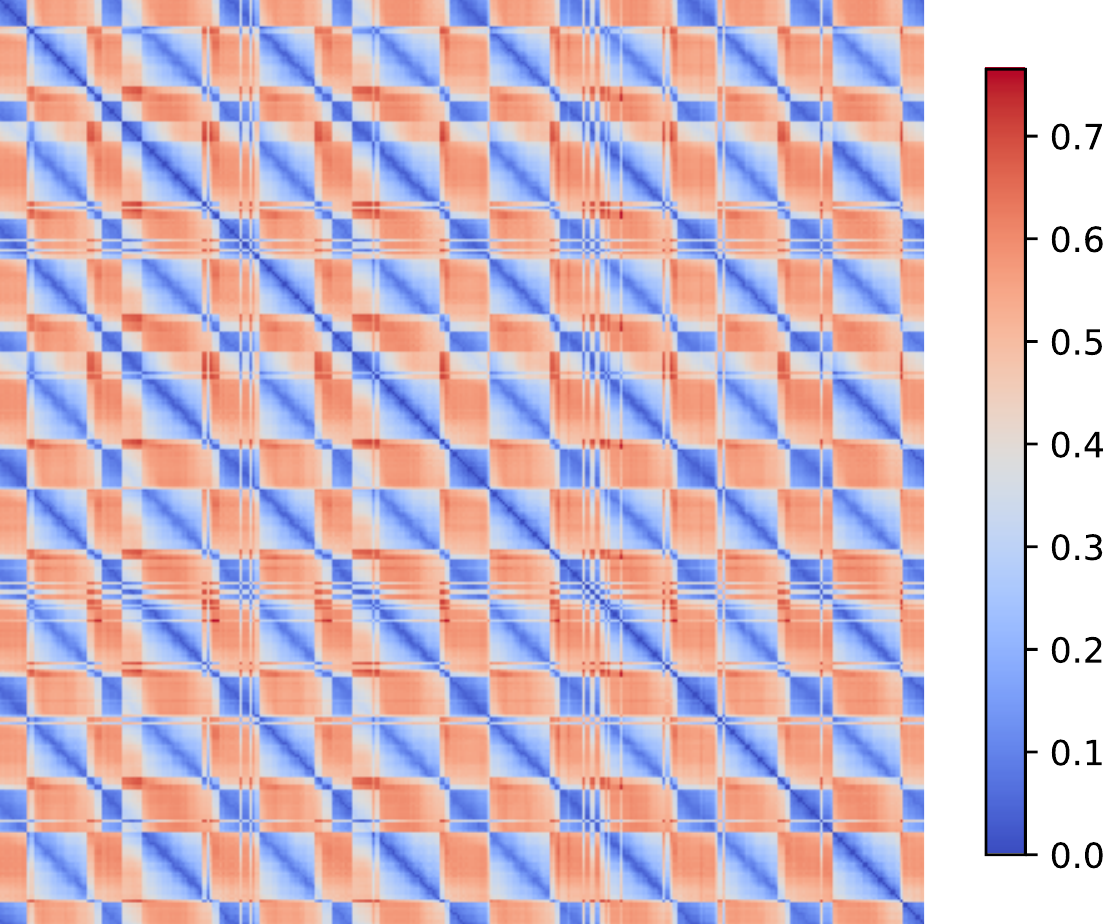}
    \end{subfigure}
    \caption{Comparison of the (partial) distance matrices of the vortex street dataset using the branch mapping distance (left) and the constrained edit distance (right). The right image was computed using our own implementation of the constrained edit distance and resembles Figure~13 in~\cite{DBLP:journals/tvcg/SridharamurthyM20} (small differences can be due to different merge tree computation or different ways of attaching branch properties to vertices). Both, or better to say all three, matrices show the same periodic pattern in the data.}
    \label{fig:comparison_vortexStreet}
\end{figure*}

\begin{figure*}
    \centering
    \begin{subfigure}[c]{0.48\linewidth}
    \includegraphics[width=\linewidth]{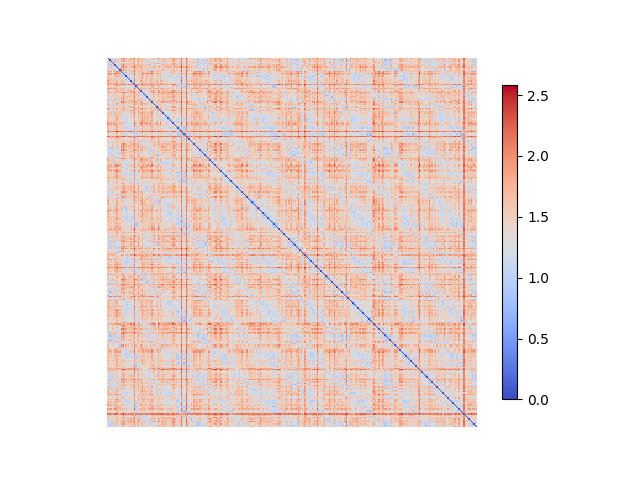}
    \end{subfigure}
    \begin{subfigure}[c]{0.48\linewidth}
        \centering
        \includegraphics[width=\textwidth]{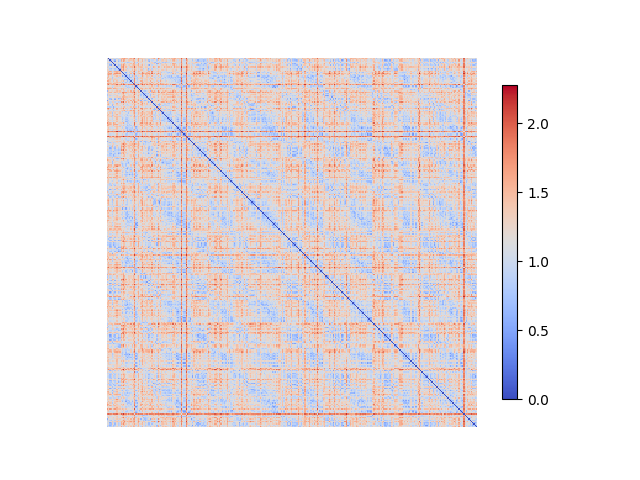}
    \end{subfigure}
    \begin{subfigure}[c]{0.48\linewidth}
    \includegraphics[width=\linewidth]{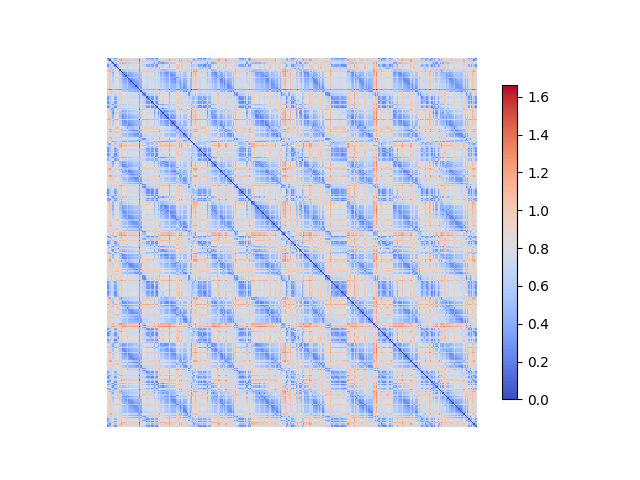}
    \end{subfigure}
    \begin{subfigure}[c]{0.48\linewidth}
        \centering
        \includegraphics[width=\textwidth]{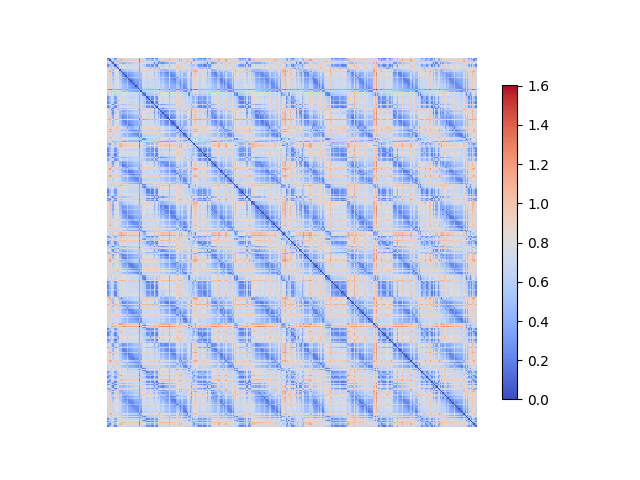}
    \end{subfigure}
    \caption{Distance matrices (using Wasserstein distance as base metric) for noisy versions of the vortex street dataset. Four different variants of noise are shown using noise of different intensity/amplitude (top $1\%$, bottom $0.4\%$ of scalar range) and different simplification (left $\approx200$ vertices, right $\approx150$ vertices, original data $\approx65$ vertices).}
    \label{fig:dm_vortexStreet_noise_wassersteinbase}
\end{figure*}

\begin{figure*}
    \centering
    \begin{subfigure}[c]{0.48\linewidth}
    \includegraphics[width=\linewidth]{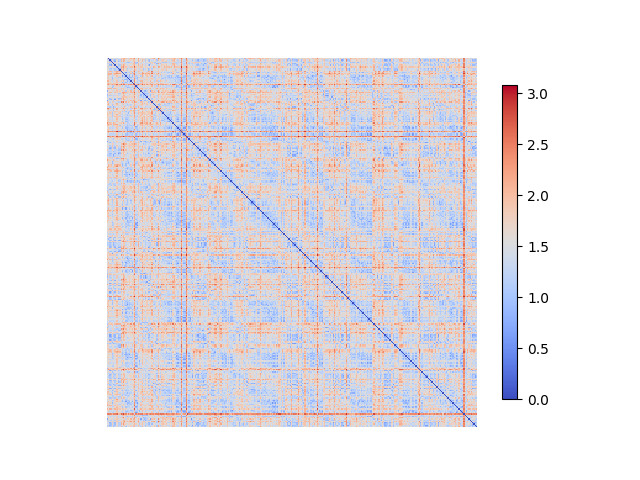}
    \end{subfigure}
    \begin{subfigure}[c]{0.48\linewidth}
        \centering
        \includegraphics[width=\textwidth]{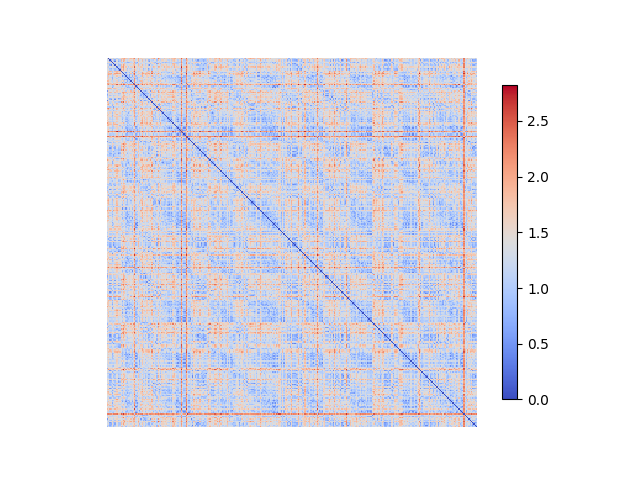}
    \end{subfigure}
    \begin{subfigure}[c]{0.48\linewidth}
    \includegraphics[width=\linewidth]{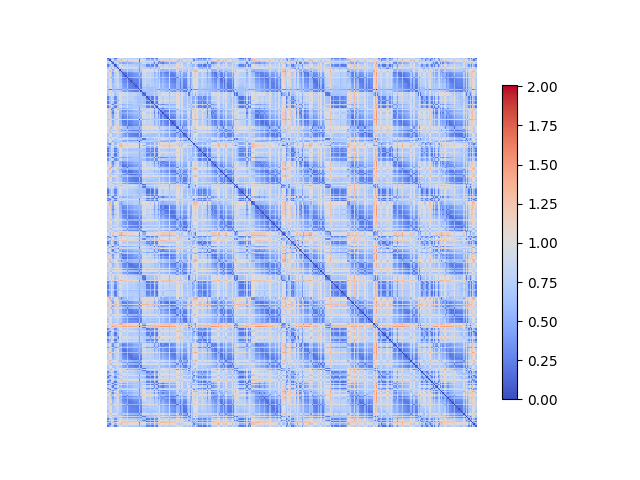}
    \end{subfigure}
    \begin{subfigure}[c]{0.48\linewidth}
        \centering
        \includegraphics[width=\textwidth]{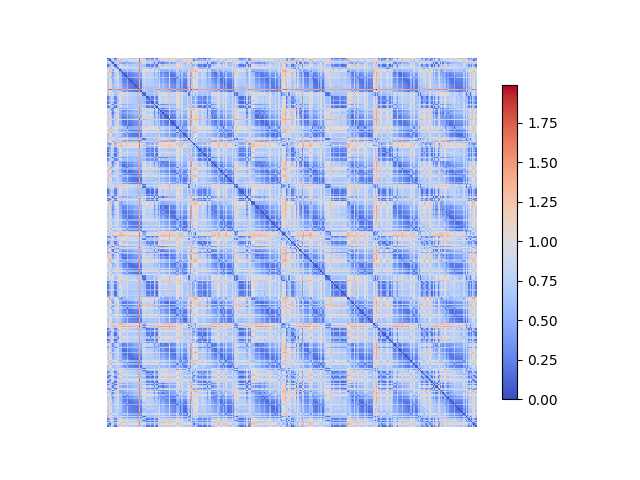}
    \end{subfigure}
    \caption{Distance matrices (using persistence difference as base metric) for noisy versions of the vortex street dataset. Four different variants of noise are shown using noise of different intensity/amplitude (top $1\%$, bottom $0.4\%$ of scalar range) and different simplification (left $\approx200$ vertices, right $\approx150$ vertices, original data $\approx65$ vertices).}
    \label{fig:dm_vortexStreet_noise}
\end{figure*}

\begin{figure*}
    \centering
    \begin{subfigure}[c]{0.48\linewidth}
    \includegraphics[width=\linewidth]{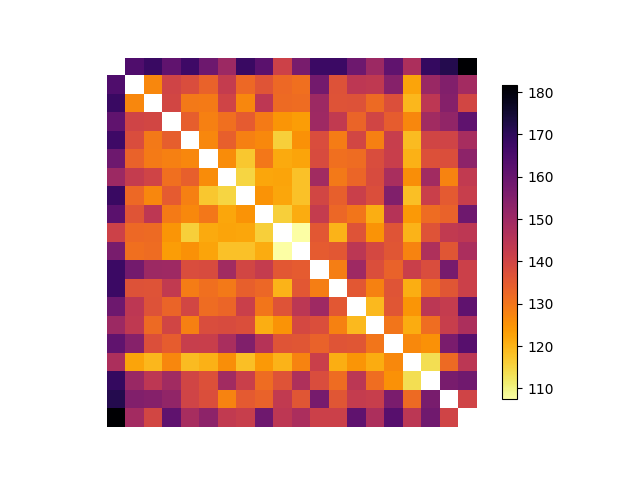}
    \end{subfigure}
    \begin{subfigure}[c]{0.48\linewidth}
        \centering
        \includegraphics[width=\textwidth]{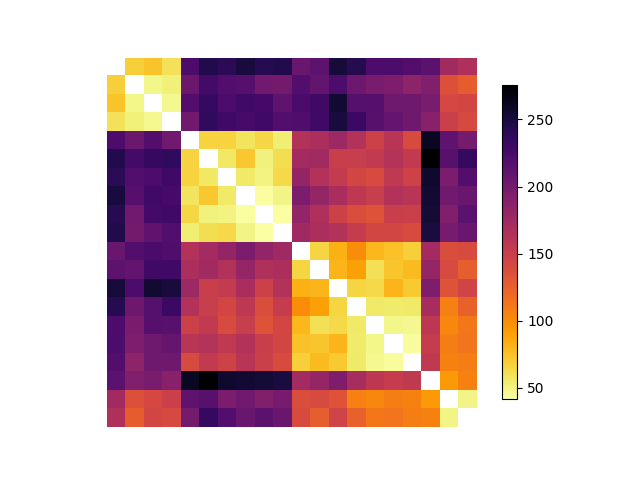}
    \end{subfigure}
    \caption{Distance matrices for noisy versions of the outlier ensemble from Figure~\ref{fig:results_outlier2}. The left matrix was computed using the branch mapping distance, the right one using the constrained edit distance. The merge trees were simplified to $\approx200$ vertices.}
    \label{fig:dm_outlier2_noise}
\end{figure*}

\begin{figure*}
    \centering
    \begin{subfigure}[c]{0.48\linewidth}
    \includegraphics[width=\linewidth]{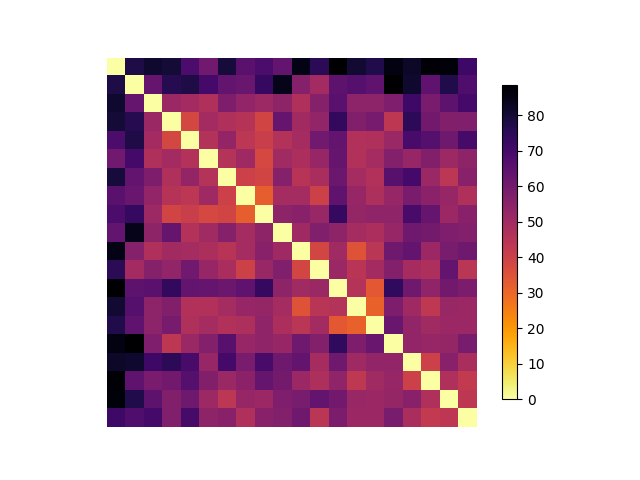}
    \end{subfigure}
    \begin{subfigure}[c]{0.48\linewidth}
        \centering
        \includegraphics[width=\textwidth]{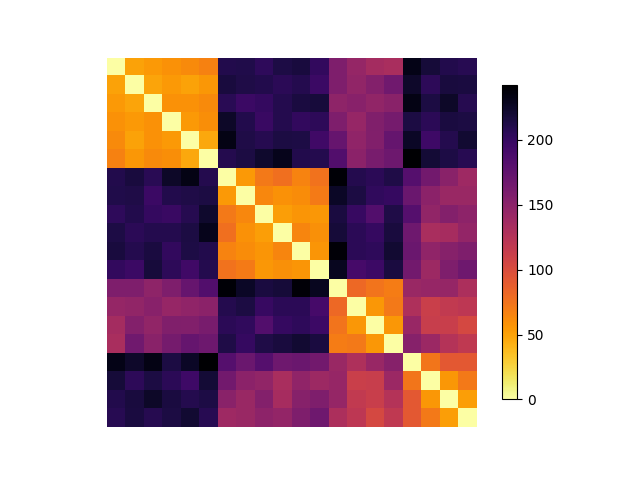}
    \end{subfigure}
    \caption{Distance matrices for a new outlier ensemble with noise. The left matrix was computed using the branch mapping distance, the right one using the constrained edit distance. The merge trees were simplified to $\approx200$ vertices.}
    \label{fig:dm_outlierwithnoise}
\end{figure*}

\begin{figure*}
    \centering
    \begin{subfigure}[c]{0.8\linewidth}
    \includegraphics[width=\linewidth]{tracking_sciviscontest2008-2.jpg}
    \end{subfigure}
    \vspace{3pt}
    \begin{subfigure}[c]{0.8\linewidth}
        \centering
        \includegraphics[width=\textwidth]{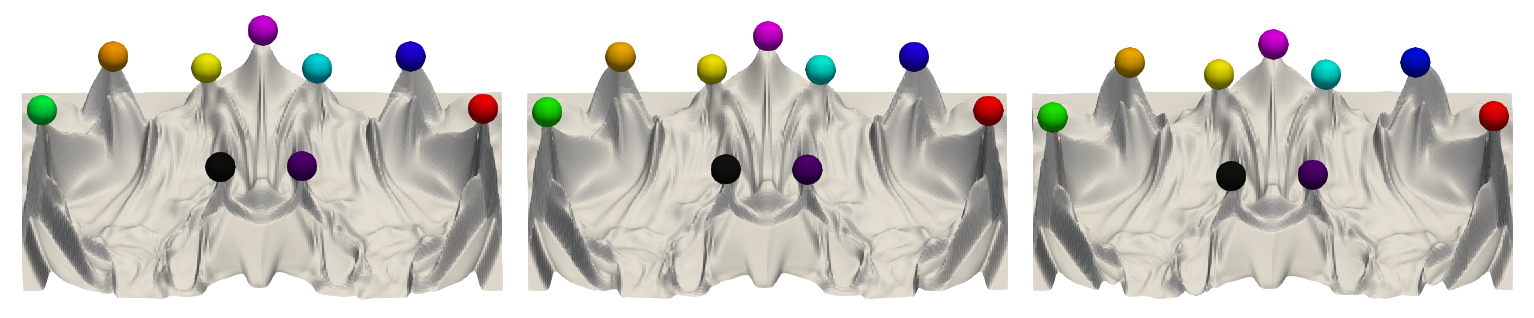}
    \end{subfigure}
    \caption{Comparison of the tracking on the ion density dataset using the branch mapping distance (top) and the Wasserstein distance (bottom). For the bottom image, we replicated the mapping from Figure~11 in~\cite{DBLP:journals/corr/abs-2107-07789} using the TTK implementation of the Wasserstein distance. It is easy to see that both methods yield the same semantically meaningful matching.}
    \label{fig:comparison_iondensity}
\end{figure*}

\bibliographystyle{eg-alpha-doi}
\bibliography{supplement}